\documentclass[letterpaper,11pt]{article}

\input{preamble}

\newcommand{\mmathmacro}[1]{\ensuremath{#1}\xspace}
\newcommand{\mmathfunction}[4][\empty]{%
  \ensuremath{%
    \ifthenelse{\equal{#1}{\empty}}{%
      \mathchoice{%
        \ifthenelse{\equal{#2}{R}}{
          #3\left(#4\right)%
        }{%
          #3\left[#4\right]%
        }%
      }{%
        \ifthenelse{\equal{#2}{R}}{
          #3(#4)%
        }{%
          #3[#4]%
        }%
      }{%
        \ifthenelse{\equal{#2}{R}}{
          #3(#4)%
        }{%
          #3[#4]%
        }%
      }{%
        \ifthenelse{\equal{#2}{R}}{
          #3(#4)%
        }{%
          #3[#4]%
        }%
      } 
      \xspace%
    }{%
      \mathchoice{%
        \ifthenelse{\equal{#2}{R}}{
          #3_{#1}\left(#4\right)%
        }{%
          #3_{#1}\left[#4\right]%
        }%
      }{%
        \ifthenelse{\equal{#2}{R}}{
          #3_{#1}(#4)%
        }{%
          #3_{#1}[#4]%
        }%
      }{%
        \ifthenelse{\equal{#2}{R}}{
          #3_{#1}(#4)%
        }{%
          #3_{#1}[#4]%
        }%
      }{%
        \ifthenelse{\equal{#2}{R}}{
          #3_{#1}(#4)%
        }{%
          #3_{#1}[#4]%
        }%
      }
      \xspace%
    }%
  }%
}

\newcommand{\citefull}[1]{\citeauthor*{#1}~\cite{#1}}
\newcommand{\Lineref}[1]{\hyperref[#1]{line~\ref*{#1}}}

\newcommand{\da}{\coloneqq}
\newcommand{\ad}{\eqqcolon}

\newcommand{\xcurr}{x_i}
\newcommand{\Pcurr}{P_{i - 1}}
\newcommand{\pcurr}{p_{i - 1}}
\newcommand{\Rcurr}{R_{i - 1}}
\newcommand{\rcurr}{r_{i - 1}}

\newcommand{\Plater}{P_{>i}}
\newcommand{\plater}{p_{>i}}
\newcommand{\Popt}{P_\text{opt}}
\newcommand{\popt}{p_\text{opt}}
\newcommand{\Ropt}{R_\text{opt}}
\newcommand{\ropt}{r_\text{opt}}
\newcommand{\Rsub}{R'}
\newcommand{\rsub}{r'}
\newcommand{\Rsubo}{R'}
\newcommand{\rsubo}{r'}
\newcommand{\Rsubt}{R''}
\newcommand{\rsubt}{r''}
\newcommand{\gapconst}{\gamma}

\newcommand{\pwin}[1]{p_\text{safe}(#1)}

\newcommand{\rwin}{r_\text{safe}}
\newcommand{\lowerbound}{\ell}

\newcommand{\fderiva}{\hat{f}_\alpha}
\newcommand{\fderivb}{\hat{f}_\beta}

\newcommand{\ie}{i.e.\xspace}
\newcommand{\fo}{\hat{f}} 

\newcommand{\alg}{\mmathmacro{\textsc{Alg}}}
\newcommand{\alglower}{\textsc{Alg}\xspace}
\newcommand{\symb}{\textsc{Symb}\xspace}
\newcommand{\algsymb}{\hyperref[alg:symbiosis]{\symb}\xspace}
\newcommand{\gainsymb}{\ensuremath{\mathrm{gain}}\xspace}
\newcommand{\gain}[2][\empty]{\mmathfunction[#1]{R}{\gainsymb}{#2}}
\newcommand{\Opt}{\mmathmacro{\textsc{Opt}}}
\newcommand{\opt}{\ensuremath{\mathrm{opt}}\xspace}

\newcounter{casenum}
\newenvironment{casedist}{%
  \setcounter{casenum}{0}%
  \newcommand{\case}{\stepcounter{casenum}\item[\textnormal{\textbf{Case~\arabic{casenum}.}}]}%
  \begin{description}%
}{%
  \end{description}%
}

\newcommand{\Ures}{\ensuremath{U_{\mathrm{res}}}}
\newcommand{\Urem}{\ensuremath{U_{\mathrm{rem}}}}
\newcommand{\Usymb}{\ensuremath{U_{\mathrm{symb}}}}

\newcommand{\winterval}[1]{W_{#1}}

\let\phi\varphi
\let\epsilon\varepsilon
\newcommand{\eps}{\epsilon}

\newcommand{\diff}[1]{{\frac{\textrm{d}}{\textrm{d}#1}}}

\newcommand{\xthreefunc}[1]{\chi_{#1}}
\newcommand{\casename}{Case\xspace}
\newcommand{\valname}{u}

\newcommand{\fpolynomial}[2]{\ensuremath Q_{#1,#2}}
\newcommand{\rejectyifunca}{\ensuremath\psi_{\alpha}}
\newcommand{\packyiifunc}[2][\alpha]{\ensuremath\mu(#1,#2)}
\newcommand{\rejectzifuncalpha}[1][\alpha]{\ensuremath\tau_{#1}}

\newcommand{\uniquerootyi}{\ensuremath s_{\alpha}}
\newcommand{\uniquebetayi}{\ensuremath\beta_{\alpha}}

\newcommand{\reservationCaseXii}{1\xspace}
\newcommand{\reservationCasePackXii}{2\xspace}
\newcommand{\reservationCaseRejectXii}{3\xspace}
\newcommand{\reservationCaseReserveXii}{4\xspace}
\newcommand{\reservationCaseRejectFinali}{5\xspace}
\newcommand{\reservationCasePackFinali}{6\xspace}
\newcommand{\reservationCaseRejectYi}{7\xspace}
\newcommand{\reservationCaseRejectFinalii}{8\xspace}
\newcommand{\reservationCasePackFinalii}{9\xspace}
\newcommand{\reservationCaseZi}{10\xspace}
\newcommand{\reservationCaseRejectFinaliii}{11\xspace}
\newcommand{\reservationCaseRejectZi}{12\xspace}
\newcommand{\reservationCaseReserveZi}{13\xspace}
\newcommand{\reservationCasePackFinaliii}{14\xspace}

\newcommand{\symbiosisCaseXii}{1\xspace}
\newcommand{\symbiosisCaseRejectXii}{2\xspace}
\newcommand{\symbiosisCaseReserveXii}{3\xspace}
\newcommand{\symbiosisCaseRejectXiii}{4\xspace}
\newcommand{\symbiosisCasePackXiii}{5\xspace}
\newcommand{\symbiosisCaseRejectYi}{6\xspace}
\newcommand{\symbiosisCaseReserveYi}{7\xspace}
\newcommand{\symbiosisCaseYii}{8\xspace}
\newcommand{\symbiosisCaseRejectYii}{9\xspace}
\newcommand{\symbiosisCasePackYii}{10\xspace}
\newcommand{\symbiosisCaseZi}{11\xspace}
\newcommand{\symbiosisCaseRejectZi}{12\xspace}
\newcommand{\symbiosisCaseReserveZi}{13\xspace}
\newcommand{\symbiosisCaseRejectZii}{14\xspace}
\newcommand{\symbiosisCasePackZii}{15\xspace}

\usepackage{todonotes}

\newcommand{\knapres}{\mmathmacro{\textsc{KnapRes}}}
\newcommand{\knaprem}{\mmathmacro{\textsc{KnapRem}}}
\newcommand{\knapresrem}{\mmathmacro{\textsc{KnapResRem}}}

\newcommand{\calg}{c_\alg}
\newcommand{\cknapres}{c_\knapres}
\newcommand{\cknaprem}{c_\knaprem}
\newcommand{\cknapresrem}{c_\knapresrem}

\newcommand{\cplotscale}{1}
\newcommand{\cplotxmax}{1.05}
\newcommand{\cplotymax}{5}
\pgfplotsset{cplotsettings/.append style = {
    xmin=0, xmax=\cplotxmax,
    ymin=0, ymax=\cplotymax,
    height=0.75*\textwidth,
    width=\textwidth,
    axis x line = bottom,
    axis y line = left,
    xtick align = outside,
    ytick align = outside,
    xlabel style={font=\small},
    ylabel style={font=\small},
    xticklabel style={
        text height=1.5ex,
        text depth=0.8ex
    },
    tick label style={font=\small},
    enlargelimits=false,
    clip=false,
    smooth
}}
\pgfplotsset{cplotsettings2/.append style = {
    xmin=0, xmax=\cplotxmax,
    ymin=0, ymax=\cplotymax,
    height=0.48*0.75*\textwidth,
    width=0.48*\textwidth,
    axis x line = bottom,
    axis y line = left,
    xtick align = outside,
    ytick align = outside,
    xlabel style={font=\small},
    ylabel style={font=\small},
    xticklabel style={
        text height=1.5ex,
        text depth=0.8ex
    },
    tick label style={font=\small},
    enlargelimits=false,
    clip=false,
    smooth
}}
\tikzset{clinesettings/.append style = {
    black!80,
    very thick
}}
\tikzset{cnodesettings/.append style = {
    black!80,
    circle,
    fill,
    inner sep=1pt
}}

\newcommand{\splotxmax}{1.05}
\newcommand{\splotymax}{2}
\pgfplotsset{splotsettings/.append style = {
    xmin=0, xmax=\splotxmax,
    ymin=0, ymax=\splotymax,
    height=0.75*\textwidth,
    width=0.9*\textwidth,
    axis x line = bottom,
    axis y line = left,
    xtick align = outside,
    ytick align = outside,
    xlabel style={font=\small},
    ylabel style={font=\small},
    xticklabel style={
        text height=1.5ex,
        text depth=0.8ex
    },
    tick label style={font=\small},
    enlargelimits=false,
    clip=false,
    smooth
}}
\tikzset{sequilinesettings/.append style = {
    black!80,
    very thick,
    dashed
}}
\tikzset{sequiregionsettings/.append style = {
    pattern={Lines[angle=45,distance=4pt,line width=1.2pt]},
    pattern color=black,
    draw=none
}}
\newcommand{\rescolor}{blue}
\tikzset{sresinternalsettings/.append style = {
    \rescolor,
    thick,
    dotted
}}
\tikzset{sresbordersettings/.append style = {
    \rescolor,
    very thick
}}
\tikzset{sressoftbordersettings/.append style = {
    \rescolor,
    very thick,
    dashed
}}
\tikzset{sresregionsettings/.append style = {
    \rescolor!25
}}
\newcommand{\remcolor}{orange}
\tikzset{sreminternalsettings/.append style = {
    \remcolor,
    thick,
    dotted
}}
\tikzset{srembordersettings/.append style = {
    \remcolor,
    very thick
}}
\tikzset{sremsoftbordersettings/.append style = {
    \remcolor,
    very thick,
    dashed
}}
\tikzset{sremregionsettings/.append style = {
    \remcolor!25
}}
\tikzset{sresremregionsettings/.append style = {
    pattern={Lines[angle=45,distance=4pt,line width=1.2pt]},
    pattern color=\remcolor,
    draw=none
}}
\newcommand{\symcolor}{magenta}
\tikzset{ssyminternalsettings/.append style = {
    \symcolor,
    thick,
    dotted
}}
\tikzset{ssymbordersettings/.append style = {
    \symcolor,
    very thick
}}
\tikzset{ssymsoftbordersettings/.append style = {
    \symcolor,
    very thick,
    dashed
}}
\tikzset{ssymregionsettings/.append style = {
    \symcolor!25
}}

\tikzset{itemsettings/.append style = {
    font=\large
}}
\tikzset{leafsettings/.append style = {
    font=\large,
    color=\symcolor
}}

\tikzset{numbersettings/.append style = {
    circle,
    draw,
    inner sep=0,
    minimum size=15pt,
    font=\small\bfseries
}}

\tikzset{
    ifpack/.style={below = 1cm of #1}
}

\def\rejdist{3cm}
\def\resdist{3cm}

\tikzset{
    ifreserve/.style={anchor=north,
        at=(#1.center),
        yshift=-0.6cm,
        xshift=\resdist,
        }
}
\tikzset{
    ifreservedepth2/.style={anchor=west,
            below right = 4cm and 0.1cm of #1.east}
}
\tikzset{
    ifreservedepth3/.style={anchor=west,
            below right = 6cm and 0.1cm of #1.east}
}
\tikzset{
    ifreject/.style={anchor=north,
        at=(#1.center),
        yshift=-0.6cm,
        xshift=-\rejdist,
        }
}
\tikzset{
    slightlyaboveleft/.style={above left =0.02cm and 0.3cm of #1.north}
}
\tikzset{
    slightlyaboveright/.style={above right =0.02cm and 0.3cm of #1.north}
}

\colorlet{PackColor}{LimeGreen}
\colorlet{RejectColor}{BrickRed}
\colorlet{ReserveColor}{Cerulean}

\tikzset{packarr/.append style = {
    ->,
    very thick,
    PackColor
}}

\tikzset{packimparr/.append style = {
    ->,
    dotted,
    very thick,
    green,
    rounded corners
}}
\tikzset{rejarr/.append style = {
    ->,
    very thick,
    RejectColor,
    rounded corners,
    dashed
}}
\tikzset{resarr/.append style = {
    ->,
    very thick,
    ReserveColor,
    rounded corners,
    dotted
}}
\tikzset{resarr over/.style = {
    resarr,
    preaction={draw=white, line width=6pt}
}}
\tikzset{remarr/.append style = {
    ->,
    very thick,
    orange,
    rounded corners
}}

\title{Pack, Remove, Reserve\\[0.1cm]\Large Online Knapsack with Second Thoughts}

\author{Hans-Joachim Böckenhauer}
\author{Dennis Komm}
\author{Emanuel Skodinis}
\author{Moritz Stocker}
\author{Philip~Whittington}

\affil{Department of Computer Science, ETH Zurich\\%
  \small\texttt{\{hjb, dennis.komm, emanuel.skodinis, moritz.stocker, philip.whittington\}@inf.ethz.ch}}

\date{\today}

\begin{document}

\maketitle

\begin{abstract}
    \noindent
    We study the online proportional knapsack problem with two paid forms of recourse. Items arrive one by one and must be handled immediately, without knowledge of the future: an algorithm may pack an item $x$, reject it, or \emph{reserve} it for possible later use at proportional cost $\alpha x$; additionally, it may at any time \emph{remove} previously packed items, at proportional cost $\beta y$ for each removed item $y$. Reservation and removal have each been analyzed in isolation, but their combination raises a natural question: is the better of the two mechanisms always optimal on its own, or is there a region in the parameter space spanned by $\alpha$ and $\beta$ in which they genuinely enter into a symbiosis? So far, this question has only been answered for the special case of free removal ($\beta = 0$), leaving the vast majority of the parameter space unexplored.

    We close this gap, determining matching upper and lower bounds on the competitive ratio for every pair of cost parameters $(\alpha, \beta)$ and revealing three qualitatively different regimes. In some regions, reservation alone already achieves the optimal ratio; in others, removal alone does. However, most interestingly, in the heart of the parameter space lies a \emph{symbiosis region} in which combining both mechanisms is strictly better than either one on its own. The optimal algorithm in the symbiosis region is a natural blend of the two known single-mechanism strategies: postponing commitment by reserving until a threshold is reached, then packing greedily and revising via removal.
\end{abstract}

\newpage

\section{Introduction}\label{sec:introduction}

In numerous real-world applications, algorithms must make decisions without complete information about future events. They operate, to some extent, ``in the dark,\!'' gradually adapting as new information becomes available. One central example is the \emph{online proportional knapsack problem}, where items of sizes in $(0, 1]$ arrive one by one and an algorithm aims to maximize the total size of the items it packs into a knapsack of capacity $1$; each item's value equals its size, hence the term \emph{proportional}. As future items are unknown and decisions are irrevocable, an online algorithm is at a disadvantage against an optimal offline algorithm that sees the whole sequence in advance. \emph{Competitive analysis} quantifies this disadvantage through the \emph{competitive ratio}, the worst-case ratio between the offline optimum and the online gain.

In its plain form, the problem is hopeless: no deterministic online algorithm achieves a bounded competitive ratio. A single tiny item forces a dilemma---pack it and a later item of size $1$ no longer fits, or reject it and the sequence ends---so the ratio can be made arbitrarily large.

Among the most successful ways to restore a bounded competitive ratio is to allow an algorithm to postpone a decision by reserving an item, or to revise a decision by removing a packed item. Each recourse mechanism has its own proportional cost: reserving an item $x$ incurs a cost of $\alpha x$, removing an item $y$ incurs a cost of $\beta y$. The algorithm's gain is its final packed size minus all costs it encountered while processing the input. Both relaxations restore bounded competitive ratios, and both have been analyzed thoroughly---but only in isolation.

When both mechanisms are available at once, the decision space becomes considerably richer and the two costs interact. Already the special case of \emph{free} removal ($\beta = 0$) combined with paid reservation does not reduce to either single mechanism; the proof by \citefull{BGLMR23} requires a rather involved analysis. Our paper settles the problem for every cost pair $(\alpha, \beta)$ by proposing a remarkably simple algorithm that, for some regions, uses reservation or removal only, but relies on a symbiosis of both mechanisms in a significant region of the parameter plane. Complementing this result with a matching lower bound, we determine the exact competitive ratio on the entire parameter plane.

\subsection{Preliminaries}\label{sub:preliminaries}

An instance of the \emph{online proportional knapsack problem with paid reservation and paid removal} (\knapresrem) is a sequence $I = (x_1, \dots, x_n)$ of items with $0 < x_i \leq 1$. An online algorithm is given a knapsack of capacity $1$ and the items are revealed gradually in consecutive time steps. At the arrival of an item $x$, and without any knowledge of later items, the algorithm may pack $x$ if it fits, \emph{reserve} it at a cost $\alpha x$, or reject $x$. At any time it may remove packed items; removing an item $y$ costs $\beta y$. Rejected and removed items cannot be used later. Reserved items may be packed at any later time, including after the last arrival. In particular, once the algorithm has processed the last item $x_n$, it is informed that no further items will arrive. At that point, it may still pack previously reserved items and remove packed items.

Consider an algorithm $\alg$ being executed on an instance $I$. Let $P$ be its final packed set, let $R$ be the set of all items it ever reserved, and let $D$ be the set of all items it removed. The \emph{gain} of the algorithm is the total size of the items in the final knapsack, minus all reservation and removal costs incurred. In order to formally define the gain, we introduce the following handy notation: for a set of items $S$, we denote its total size by the corresponding lower-case letter $s \da \sum_{x \in S} x$. We will use this notation, analogously for $P$, $R$ and $D$, throughout the paper. \alg's gain on $I$ is therefore
\[
    \gainsymb_\alg(I) \da p - \alpha r - \beta d\;.
\]
We denote by $\Opt$ an arbitrary but fixed optimal offline algorithm, by $\Opt(I)$ its output on $I$, and by $\opt(I)$ the total size of $\Opt(I)$. Note that $\opt(I)$ can never be larger than $1$. \alg is said to be (strictly) \emph{$c$-competitive}\footnote{The more common notion of (non-strict) competitive ratio allows for an additive constant on the right-hand side of the inequality. Since the maximum gain that can be reached for \knapresrem is bounded by the knapsack size of $1$, this more general notion is not reasonable here.} if
\[
    \opt(I) \leq c \cdot \gainsymb_\alg(I)
\]
for every input $I$, and the \emph{competitive ratio} $\calg$ of \alg is the infimum over all such $c$. If $\gainsymb_\alg(I) \leq 0$ for some instance $I$, no finite positive $c$ satisfies the inequality and we set $\calg = \infty$. Lower competitive ratios are better, and no ratio below $1$ is achievable.

When analyzing an online problem such as \knapresrem, we are interested in the \emph{best achievable} competitive ratio $\cknapresrem$, defined as the infimum of $\calg$ over all online algorithms $\alg$. Every algorithm therefore yields an upper bound on $\cknapresrem$. To obtain lower bounds, we construct \emph{adversarial inputs} that force online algorithms into unfavorable decisions. Determining $\cknapresrem$ thus requires establishing matching upper and lower bounds.

It is often convenient to argue about the reciprocal of the competitive ratio, namely the \emph{fill ratio}: \alg achieves fill ratio $f$ if
its competitive ratio is at most $1/f$.

\subsection{Related Work}

Online algorithms are one of the central topics of modern algorithmics; for a general introduction to online computation, we refer to the textbooks by 
\citefull{BE1998} and by \citefull{Komm2016}. 
The concept of competitive analysis was introduced by \citefull{ST1985}. While it still is the backbone in the analysis of online algorithms, it has often been criticized for being overly pessimistic. In recent years, many different approaches have been devised to bridge the gap between online and offline computation by either strengthening the capabilities of the algorithm or weakening the power of the adversary. An overview of these so-called \emph{semi-online} algorithms can be found in a survey by \citefull{KK2021}.

The knapsack problem is not only one of the most studied optimization problems in the offline world (see the textbook by \citefull{KPP2004} or the more recent survey by \citeauthor*{CILM2022a} [\citenum{CILM2022a,CILM2022b}] for an overview), but also one of the key problems in online computation. An overview of online knapsack results is given by \citefull{CJS2016} and more recently by \citefull{BHKRS2026}. The three main reasons for an online problem to be hard are, following the categorization in the latter survey \cite{BHKRS2026}, the instance being unknown to the algorithm, the irrevocability of decisions, and the worst-case assumption of a malicious adversary. In this paper, we focus on relaxing the irrevocability constraint by allowing the algorithm to reserve items and to remove already packed items. 

\citefull{IZ2010} considered removal in the general online knapsack problem, where every item has a size and a (possibly different) value, and the goal is to maximize the value of packed items while respecting the knapsack bound on the size. 

Several other models of removal have been considered. These include the work of \citefull{FHY2013} on a special case of the proportional case with item sizes being multiples of $1/N$ for some integer $N$, the model of removal with recourse, where a limited number of removed items can be brought back, as investigated by \citefull{BKMRSW23}, the model of flush removal, as investigated by \cite{S2025}
, the model of stack/queue removal, where the packed items are organized as a stack or queue for removal, as investigated by \citefull{RT2022}, or a two-stage model of reservation analyzed by \citefull{IKQP2021}.

In the following, we discuss in more detail the models of removal and reservation (and their combination) that are most relevant for the work presented here.

\begin{figure}
    \centering
    
    \begin{subfigure}[t]{0.48\textwidth}
        \centering
        \begin{tikzpicture}[scale=\cplotscale]
	\begin{axis}[
        cplotsettings,
        xtick={0, 0.5, 1},
        xticklabels={$0$, $0.5$, $1$},
        ytick={0, 1, 1.618, 2},
        yticklabels={$0$, $1$, $\varphi$, $2$},
        xlabel={removal cost factor $\beta$},
        ylabel={$\cknaprem (\beta)$}
        ]
        
		\node[
            cnodesettings
        ] at (axis cs:{0},{1.618}) {};

        \addplot[
            clinesettings,
            domain=0:0.5
        ]{2};
        
        \addplot[
            clinesettings,
            domain=0.5:\cplotxmax
        ]{(1 + x + sqrt(x^2 + 2*x + 5))/2};
	\end{axis}
\end{tikzpicture}
        \caption{The competitive ratio $\cknaprem$ as a function of $\beta$.}
        \label{fig:cknaprem}
    \end{subfigure}
    \hfill
    \begin{subfigure}[t]{0.48\textwidth}
        \centering
        \begin{tikzpicture}[scale=\cplotscale]
	\begin{axis}[
        cplotsettings,
        xtick={0, 0.25, 0.414, 0.618, 1},
        xticklabels={$0$, $0.25$, $\sqrt{2} - 1$, $\varphi - 1$, $1$},
        ytick={0, 1, 2},
        yticklabels={$0$, $1$, $2$},
        xlabel={reservation cost factor $\alpha$},
        ylabel={$\cknapres (\alpha)$}
        ]
        
		\node[
            black!80,
            circle,
            fill,
            inner sep=1pt
        ] at (axis cs:{0},{1}) {};

        \addplot[
            clinesettings,
            domain=0:0.25
        ]{2};
        
        \addplot[
            clinesettings,
            domain=0.25:0.414
        ]{(1 + sqrt(5 - 4*x))/(2 - 2*x)};

        \addplot[
            clinesettings,
            domain=0.414:0.618
        ]{2 + x};

        \addplot[
            clinesettings,
            domain=0.618:1,
            samples=200,
            restrict y to domain=0:\cplotymax
        ]{1/(1 - x)};
	\end{axis}
\end{tikzpicture}
        \caption{The competitive ratio $\cknapres$ as a function of $\alpha$.}
        \label{fig:cknapres}
    \end{subfigure}
    
    \caption{The best achievable competitive ratios if only one mechanism is available.}
    \label{fig:c_one_mech}
\end{figure}

\paragraph{Removal.} \citefull{IT2002} introduced the removable online knapsack problem with \emph{free} removal and proved a tight competitive ratio of the golden ratio $\varphi = \frac{1 + \sqrt{5}}{2} \approx 1.618$. \citeauthor*{HKM2014}~\cite{HKM2012, HKM2014} extended this result to proportional \emph{removal costs}~$\beta$ (\knaprem), obtaining a tight bound which, together with the result of \citeauthor{IT2002}, yields the competitive ratio $\cknaprem$ (see \cref{fig:cknaprem}):
\begin{equation}
    \cknaprem (\beta) = \begin{cases}
                    \varphi,                                             &\text{for } \beta = 0\;, \\
                    2,                                                   &\text{for } 0 < \beta \leq \frac{1}{2}\;, \\
                    \frac{1 + \beta + \sqrt{\beta^2 + 2\beta + 5}}{2},   &\text{for } \beta > \frac{1}{2}\;.
                \end{cases} \label{eq:cknaprem}
\end{equation}

\paragraph{Reservation.} \citefull{BBHLR2021} introduced knapsack with paid reservation (\knapres), and, together with Frei, determined the tight competitive ratio $\cknapres$ for the whole range of reservation costs \cite{BBFHLR22} (see \cref{fig:cknapres}):
\begin{equation}
    \cknapres (\alpha) = \begin{cases}
                    1,                                               &\text{for } \alpha = 0\;, \\
                    2,                                               &\text{for } 0 < \alpha \leq \frac{1}{4}\;, \\
                    \frac{1 + \sqrt{5 - 4 \alpha}}{2 (1 - \alpha)},  &\text{for } \frac{1}{4} < \alpha \leq \sqrt{2} - 1\;, \\
                    2 + \alpha,                                      &\text{for } \sqrt{2} - 1 < \alpha \leq \varphi - 1\;, \\
                    \frac{1}{1 - \alpha},                            &\text{for } \varphi - 1 < \alpha < 1\;.
                \end{cases} \label{eq:cknapres}
\end{equation}

For $\alpha \geq 1$, a reserved item $x$ contributes at most $x - \alpha x \leq 0$ to the gain even if it is packed later, so reservation is never beneficial and \knapres inherits the unbounded competitive ratio of the plain problem; we accordingly set $\cknapres(\alpha) \da \infty$ for $\alpha \geq 1$.
\paragraph{Reservation and removal combined.} Closest to our work, \citefull{BGLMR23} analyzed paid reservation together with \emph{free} removal, \ie, $\beta = 0$, and proved that
\begin{equation}
    \cknapresrem (\alpha, 0) =  \begin{cases}
                                    1, &\text{for } \alpha = 0\;, \\
                                    \frac{3 - 1.5 \alpha}{2 - 1.5 \alpha}, &\text{for } 0 < \alpha \leq 1 - \frac{\sqrt{5}}{3}\;, \\
                                    \varphi, &\text{for } \alpha > 1 - \frac{\sqrt{5}}{3}\;.
                                \end{cases} \label{eq:cknapresfreerem}
\end{equation}

Crucially, for $0<\alpha<1-\sqrt{5}/3$, neither single mechanism is optimal: free removal interacts with paid reservation to beat both. This is the first evidence that the two mechanisms can be symbiotic, and it leaves open what happens once removal also carries a cost---a question we resolve in this work.

\subsection{Our Contribution}

\begin{figure}
    \centering

    \begin{subfigure}[t]{0.48\textwidth}
        \centering
        \begin{tikzpicture}
	\begin{axis}[
        splotsettings,
        xtick={0, 0.25, 0.414, 0.618, 1},
        xticklabels={$0$, \sfrac{1}{4}, $\sqrt{2} - 1$, $\varphi - 1$, $1$},
        ytick={0, 0.5, 1, 2.236-1},
        yticklabels={$0$, \sfrac{1}{2}, $1$, $\sqrt{5} - 1$},
        xlabel={reservation cost factor $\alpha$},
        ylabel={removal cost factor $\beta$}
        ]

        \addplot[
            sequilinesettings,
            domain=0.25:0.414
        ]{(x*(sqrt(5 - 4*x) + 1))/(2*(1 - x))};

        \addplot[
            sequilinesettings,
            domain=0.414:0.618
        ]{(x^2 + 3*x + 1)/(x + 2)};

        \addplot[
            sequilinesettings,
            domain=0.618:1,
            restrict y to domain=0:\splotymax,
            samples=1000
        ]{(-x^2 + 3*x - 1)/(1 - x)};

        \addplot[
            name path=upper,
            sequilinesettings,
            domain=0:0.25
        ]{0.5};

        \addplot[
            sequilinesettings
        ] coordinates {(0.25,0) (0.25,0.5)};
        
        \addplot[
            name path=xaxis,
            draw=none,
            domain=0:0.25
        ]{0};

        \addplot[
            sequiregionsettings
        ] fill between[
            of=upper and xaxis
        ];
	\end{axis}
\end{tikzpicture}
        \caption{The equilibrium border separating the regions where one mechanism is better than the other.
        In the dashed area, both mechanisms perform equally well.}
        \label{fig:equilibrium_border}
    \end{subfigure}
    \hfill
    \begin{subfigure}[t]{0.48\textwidth}
        \centering
        \begin{tikzpicture}
	\begin{axis}[
        splotsettings,
        xtick={0, 0.25, 0.414, 0.618, 1},
        xticklabels={$0$, \sfrac{1}{4}, $\sqrt{2} - 1$, $\varphi - 1$, $1$},
        ytick={0, 0.5, 0.75, 1.414, 1.618},
        yticklabels={$0$, \sfrac{1}{2}, \sfrac{3}{4}, $\sqrt{2}$, $\varphi$},
        xlabel={reservation cost factor $\alpha$},
        ylabel={removal cost factor $\beta$}
        ]

        \addplot[
            sequilinesettings,
            domain=0.25:0.414
        ]{(x*(sqrt(5 - 4*x) + 1))/(2*(1 - x))};

        \addplot[
            sequilinesettings,
            domain=0.414:0.618
        ]{(x^2 + 3*x + 1)/(x + 2)};

        \addplot[
            sresinternalsettings
        ] coordinates {(0.25, 0.75) (0.25, \splotymax)};

        \addplot[
            sresinternalsettings
        ] coordinates {(0.414, 1.414) ((0.414, \splotymax)};

        \addplot[
            sresinternalsettings
        ] coordinates {(0.618, 1.618) (0.618, \splotymax)};

        \addplot[
            sresbordersettings
        ] coordinates {(0, 0) (0, \splotymax)};

        \addplot[
            sresbordersettings
        ] coordinates {(0.25, 0.5) (0.25, 0.75)};

        \addplot[
            name path=bottommidres,
            sresbordersettings,
            domain=0.25:0.414
        ] {(x*(2*x - sqrt(5 - 4*x) - 3))/(2*(x - 1))};

        \addplot[
            name path=bottomhighres,
            sresbordersettings,
            domain=0.414:0.618
        ] {1 + x};

        \addplot[
            sresbordersettings
        ] coordinates {(0.618, 1.236) (0.618, 1.618)};

        \addplot[
            name path=bottomexpensiveres,
            sresbordersettings,
            domain=0.618:1,
            restrict y to domain=0:\splotymax,
            samples=1000
        ] {(-x^2 + 3*x - 1)/(1 - x)};

        \addplot[
            sressoftbordersettings
        ] coordinates {(0.25, 0) (0.25, 0.5)};

        \addplot[
            name path=ceilcheapres,
            draw=none,
            domain=0:0.25
        ] {\splotymax};

        \addplot[
            name path=ceilmidres,
            draw=none,
            domain=0.25:0.414
        ] {\splotymax};

        \addplot[
            name path=ceilhighres,
            draw=none,
            domain=0.414:0.618
        ] {\splotymax};

        \addplot[
            name path=ceilexpensiveres,
            draw=none,
            domain=0.618:1
        ] {\splotymax};

        \addplot[
            name path=bottomcheapres,
            draw=none,
            domain=0:0.25
        ] {0};

        \addplot[
            sresregionsettings
        ] fill between[
            of=bottomcheapres and ceilcheapres
        ];

        \addplot[
            sresregionsettings
        ] fill between[
            of=bottommidres and ceilmidres
        ];

        \addplot[
            sresregionsettings
        ] fill between[
            of=bottomhighres and ceilhighres
        ];

        \addplot[
            sresregionsettings
        ] fill between[
            of=bottomexpensiveres and ceilexpensiveres
        ];

        \addplot[
            sreminternalsettings,
            domain=0.5:1
        ] {0.5};

        \addplot[
            name path=leftsuperexpensiverem,
            sreminternalsettings
        ] coordinates {(1, 0) (1, \splotymax)};

        \addplot[
            srembordersettings,
            domain=0.255:\splotxmax
        ] {0};

        \addplot[
            srembordersettings,
            domain=0.25:0.5
        ] {0.5};

        \addplot[
            srembordersettings,
            domain=0.5:0.618
        ] {(-x^2 + 3*x - 1)/(1 - x)};

        \addplot[
            sremsoftbordersettings,
            domain=0.618:1,
            restrict y to domain=0:\splotymax,
            samples=1000
        ] {(-x^2 + 3*x - 1)/(1 - x)};

        \addplot[
            name path=ceilcheapresrem,
            sremsoftbordersettings,
            domain=0:0.25
        ] {0.5};

        \addplot[
            name path=ceilcheaprem,
            draw=none,
            domain=0.25:1
        ] {0.5};

        \addplot[
            name path=bottomcheaprem,
            draw=none,
            domain=0.25:1
        ] {0};

        \addplot[
            name path=leftexpensiverem,
            draw=none,
            domain=0.5:1,
            restrict y to domain=0:\splotymax,
            samples=1000
        ] {(-x^2 + 3*x - 1)/(1 - x)};

        \addplot[
            name path=rightexpensiverem,
            draw=none
        ] coordinates {(1, 0.5) (1, \splotymax)};

        \addplot[
            name path=rightsuperexpensiverem,
            draw=none
        ] coordinates {(\splotxmax, 0) (\splotxmax, \splotymax)};

        \addplot[
            sresremregionsettings
        ] fill between[
            of=bottomcheapres and ceilcheapresrem
        ];

        \addplot[
            sremregionsettings
        ] fill between[
            of=bottomcheaprem and ceilcheaprem
        ];

        \addplot[
            sremregionsettings
        ] fill between[
            of=leftexpensiverem and rightexpensiverem
        ];

        \addplot[
            sremregionsettings
        ] fill between[
            of=leftsuperexpensiverem and rightsuperexpensiverem
        ];

        \addplot[
            name path=leftintersym,
            ssyminternalsettings,
            domain=0.25:0.5
        ] {1 - x};

        \addplot[
            name path=rightintersym,
            ssyminternalsettings,
            domain=0.5:0.618
        ] {x + (2*x - 1)/(x*(1 - x))};

        \addplot[
            ssymbordersettings,
            domain=0:0.255
        ] {0};

        \addplot[
            ssymsoftbordersettings
        ] coordinates {(0.25, 0.5) (0.25, 0.75)};

        \addplot[
            name path=leftceilsym,
            ssymsoftbordersettings,
            domain=0.25:0.414
        ] {(x*(2*x - sqrt(5 - 4*x) - 3))/(2*(x - 1))};

        \addplot[
            ssymsoftbordersettings,
            domain=0.414:0.618
        ] {1 + x};

        \addplot[
            ssymsoftbordersettings
        ] coordinates {(0.618, 1.236) (0.618, 1.618)};

        \addplot[
            name path=leftbottomsym,
            ssymsoftbordersettings,
            domain=0.25:0.5
        ] {0.5};

        \addplot[
            name path=rightbottomsym,
            ssymsoftbordersettings,
            domain=0.5:0.618
        ] {(-x^2 + 3*x - 1)/(1 - x)};

        \addplot[
            ssymregionsettings
        ] fill between[
            of=leftbottomsym and leftintersym
        ];

        \addplot[
            ssymregionsettings
        ] fill between[
            of= leftceilsym and rightintersym
        ];

        \addplot[
            ssymregionsettings
        ] fill between[
            of=rightbottomsym and rightintersym
        ];
	\end{axis}
\end{tikzpicture}
        \caption{The regions in which reservation-only dominates (blue), removal-only dominates (orange), and both mechanisms behave symbiotically (pink). The equilibrium border is drawn in black.}
        \label{fig:results}
    \end{subfigure}

    \caption{The $(\alpha, \beta)$-plane separated by the equilibrium border and categorized into regions in which one mechanism dominates or a symbiosis of both mechanisms helps.}
    \label{fig:equilib_and_results}
\end{figure}
We give a complete analysis of the proportional online knapsack problem with both paid reservation and paid removal (\knapresrem): for every pair $(\alpha, \beta)$, we determine the exact competitive ratio $\cknapresrem$, providing matching upper and lower bounds.

Before we state our results, assume for a moment that, for every $(\alpha, \beta)$, the optimal competitive ratio were exactly $\min(\cknapres, \cknaprem)$, so that each parameter pair were governed by whichever single mechanism is more effective. Where the two guarantees coincide, \ie, where
\begin{equation}
    \cknapres (\alpha) = \cknaprem (\beta)\;, \label{eq:implicit_equilibrium_border}
\end{equation}
the natural boundary between a reservation-dominated and a removal-dominated regime is located; we call it the \emph{equilibrium border}. Solving \cref{eq:implicit_equilibrium_border} with the help of \cref{eq:cknaprem,eq:cknapres} yields the explicit description
\begin{equation}\label{eq:explicit_equilibrium_border}
    \beta(\alpha) = \begin{cases}
                        \frac{\alpha \cdot (\sqrt{5 - 4\alpha} + 1)}{2 (1 - \alpha)}, &\text{for } \frac{1}{4} \leq \alpha < \sqrt{2} - 1\;, \\
                        \frac{\alpha^2 + 3\alpha + 1}{\alpha + 2}, &\text{for } \sqrt{2} - 1 \leq \alpha < \varphi - 1\;, \\
                        \frac{3\alpha - \alpha^2 - 1}{1 - \alpha}, &\text{for } \varphi - 1 \leq \alpha < 1\;,
                    \end{cases}
\end{equation}
together with the degenerate area $\{0 < \alpha \leq \frac{1}{4},\; 0 < \beta \leq \frac{1}{2}\}$, on which both ratios equal $2$ (see \cref{fig:equilibrium_border}). Note that for $\alpha = 0$, reservation is better than removal for any $\beta \geq 0$, and for $\beta = 0$, removal is better than reservation for any $\alpha > 0$. The equilibrium border defines the region
\[
    \Ures \da \left\{ (\alpha, \beta) \; \Big| \; \left( \alpha = 0 \land \beta \geq 0 \right) \lor \left( 0 < \alpha \leq \frac{1}{4} \land \beta > 0 \right) \lor \left( \frac{1}{4} < \alpha < 1 \land \beta \geq \beta(\alpha) \right) \right\}\;,
\]
in which reservation is at least as good as removal, and the region
\[
    \Urem \da \left\{ (\alpha, \beta) \; \Big| \; \left( 0 < \alpha \leq \frac{1}{4} \land \beta \leq \frac{1}{2} \right) \lor \left( \frac{1}{4} < \alpha < 1 \land \beta \leq \beta(\alpha) \right) \lor \left( \alpha \geq 1 \land \beta \geq 0 \right) \right\}\;,
\]
in which removal is at least as good as reservation.

We show that, once both reservation and removal are available, this single-mechanism picture remains accurate only on part of the parameter plane. Where at least one mechanism is cheap or at least one is expensive---concretely, before the first kink of the equilibrium border at $\alpha = \frac{1}{4}$ and after the third kink at $\alpha = \varphi - 1$---the border is sharp: one mechanism dominates the other, as we will show in \cref{sec:dominance}. In the intermediate band, however, there is a \emph{symbiosis region} in which an optimal algorithm must use reservation and removal together, achieving a ratio strictly better than either mechanism can achieve on its own; this will be shown in \cref{sec:symbiosis}. Before we summarize our results, we need to formalize two concepts.

\begin{definition}\label{def:Usymb-f}
    We define the \textbf{main symbiosis region}
    \begin{equation}\label{def:Usymb}
        \Usymb \da \left\{ (\alpha, \beta) \;\middle|\;
        \begin{aligned}
            &\tfrac{1}{4} < \alpha < \varphi - 1 \text{ and}\\
            &\max\left( \tfrac{1}{2}, \tfrac{3\alpha -\alpha^2 - 1}{1 - \alpha} \right) < \beta < \min\left( \tfrac{\alpha\cdot (3 - 2\alpha + \sqrt{5 - 4\alpha})}{2\cdot(1 - \alpha)},\; 1 + \alpha \right)
        \end{aligned}
        \right\}\;,
    \end{equation}
    and the \textbf{target fill ratio}
    \begin{equation}\label{def:f}
        f (\alpha, \beta) \da \min \left( \fo(\alpha, \beta),\; \frac{1}{2},\; 1 - \alpha \right),
    \end{equation}
    where, for $\beta \leq 1 + \alpha$, $\fo(\alpha, \beta)$ is the unique positive root of the polynomial
    \begin{equation}\label{def:Q}
        Q_{\alpha, \beta}(x) \da (1 + \alpha - \beta)\, x^2 + (1 + \beta)\, x - 1\;.
    \end{equation}
    For $\beta < 1 + \alpha$, this root is given explicitly by
    \begin{equation}\label{def:fo}
        \fo(\alpha, \beta) = \frac{- (1 + \beta) + \sqrt{(1 + \beta)^2 + 4\cdot(1 + \alpha - \beta)}}{2\cdot(1 + \alpha - \beta)}\;.
    \end{equation}
    Rearranging $Q_{\alpha, \beta}(\fo) = 0$, using $1 - \beta \fo > 0$ (see \cref{sub:fill_ratio}), yields the identity 
    \begin{equation}\label{eq:fo}
        \frac{\fo^2}{1 - \beta \fo} = \frac{1 - \fo}{1 + \alpha}\;,
    \end{equation}
    which we use frequently. Whenever $(\alpha, \beta)$ is fixed or clear from context, we write $f$ and $\fo$ for short.
\end{definition}

The target fill ratio $f$ is non-increasing in $\alpha$ and $\beta$ (\cref{lem:f_non-increasing}), and its minimum in \cref{def:f} is attained piecewise across $\Usymb$ as follows (\cref{lem:f_piecewise}):
\[
    f = \begin{cases}
            \fo, &\text{for } \beta > \max \left( 1 - \alpha, \alpha + \frac{2\alpha - 1}{\alpha\cdot(1 - \alpha)} \right), \\
            \frac{1}{2}, &\text{for } \beta \leq 1 - \alpha\;, \\
            1 - \alpha, &\text{for } \beta \leq \alpha + \frac{2\alpha - 1}{\alpha\cdot(1 - \alpha)}\;.
        \end{cases}
\]

On $\Usymb$ itself, $1/f$ is strictly smaller than both $\cknapres$ and $\cknaprem$, corresponding to a strictly better performance (\cref{lem:f_strictly_better}). By \cref{thm:results} and \cref{lem:f_strictly_better}, for $\beta > 0$ the symbiosis region is precisely the set of cost pairs at which the optimal competitive ratio strictly improves upon both single mechanisms.

The following theorem summarizes our results for $\beta > 0$, which together with \cref{eq:cknapresfreerem} analyzes the entire parameter plane; this is visualized in \cref{fig:results}.

\begin{restatable}{theorem}{theoremMain}\label{thm:results}
    For $\alpha \geq 0$ and $\beta > 0$, we have
    \[
        \cknapresrem (\alpha, \beta) =  \begin{cases}
                            \frac{1}{f(\alpha, \beta)}, &\text{for } (\alpha, \beta) \in \Usymb\;, \\
                            \cknapres (\alpha),   &\text{for } (\alpha, \beta) \in \Ures \setminus \Usymb\;, \\
                            \cknaprem (\beta),   &\text{for } (\alpha, \beta) \in \Urem \setminus \Usymb\;.
                        \end{cases}
    \]
\end{restatable}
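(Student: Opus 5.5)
The proof splits into an upper-bound part and a lower-bound part, organized by region. Throughout, every lower bound for \knapresrem{} is also a lower bound for \knapres{} and \knaprem{}, but not conversely. So the work is to show that the single-mechanism adversaries remain valid when the other mechanism is also available, and to build new adversaries on $\Usymb$.

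\textbf{Upper bounds.} Outside $\Usymb$, the bound is immediate: any \knapres{} algorithm or \knaprem{} algorithm is also a \knapresrem{} algorithm. Hence $\cknapresrem \le \min(\cknapres(\alpha),\cknaprem(\beta))$, and by the definition of $\Ures$ and $\Urem$ via the equilibrium border this minimum is $\cknapres$ on $\Ures$ and $\cknaprem$ on $\Urem$.

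On $\Usymb$, I would analyze the blended algorithm \algsymb{} described in the abstract:
\begin{itemize}
\item It reserves arriving items (rejecting useless ones) until the total of packable/reserved mass reaches the threshold $f$.
\item From then on it packs greedily, and removes items whenever a new item together with the current knapsack content lets it reach the threshold.
\end{itemize}
The analysis is a case distinction on which phase the input ends in. In each case I would lower-bound the gain by $f\cdot\opt$. The three terms of the minimum in \cref{def:f} correspond to the three binding scenarios:
\begin{itemize}
\item the term $1-\alpha$ comes from an instance with one large item that was reserved and is packed at the end;
\item the term $\frac12$ comes from two items of roughly half size;
\item the term $\fo$ comes from the reserve-then-remove scenario.
\end{itemize}
For the last scenario, the identity \cref{eq:fo} is exactly the equation saying that the reservation cost $\alpha$ on mass about $\fo^2/(1-\beta\fo)$ plus the removal cost $\beta\fo$ balance the guarantee. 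I would use \cref{lem:f_piecewise} to know which term is active in each subregion, and \cref{lem:f_non-increasing} to handle the boundary inequalities.

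\textbf{Lower bounds.} On $\Usymb$, I would construct an adaptive adversary. It first sends a sequence of tiny items of increasing total size. Against these, the algorithm must either pack, which exposes it to a later item of size close to $1$ that forces a removal of cost $\beta$ times the packed mass, or reserve, which costs $\alpha$ times the reserved mass. The adversary then follows with a large item of size about $1-\fo$ or $\fo$. Optimizing the point at which the algorithm switches from reserving to packing should yield exactly the polynomial $Q_{\alpha,\beta}$. The bounds $2$ and $1/(1-\alpha)$ come from the simpler classical adversaries: two items of size $\frac12+\eps$, and one item of size $1$ preceded by a tiny item.

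Outside $\Usymb$, I would revisit the known adversaries for \knapres{} and \knaprem{} and show that the other mechanism does not help against them:
\begin{itemize}
\item For $\beta$ above the upper boundary of $\Usymb$ (in $\Ures$), removal costs at least as much as what it saves in those instances. A removal of mass $y$ gains at most $y$ in future packing, and here $\beta$ is large enough that removing is dominated by having reserved.
\item Symmetrically, for $\beta$ below the lower boundary (in $\Urem$), reserving is dominated by packing and later removing.
\end{itemize}
These boundary expressions are precisely where the two adversary families give equal bounds, so the lower bound should transition continuously at the edges of $\Usymb$.

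\textbf{Main obstacle.} The hardest step will be the symbiosis lower bound on $\Usymb$, and, relatedly, lifting the piecewise \knapres{} lower bounds, with their subtle middle regimes, to the setting where removal is also available. An algorithm can mix reserving and packing on the tiny-item phase in arbitrary fractions. So the adversary must be designed so that the algorithm's best mixture still loses. I would first try a continuous (fluid) relaxation, writing the gain as a function of the fraction packed versus reserved at each threshold, and show that the adversary's best response gives a ratio at least $1/f$. Strict improvement over both single mechanisms is then given by \cref{lem:f_strictly_better}.
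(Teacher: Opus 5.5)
You have the outer frame right. Outside $\Usymb$, the upper bound is \cref{obs:single_mechanism_upper}. The lower bounds there come from re-checking the single-mechanism adversaries with the other mechanism added: \cref{thm:lowerbound_reservation_small,thm:lowerbound_reservation_medium,thm:lower-equib}, valid for $\beta\ge\tilde\beta(\alpha)$, $\beta\ge 1+\alpha$ and $\beta\ge\frac12$ respectively, together with \cref{thm:globaltwo}. Your ``does not help'' claims are exactly the case checks still to be done, and $\alpha=0$ and $\alpha\ge 1$ need a separate word. On $\Usymb$, however, there are genuine gaps on both sides.

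\textbf{Lower bounds on $\Usymb$.} The adversaries you name fail once removal is available.
\begin{itemize}
\item A tiny item $\delta$ followed by an item of size $1$ forces nothing: the algorithm packs $\delta$ and later removes it at cost $\beta\delta$, so its ratio tends to $1$. The paper uses $x_1=\frac{1-\alpha}{2-\alpha}$, $x_2=1-x_1+\eps$, $x_3=1-x_1$, and needs the lower boundary of $\Usymb$ precisely to get $(1+\beta)x_1>\alpha$.
\item Two items of size $\frac12+\eps$ do not force $2$, because one reservation or removal costs only about $\alpha/2$ or $\beta/2$. You need $N>2/\min(\alpha,\beta)+1$ items $\frac12+\eps^i$, plus a complement $\frac12-\eps^i$ after the first rejection.
\item Most importantly, the fluid adversary for $1/\fo$ rests on a false premise. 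If the packed mass consists of tiny items, a later large item forces removal only of the overflow, not of the whole packed mass. A reserve of tiny items can also be combined with any later item. So a tiny-item phase creates no trade-off from which $Q_{\alpha,\beta}$ could emerge; the adversary's leverage comes from indivisibility.
\end{itemize}
The missing idea is a single first item $x_1=\frac{1-\fo}{1+\alpha}+\eps$, whose limit equals $\frac{\fo^2}{1-\beta\fo}$ by \cref{eq:fo}, such that both answers lose.
\begin{itemize}
\item If $x_1$ is packed, the next item is $x_2=\fo+\beta x_1$. Rejecting it gives ratio $x_2/x_1\to 1/\fo$. Swapping and then facing $x_3=1-x_1$ gives gain $x_2-\beta x_1=\fo$ against $x_1+x_3=1$.
\item If $x_1$ is reserved, the adversary sends $y_1=1-x_1+\eps$ and then $y_2=1-x_1$. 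Packing $y_1$ and rejecting $y_2$ gives gain $y_1-\alpha x_1\to 1-(1+\alpha)x_1=\fo$ against an optimum of $1$.
\end{itemize}
Equating the two branches is exactly $Q_{\alpha,\beta}(\fo)=0$. The remaining leaves still need checking, for instance reserving $y_1$, which for $\alpha<\frac13$ requires a further item $z_1=\fo+\alpha(x_1+y_1)$ and then $z_2=1-y_1$.

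\textbf{Upper bound on $\Usymb$.} The algorithm you describe is not \algsymb, and as described it fails. Take $(\alpha,\beta)=(\frac12,\frac65)\in\Usymb$, where $f=\fo\approx 0.429$. Suppose your algorithm reserves a first item $x=0.42<f$, and the adversary sends $y_1=0.58+\eps$.
\begin{itemize}
\item Rejecting $y_1$ leaves gain $0.21$ against $0.58$.
\item Reserving $y_1$ leaves gain at most $0.08$.
\item Packing $y_1$ is answered by $y_2=0.58$, after which the best gain is $y_1-\alpha x\approx 0.37$ against an optimum of $1$.
\end{itemize}
Every branch has ratio at least about $2.7>1/f\approx 2.33$.

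\algsymb never rejects in Phase~1. It leaves Phase~1 as soon as an item of size at least $\pwin{r}=\frac{f^2-(1-\alpha-\alpha f)r}{1-\beta f}$ arrives; here $\pwin{0}\approx 0.380$, so the item $0.42$ is packed. This threshold is exactly the solution of the safe-rejection inequality $p+(1-\alpha)r\ge f\,(f+\alpha r+\beta p)$.

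The analysis then needs three ingredients that your ``case distinction on which phase the input ends in'' does not supply:
\begin{itemize}
\item \emph{Packability invariant} $p+r\le 1$. This uses monotonicity of $r+\pwin{r}$ and the bound $r\le f^2/(1-\alpha-\alpha f)\le 1$. It also uses the subset-sum gap lemma, which shows that the pack-and-win intervals cover $[f-(1-\alpha)r-p,\,1-p]$ without gaps. Hence every greedily packed item keeps $p+r\le f+\alpha r\le 1$.
\item \emph{Critical rejection}: every rejected item exceeds $1-f\ge\frac12$, so \Opt uses at most one of them.
\item \emph{Safe rejection} (\cref{lem:safe_rejection}): if \Opt packs an item \algsymb rejected, \algsymb still reaches fill ratio $f$.
\end{itemize}
The decisive case split is not where the input ends, but whether \Opt uses an item that \algsymb rejected.
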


\begin{proof}[sketch]
    For $\alpha = 0$, reservation is free, so an algorithm can reserve every arriving item and, once the instance has ended, pack an optimal subset; hence $\cknapresrem(0, \beta) = \cknapres(0) = 1$. For $\alpha \geq 1$, reservation is never beneficial (a reserved item $x$ contributes at most $x - \alpha x \leq 0$ to the gain even if it is packed later), so the problem reduces to \knaprem and inherits both its upper and its lower bound, \ie, $\cknapresrem(\alpha, \beta) = \cknaprem(\beta)$. For $0 < \alpha < 1$, the upper bounds are given by \cref{thm:alg_achieves_f} on $\Usymb$ and by \cref{obs:single_mechanism_upper} everywhere else; they are matched from below by \cref{thm:globaltwo,thm:lowerboundsymbiosistrivial,thm:lowerboundf} on $\Usymb$ and by \cref{thm:globaltwo,thm:lowerbound_reservation_small,thm:lowerbound_reservation_medium,thm:lower-equib} on the dominance regions, each applied on the subregion where its bound is tight. The full, region-by-region assembly is given in \cref{sec:assembly}.
\end{proof}

Outside $\Usymb$, the common boundary of $\Ures$ and $\Urem$ is the equilibrium border; the symbiosis region is the area carved out of its neighborhood where the border fails to be sharp.

\section{Symbiosis}\label{sec:symbiosis}

We present an algorithm for the symbiosis region $\Usymb$ strictly outperforming every reservation-only and every removal-only strategy, and show that it is optimal by providing a matching lower bound.

Throughout this section, we fix $(\alpha, \beta) \in \Usymb$ and write $f = f(\alpha, \beta)$. For a reserve $R$ of total size $r \geq 0$, we define the \emph{safety threshold} $\pwin{r}$ as follows. Note that $1 - \beta f > 0$; see \cref{eq:1-bf>0_1-a-af>=(1-a)^2>0}.
\begin{equation}\label{def:pwin}
    \pwin{r} \da \frac{f^2 - (1 - \alpha - \alpha f)\cdot r}{1 - \beta f}\;.
\end{equation}

\begin{algorithm}\caption{Symbiosis algorithm \symb}\label{alg:symbiosis}
    \begin{algorithmic}[1]
    \State $P \gets \emptyset$, $R \gets \emptyset$\label{line:init}
    \Statex \emph{Phase 1 (reserve):}
    \While{a next item $x$ arrives and $x < \pwin{r}$}\label{line:phase1check}
        \State reserve $x$: $R \gets R \cup \{x\}$\label{line:reserve}
    \EndWhile
    \Statex \emph{Phase 2 (pack, remove, or reject):}
    \For{each remaining item $x$, starting with the one that ended Phase~1}\label{line:phase2for}
        \If{there are $P' \subseteq P$ and $R' \subseteq R$ such that $f + \alpha r + \beta\cdot(p - p') \leq p' + r' + x \leq 1$}\label{line:checkwin}
            \State remove $P \setminus P'$, pack $R' \cup \{x\}$, and reject all future items \Comment{\symb wins early}\label{line:win}
        \ElsIf{$p + x \leq 1$}\label{line:ifpackable}
            \State pack $x$: $P \gets P \cup \{x\}$\label{line:pack}
        \Else
            \State reject $x$\label{line:reject}
        \EndIf
    \EndFor
    \Statex \emph{Finalization:}
    \If{\symb did not win early}\label{line:ifover}
        \State pack all reserved items: $P \gets P \cup R$\label{line:packall}
    \EndIf
    \end{algorithmic}
\end{algorithm}

The algorithm \algsymb (see \cref{alg:symbiosis}) itself is simple and proceeds in two phases. In Phase~1, it reserves \emph{every} arriving item, deliberately postponing commitment until a sufficiently large item of size at least $\pwin{r}$ arrives and ends the phase. In Phase~2, no further items are reserved; for each item $x$, beginning with the very item that ended Phase~1, \algsymb first checks whether it can immediately attain a gain of at least $f$ by packing $x$ and a subset of the reserved items, removing previously packed items from the knapsack if necessary. The condition tested in \Lineref{line:checkwin} is checking exactly that, accounting for the reservation costs $\alpha r$ and the removal costs $\beta\cdot (p - p')$. When this is possible, we say that \algsymb \emph{wins early} as it attains a fill ratio of at least $f$. If \algsymb cannot win early in this step, it tries to pack $x$ greedily and rejects it otherwise. Should the input end before \algsymb has won early, it packs its entire reserve and achieves the target fill ratio this way.

Although \algsymb is this simple, its analysis is rather involved. Clearly, if \algsymb ever executes \Lineref{line:win}, it achieves its target fill ratio $f$. Otherwise, the analysis is based on three main ingredients. The first is the \emph{packability invariant}: the whole reserve is packable at all times, \ie, $p + r \leq 1$, so in particular, \Lineref{line:packall} is always executable. The second is the \emph{critical rejection} assertion: at most one item rejected by \algsymb is packed by \Opt. The third is \emph{safe rejection}: even when \algsymb rejects such an item, it still reaches its target fill ratio $f$.

It is worth noting that \algsymb is a natural blend of the two optimal single-mechanism algorithms. The reservation-only algorithm of \citefull{BBFHLR22} \emph{reserves} greedily, checks at each arrival whether it can already win, and, should the input end first, packs its entire reserve; the removal-only algorithm of \citefull{HKM2014} \emph{packs} greedily and checks at each arrival whether it can win early by removing packed items in favor of the arriving one. \algsymb runs the former as its Phase~1 and switches to the latter as soon as its filling guarantees safe rejection, which is exactly its Phase~2. This symbiosis between postponing commitment through reservation and revising decisions through removal is exactly what enables \algsymb to outperform any single-mechanism algorithm.

\subsection{Upper Bound}\label{sub:symbiosis:upper_bound}
In this subsection, we show that \algsymb has a competitive ratio of at most $\frac{1}{f(\alpha, \beta)}$.

\begin{theorem}\label{thm:alg_achieves_f}
    For every $(\alpha, \beta) \in \Usymb$, \algsymb achieves a fill ratio of at least $f(\alpha, \beta)$.
\end{theorem}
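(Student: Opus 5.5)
We follow the three ingredients announced after the algorithm. If \symb executes \Lineref{line:win}, its gain is $p'+r'+x-\alpha r-\beta(p-p')\ge f$, and since $\opt\le 1$ the fill ratio is at least $f$. It therefore suffices to treat runs in which \symb never wins early. We first dispose of inputs that end during Phase~1. There \symb packs its whole reserve $R$, which is the entire instance. Its gain is then $(1-\alpha)r$, while $\opt\le r$, so the fill ratio is $1-\alpha\ge f$. This requires that $R$ fits, i.e.\ $r\le 1$. Since every reserved $x$ satisfies $x<\pwin{r}$ at its arrival, we show by induction that $r$ stays small enough. Concretely, $\pwin{r}$ decreases in $r$ because $1-\alpha-\alpha f\ge (1-\alpha)^2>0$. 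Moreover, $r+\pwin{r}\le 1$ as long as $\pwin{r}>0$, which is a linear inequality in $r$ that we check at $r=0$ (using $f^2\le 1-\beta f$, from $f\le \fo$ and \cref{eq:fo}) and at the root of $\pwin{\cdot}$.

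The main work is Phase~2. First we prove the \emph{packability invariant} $p+r\le 1$ throughout Phase~2. When an item $x$ is packed greedily, failure to win early with $P'=P$ and $R'=R$ means that either $p+r+x>1$ or $p+r+x<f+\alpha r$. In the second case $p+x+r<1$ holds anyway, since $f+\alpha r\le 1$ (using $r\le 1$ and $f\le 1-\alpha$). In the first case we must argue that packing $x$ is impossible or contradictory. We compare against the win test with $P'=\emptyset$, $R'=R$, which fails. This gives either $r+x>1$ or $r+x<f+\alpha r+\beta p$. Combined with $x\ge \pwin{r}$, or $x$ large relative to $p$, the latter should contradict $p+x\le 1$. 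We expect to need the defining identity of $\pwin{r}$ here.

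Second, \emph{critical rejection}: at most one item rejected in Phase~2 belongs to $\Opt$. A rejected item $y$ satisfies $p+y>1$. If two rejected items $y_1,y_2$ were both in $\Opt$, each would exceed $1-p\ge r$ at its time. The win test with $P'$ equal to a suitable subset and $R'=\emptyset$ must have failed for $y_2$, giving $p'+y_2<f+\alpha r+\beta(p-p')$ for all $P'$ with $p'+y_2\le 1$. Choosing $P'$ to contain $y_1$'s neighbourhood and using $y_1+y_2\le 1$, we derive a contradiction. The cleanest choice is probably to take $P'$ as $P$ minus the most recently packed items.

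Third, \emph{safe rejection}: at the end \symb packs $P\cup R$, gaining $p+(1-\alpha)r$. If $\Opt$ uses no rejected item, then $\opt\le p+r$ and we are done as in Phase~1. Otherwise let $y$ be the unique critical rejected item. Then $\opt\le y+\text{(other items)}\le 1$, and we need $p+(1-\alpha)r\ge f$ at that moment, or later. Failure of the win test for $y$ with $P'$ chosen so that $p'=1-y$ (approximately, by greedy granularity) yields $y+p'<f+\alpha r+\beta(p-p')$. This forces $p$ to be large, namely $p\ge(1-\beta f\ldots)$, roughly $p(1-\beta f)\ge f^2-(1-\alpha-\alpha f)r$. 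This is exactly the expression defining $\pwin{r}$, and via $Q_{\alpha,\beta}(\fo)=0$ (\cref{eq:fo}) and $f\le\fo$ it gives $p+(1-\alpha)r\ge f$. This third step is the main obstacle. The items in $P$ are indivisible, so the existence of a good $P'$ needs a careful case split on whether $P$ contains an item of size at least $f$ (which alone would allow an early win) or consists only of small items, in which case $P'$ can be tuned finely. Within the small-item case, the split $f\in\{\fo,\tfrac12,1-\alpha\}$ following \cref{lem:f_piecewise} determines which inequality is tight.
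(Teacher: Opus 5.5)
Your treatment of early wins and of Phase~1 is fine. Bounding the affine function $r+\pwin{r}$ at $r=0$ and at its root is even a little quicker than the paper's monotonicity lemma, and $r\le 1$ is all you need later. The decisive problem is the safe-rejection step, which aims at a false statement: you try to prove $p+(1-\alpha)r\ge f$ outright and then use $\opt\le 1$. Take $(\alpha,\beta)=(3/10,3/5)\in\Usymb$, where $f=1/2$ and $\pwin{0}=5/14$, and the input $(2/5,\,3/5+\varepsilon)$. The first item ends Phase~1, cannot win since $2/5<f$, and is packed. The second does not fit, and winning with it alone would need size at least $f+\beta\cdot 2/5=37/50$, so it is rejected. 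The gain is $2/5<f$; the fill ratio is $\approx 2/3$ only because $\opt=3/5+\varepsilon$.

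So no choice of $P'$ with $p'\approx 1-y$ can rescue your approach. Also, $p\ge\pwin{r}$ does not give $p+(1-\alpha)r\ge f$; it is equivalent to $p+(1-\alpha)r\ge f\,(f+\alpha r+\beta p)$. Nor does $p\ge\pwin{r}$ come from the failed win test. It holds because the item $x_s$ ending Phase~1 satisfies $x_s\ge\pwin{r_{s-1}}$ and is packed into the empty knapsack; afterwards the reserve is frozen and $P$ only grows. The missing idea is to run the test of \Lineref{line:checkwin} with \Opt's own items, $P'=\Popt\subseteq P$ and $R'=\Ropt\subseteq R$ (by critical rejection, all other items of \Opt are packed or reserved). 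Its upper bound holds for free because \Opt is feasible, so its lower bound fails: $\popt+y+\ropt<f+\alpha r+\beta p$. Items packed later count equally in the gain and in $\opt$, so they can be dropped, and what remains to show is exactly $p\ge\pwin{r}$.

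The other two Phase~2 steps are not established either. For packability, in the case $p+x\le 1<p+x+r$, the tests with $R'\in\{\emptyset,R\}$ and $P'\in\{\emptyset,P\}$ cannot by themselves exclude this situation. You need the intervals $[f+\alpha r-r'-p,\,1-r'-p]$ over \emph{all} $R'\subseteq R$ to chain together and cover $[f-(1-\alpha)r-p,\,1-p]$. Consecutive intervals overlap exactly when the subset sums differ by at most $1-f-\alpha r$. This needs the reserved items to be small, $x_j\le 1-f-\alpha r_j$, derived from $x_j<\pwin{r_{j-1}}$ by a case split over the three branches of $f$, hence $x_j\le(1-f-\alpha r)+\sum_{k>j}x_k$. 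Add a subset-sum gap lemma, and a packed non-winning item must lie below the covered range, giving $p+x+r<f+\alpha r\le 1$.

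For critical rejection, putting ``$y_1$'s neighbourhood'' into $P'$ is not meaningful, since $y_1$ was rejected and is not in $P$. The working argument is this. Let $x_j$ be the last item packed before a rejection. Packability $p_j+r_j\le 1$ shows that the win test on $x_j$ with $P'=P_{j-1}$, $R'=R_{j-1}$ meets its upper bound, so its lower bound fails: $p_j+r_j<f+\alpha r_j$, hence $p_j<f\le 1/2$. Every rejected item then exceeds $1-p_j>1/2$, so \Opt contains at most one of them.
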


We prove \cref{thm:alg_achieves_f} through three lemmas, one for each of the ingredients identified above: the packability invariant, critical rejection, and safe rejection. 
All three rely on the following common setup.
Let $(\alpha, \beta) \in \Usymb$, let $I = (x_1, \dots, x_n)$ be an instance of \knapresrem, and consider the run of \algsymb on $I$. 
For each $i$, let $P_i$ and $R_i$ be the sets of packed and reserved items, respectively, immediately \emph{after} \algsymb has processed $x_i$.  
It is convenient to include the boundary index $0$ as well: $P_0 = R_0 = \emptyset$ is the empty state before the first item, so all indices range over $0 \leq i \leq n$. Again, we denote the total size of the sets $P_i$ and $R_i$ by $p_i$ and $r_i$, respectively. The finalization step, in which \algsymb may still pack its reserve, receives no index of its own; where it matters, we account for it explicitly.
Finally, we assume throughout that \algsymb never wins early, as that situation already guarantees a fill ratio of $f$.

\begin{restatable}[Packability Invariant]{lemma}{LemmaPackability}\label{lem:packability_invariant}
    The packed items and the reserve together always fit into the knapsack; that is, $p_i + r_i \leq 1$, for every $0 \leq i \leq n$.
\end{restatable}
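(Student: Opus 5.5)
The plan is induction on $i$, treating Phase~1 and Phase~2 separately. Finalization needs no index of its own, since it only moves items from $R$ to $P$. For $i=0$ the claim is trivial.

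\emph{Phase~1.} Here $p_i = 0$, so it suffices to show $r_i \le 1$. The map $r \mapsto \pwin{r}$ is affine with slope $-(1-\alpha-\alpha f)/(1-\beta f)$, which is negative because $1-\alpha-\alpha f>0$ and $1-\beta f>0$ (\cref{eq:1-bf>0_1-a-af>=(1-a)^2>0}). Since all items are positive, Phase~1 can only reserve while $\pwin{r}>0$, that is, while $r<r^\ast \da f^2/(1-\alpha-\alpha f)$. A reservation at reserve level $r$ yields the new level $r+x < g(r) \da r+\pwin{r}$. The function $g$ is affine, so on $[0,r^\ast]$ it is bounded by $\max\{g(0), g(r^\ast)\}$. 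We have $g(r^\ast)=r^\ast$ and $g(0)=f^2/(1-\beta f)$, and we show both are at most $1$.
\begin{itemize}
\item For $g(0)$: the function $x\mapsto x^2/(1-\beta x)$ is increasing and $f\le\fo$, so by \cref{eq:fo} we get $g(0)\le \fo^2/(1-\beta\fo) = (1-\fo)/(1+\alpha)\le 1$.
\item For $r^\ast$: the bound $r^\ast\le 1$ is equivalent to $f^2+\alpha f+\alpha\le 1$. The left side is increasing in $f$, and at $f = 1-\alpha$ it equals $(1-\alpha)^2+\alpha(1-\alpha)+\alpha = 1$. Since $f\le 1-\alpha$, the bound follows.
\end{itemize}
Hence $r_i\le 1$ throughout Phase~1.

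\emph{Phase~2.} The reserve is now frozen at some value $r$. Rejections and early wins need no argument: a rejection changes nothing, and after an early win we are not in the case under consideration. The only relevant step packs $x$ with $p+x\le 1$ while the win test in \Lineref{line:checkwin} fails. Suppose for contradiction that $p+x+r>1$.
\begin{itemize}
\item Take $P'=P$ and add reserved items one at a time to $R'$, in order of their reservation, for as long as $p+x+r'\le 1$.
\item Because $p+x+r>1$, this process stops at some reserved item $y$ with $1-y < p+x+r' \le 1$. The item $y$ was reserved at some level $r_j\le r$, so $y<\pwin{r_j}$.
\item If we can show the key inequality $f+\alpha r \le 1-\pwin{r_j}$, then $f + \alpha r + \beta\cdot 0 \le p+r'+x \le 1$. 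This is exactly the win condition, a contradiction.
\end{itemize}

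The main obstacle is this inequality. It must be checked using only $r\le r^\ast$-type bounds (more precisely, $r < r_j + \pwin{r_j}$ is not enough by itself, since further items may be reserved after $y$). A cleaner formulation is to bound $r$ by the Phase~1 bound above and to bound $\pwin{r_j}\le\pwin{0}$. This reduces the task to
\[
  f + \alpha\, r^\ast + \frac{f^2}{1-\beta f} \le 1 .
\]
This should follow from \cref{eq:fo} together with $f\le\min(\fo,\tfrac12,1-\alpha)$ and the constraints defining $\Usymb$.

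If this crude bound turns out to be too weak, I would refine it by pairing $y$ with the reserve level at the time of its arrival. Using $\pwin{r_j}$ exactly, the quantity $\alpha r + \pwin{r_j}$ is affine in $r_j$, so I would exploit the trade-off between the larger reserve cost $\alpha r$ and the smaller threshold $\pwin{r_j}$ that a larger reserve implies.
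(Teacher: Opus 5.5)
Your Phase~1 argument is correct, and it differs from the paper's. The paper proves that $r+\pwin{r}$ is non-decreasing (\cref{lem:r+pwin(r)_increasing_in_r}, a three-case computation that uses the upper boundary of $\Usymb$). You instead note that this affine function is maximized at an endpoint of $[0,\rwin]$. That gives $r_i\le\max(\pwin{0},\rwin)\le 1$. This suffices for Phase~1, but it does not give $r_i\le\rwin$, which your crude Phase~2 reduction tacitly uses.

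\textbf{The gap is in Phase~2.} Your key inequality $f+\alpha r\le 1-\pwin{r_j}$ is false, so neither the crude nor the refined version can be proved. Take $\alpha=0.3$, $\beta=0.69$. This pair lies in $\Usymb$, and $f=\tfrac12$ because $\beta\le 1-\alpha$. Hence $\pwin{0}=0.25/0.655\approx 0.382$ and $\rwin=0.25/0.55\approx 0.455$.

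Let the input begin with $0.38$, $0.06$, $0.012$. Each is below the current threshold ($\pwin{0.38}\approx 0.0626$, $\pwin{0.44}\approx 0.0122$), so all three are reserved and $r=0.452$. Then $f+\alpha r=0.6356$, which exceeds $1-\pwin{0}\approx 0.618$ (and even $1-0.38=0.62$), although $0.38$ was reserved at level $r_j=0$.

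The obstruction is structural. A reserved item is controlled only by the reserve at the time it was reserved, while tiny later reservations keep increasing $\alpha r$. So no bound $y\le 1-f-\alpha r$ in terms of the final reserve can hold.

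Your prefix greedy fails on this instance. Suppose the next item is $x=0.625$ with $p=0$. Then $p+x+r>1$, the greedy stops at $y=0.38$ with $R'=\emptyset$, and $0.625<f+\alpha r$, so no win is certified. Yet $R'=\{0.012\}$ gives $0.637\in[0.6356,1]$, so \symb does win.

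\textbf{Two ingredients are missing.}
\begin{enumerate}
\item The correct per-item bound $y\le 1-f-\alpha r_j^{+}$, where $r_j^{+}$ is the reserve right after $y$ is reserved. This is \cref{lem:xcurr<=1-f-arcurr}, itself a case analysis over the branches of $f$. It yields $y\le\gamma+\sigma_y$, where $\gamma=1-f-\alpha r$ and $\sigma_y$ is the total size of the items reserved after $y$.
\item A witness that may skip an overflowing item instead of stopping there. The paper ranges over all subsets via the gap lemma (\cref{lem:subset_gaps}).
\end{enumerate}

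A first-fit variant of your greedy would also suffice. Add reserved items in reservation order whenever they still fit, and let $y$ be the last skipped item. Every item after $y$ was added, so the resulting set $R'$ satisfies $p+x+r'>1-y+\sigma_y\ge 1-\gamma=f+\alpha r$ and $p+x+r'\le 1$. This is the win condition with $P'=P$, giving the contradiction. This route never needs $r\le\rwin$, so combined with your Phase~1 it would give a complete alternative proof.
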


\begin{restatable}[Critical Rejection]{lemma}{LemmaCriticalRejection}\label{lem:critical_rejection}
    At most one item rejected by \algsymb is packed by \Opt.
\end{restatable}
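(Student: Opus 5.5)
The plan is to argue by contradiction. Rejections only happen in Phase~2, at \Lineref{line:reject}. So suppose \algsymb rejects two items $y = x_j$ and $x = x_i$ with $j < i$, both contained in $\Opt(I)$. Then $x + y \leq 1$. As in the common setup, assume \algsymb never wins early. Without an early win, the packed set only grows: items are removed only in \Lineref{line:win}, and no items are reserved after Phase~1. Hence $P_{j-1} \subseteq P_{i-1}$ and $r_{j-1} = r_{i-1} \ad r$. Since $y$ was rejected, $p_{j-1} + y > 1$, and therefore
\[
    x \leq 1 - y < p_{j-1} \leq p_{i-1}\;.
\]
So the earlier rejection certifies a lot of packed mass, while $x$ itself is small enough to fit after removing only part of that mass. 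The goal is to show that this situation lets \algsymb win early at step $i$ (or already at step $j$), contradicting the assumption.

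To exhibit a witness for \Lineref{line:checkwin} at step $i$, take $R' = \emptyset$ and let $P'$ range over the sets obtained from $P_{i-1}$ by removing packed items one at a time. The quantity $p' + x$ starts above $1$ (because $x$ was rejected) and decreases in steps of single item sizes. Meanwhile the required lower bound $f + \alpha r + \beta(p_{i-1} - p')$ increases. If every packed item is smaller than the width of the window, i.e.\ smaller than $1 - (f + \alpha r + \beta(p_{i-1} - p'))$ at the relevant point, then some $P'$ lands in the window. Two facts have to be established for this.

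\begin{enumerate}
    \item[(a)] \textbf{No packed item is too large.} A large item packed in Phase~2 would itself have triggered an early win when it arrived. To see this, I plan to use the Phase~1 exit condition $x \geq \pwin{r}$, the definition \cref{def:pwin}, and the identity \cref{eq:fo}.
    \item[(b)] \textbf{The window is not overshot from below.} When the removal stops, the packed mass retained must still satisfy $p' + x \geq f + \alpha r + \beta(p_{i-1} - p')$. For this I plan to use $x$ together with the retained mass. The bound $y > 1 - p_{j-1}$ gives a lower bound on $p_{j-1}$; combined with $f \leq 1/2$, $f \leq 1 - \alpha$, $1 - \beta f > 0$, and the packability invariant $p + r \leq 1$ from \cref{lem:packability_invariant}, this should give the required inequality.
\end{enumerate}

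If the window argument at step $i$ is awkward, an alternative is to work at step $j$ instead. Consider packing $y$ after removing part of $P_{j-1}$, and use that $x$ arriving later plays no role. Whichever step gives the cleaner inequality will be used.

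The main obstacle is (b): making the quantitative inequality go through uniformly on all of $\Usymb$, including all three branches of $f$. The cost $\beta(p - p')$ grows as items are removed, and $\beta$ can exceed $1$, so removing many items may be unaffordable. I expect to need the case split $\beta \leq 1 - \alpha$ versus larger $\beta$ from \cref{lem:f_piecewise}. In the large-$\beta$ case, the defining polynomial $Q_{\alpha,\beta}(\fo) = 0$ should be exactly what balances the removal cost against the guaranteed packed mass. I would first try the extreme configuration, where $P'$ drops exactly the items packed after step $j$ plus a minimal prefix, and verify that inequality there.
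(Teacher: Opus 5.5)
Your setup is sound: without an early win nothing is removed, $P_{j-1}\subseteq P_{i-1}$, the reserve is frozen, and $x\le 1-y<p_{j-1}\le p_{i-1}$. The gap is that the proof stops where the real work begins. Steps (a) and (b) are only announced, and (b) is the step you yourself flag as the obstacle.

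The planned vehicle, a removal-based win at step $i$ with $R'=\emptyset$, is the wrong witness. Keeping $f+\alpha r+\beta(p_{i-1}-p')$ below $p'+x$ is hard: $\beta$ can be close to $1+\alpha$, and you have no control over the sizes of the items you would remove. The facts you list do not deliver that inequality. Step (a) also uses the wrong tool. The Phase-1 exit condition $x_s\ge \pwin{r}$ is a \emph{lower} bound on the first Phase-2 item, so it gives no upper bound on what is packed. The first sentence of (a), that a large packed item would already have triggered a win, points in the right direction, but you need to push it to a bound on the \emph{total} packed mass.

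The missing idea is exactly that bound. Let $x_k$ be the last item packed before $x$ is rejected; one exists, since with an empty knapsack $x$ would have been packed greedily. Between step $k$ and step $i-1$ nothing changes, so $p_{i-1}=p_{k-1}+x_k$ and $r_{k-1}=r_{i-1}=r$. The packability invariant (\cref{lem:packability_invariant}) then gives $p_{k-1}+x_k+r\le 1$. So at step $k$, the move ``remove nothing, pack $x_k$ together with the \emph{entire} reserve'' satisfies the capacity part of \Lineref{line:checkwin}. Since \algsymb did not win, the lower part must fail: $p_{k-1}+x_k+r<f+\alpha r$, that is, $p_{i-1}<f-(1-\alpha)r\le f\le \frac12$.

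Hence every rejected item has size greater than $1-p_{i-1}>\frac12$, and no two of them fit together in $\Opt(I)$. In your notation this finishes the argument immediately. You already have $x<p_{i-1}$ and $x>1-p_{i-1}$, so $p_{i-1}>\frac12$, contradicting $p_{i-1}<f\le\frac12$. No removal costs, no dependence on $\beta$, and no case split over the branches of $f$ are needed. Your restriction to $R'=\emptyset$ discards precisely the reserve that makes this witness work, and the fallback of working at step $j$ has the same gap.
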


\begin{restatable}[Safe Rejection]{lemma}{LemmaSafeRejection}\label{lem:safe_rejection}
    Assume \algsymb rejects an item $x_i$ that is packed by \Opt. If $\pcurr \geq \pwin{\rcurr}$, then \algsymb achieves a fill ratio of $f$ on $I$.
\end{restatable}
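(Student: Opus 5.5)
The plan is to show directly that the final gain is at least $f$. Since $\opt(I)\le 1$, this gives the claimed fill ratio. By the standing assumption, \algsymb never wins early, so it never removes anything. Because it only rejects items in Phase~2, the reserve is frozen from the end of Phase~1 onward; write $r \da \rcurr$ for its size. The packed set only grows from step $i-1$ on, and by \cref{lem:packability_invariant} the final packing of $R$ in \Lineref{line:packall} is executable. Hence
\[
    \gainsymb_{\symb}(I) \;\geq\; \pcurr + r - \alpha r \;=\; \pcurr + (1-\alpha)\,r\;,
\]
and it suffices to prove $\pcurr + (1-\alpha) r \geq f$. I will argue by contradiction and assume $\pcurr + (1-\alpha) r < f$.

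Next I collect linear constraints on $(\pcurr, r, x_i)$ that are available at step $i$.
\begin{itemize}
    \item[(A)] The hypothesis $\pcurr \geq \pwin{r}$ gives $(1-\beta f)\,\pcurr + (1-\alpha-\alpha f)\,r \geq f^2$, using $1-\beta f>0$.
    \item[(B)] Since $x_i$ is rejected, $\pcurr + x_i > 1$.
    \item[(C)] The win test in \Lineref{line:checkwin} failed for $P'=\emptyset$, $R'=\emptyset$. As $x_i \le 1$, this forces $x_i < f + \alpha r + \beta \pcurr$.
    \item[(D)] The win test also failed for $P'=\emptyset$ and an arbitrary $R'\subseteq R$. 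So whenever $r' + x_i \le 1$, we get $r' + x_i < f + \alpha r + \beta\pcurr$.
\end{itemize}
Combining (B) and (C) yields $(1+\beta)\pcurr > 1 - f - \alpha r$. Together with the assumed $\pcurr < f - (1-\alpha) r$, this gives an upper bound on $r$ in terms of $f,\alpha,\beta$. Plugging $\pcurr < f-(1-\alpha)r$ into (A) gives the complementary inequality $(1-\beta f)f - f^2 > r\bigl[(1-\beta f)(1-\alpha) - (1-\alpha-\alpha f)\bigr]$, that is, $f(1-f-\beta f) > r f(\alpha - \beta + \beta\alpha)$.

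I then intend to show that these two constraints on $r$ are incompatible. The tools are the identity \cref{eq:fo} (equivalently $Q_{\alpha,\beta}(\fo)=0$), the bounds $f\le \fo$, $f\le \tfrac12$, $f\le 1-\alpha$, and the defining inequalities of $\Usymb$. I will likely split according to which term attains the minimum in \cref{def:f}, as in \cref{lem:f_piecewise}. When $f=\fo$ the combination should be exactly tight at $r=0$, which is a good sanity check, since $Q_{\alpha,\beta}(\fo)=0$ encodes precisely the case $r=0$ with $x_i\approx 1-\pcurr$.

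The main obstacle is that (B) and (C) alone may not suffice when $r$ is large, because then the gap has to come from the reserve. For this situation I would use (D) with $R'=R$: if $r + x_i \le 1$, it gives $x_i < f - (1-\alpha) r + \beta \pcurr$. Combined with (B) this gives $(1+\beta)\pcurr + (1-\alpha) r > 1-f$, which contradicts $\pcurr + (1-\alpha)r < f$ once $f\le \tfrac12$ and $\beta$ is not too large. If instead $r + x_i > 1$, I would pick a subset $R'\subseteq R$ with $1 - x_i - \pwin{r} < r' \le 1 - x_i$. This is possible because every reserved item is smaller than the threshold at the time it was reserved, and the threshold is nonincreasing in $r$ by \cref{eq:1-bf>0_1-a-af>=(1-a)^2>0}. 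Applying (D) to this $R'$ then bounds $x_i$ from below in a way that again contradicts the assumption. Carrying out this greedy subset choice and the resulting case algebra cleanly is the step I expect to require the most care.
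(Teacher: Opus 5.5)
\textbf{The gap.} Your reduction to the absolute bound $\pcurr + (1-\alpha)\rcurr \geq f$, using only $\opt(I)\le 1$, cannot work. That inequality is false, and its negation in fact holds at \emph{every} rejection. The argument in the proof of \cref{lem:critical_rejection} shows this: if $x_j$ is the last item packed before $x_i$, then packing $x_j$ together with the whole reserve was feasible by the packability invariant. The failed win test then gives $\pcurr + (1-\alpha)\rcurr < f$. So your contradiction hypothesis is always satisfied. You would need (A)--(D) to be mutually inconsistent, i.e., that \algsymb never rejects once $\pcurr \geq \pwin{\rcurr}$, and that is false.

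\textbf{A counterexample.} Take $\alpha = 0.3$ and $\beta = 0.6$. This point lies in $\Usymb$, and $f = \frac12$ there since $\fo \approx 0.511$. Let $I = (0.4, 0.7)$.
\begin{itemize}
\item Since $\pwin{0} = 0.25/0.7 \approx 0.357$, the item $x_1$ ends Phase~1 with an empty reserve. Winning would need $x_1 \geq \frac12$, so $x_1$ is packed.
\item For $x_2$, the win test with $P' = \emptyset$ needs $0.74 \leq 0.7$, and with $P' = \{x_1\}$ it needs $1.1 \leq 1$. So $x_2$ is rejected, with $\pcurr = 0.4 \geq \pwin{0}$.
\end{itemize}
All of (A)--(D) and your assumption hold, and the final gain is $0.4 < f$. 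The lemma is still true here only because $\opt(I) = 0.7$ and $0.4/0.7 > \frac12$. The bound must be relative to $\opt(I)$, not to $1$. Your ``tight at $r=0$'' sanity check has the same confusion. The threshold $\pwin{0} = f^2/(1-\beta f)$ is tight for the ratio $\pcurr/(f+\beta\pcurr)$, not for $\pcurr$ versus $1$.

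\textbf{The missing idea} is to use the structure of $\Opt(I)$.
\begin{enumerate}
\item By \cref{lem:critical_rejection}, $x_i$ is the only item of $\Opt(I)$ that \algsymb rejects. Hence $\opt(I) \leq \popt + \plater + x_i + \ropt$ with $\Popt \subseteq \Pcurr$ and $\Ropt \subseteq \Rcurr$.
\item The gain is $\pcurr + \plater + (1-\alpha)\rcurr$. Dropping $\plater$ from both sides only lowers a ratio that is at most $1$.
\item Because \Opt packs $\Popt$, $\Ropt$, and $x_i$ together, the win test in \Lineref{line:checkwin} with $P' = \Popt$ and $R' = \Ropt$ satisfies its upper bound. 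Its lower bound must therefore fail:
\[
\popt + x_i + \ropt < f + \alpha\rcurr + \beta(\pcurr - \popt) \leq f + \alpha\rcurr + \beta\pcurr\;.
\]
Your constraints (C) and (D) only use $P' = \emptyset$, so they never exploit this link between $x_i$ and what \Opt actually packs.
\item It then suffices to show $\pcurr + (1-\alpha)\rcurr \geq f\,(f + \alpha\rcurr + \beta\pcurr)$. Using $1-\beta f > 0$, this rearranges exactly to $\pcurr \geq \pwin{\rcurr}$.
\end{enumerate}
This route needs no case split on which term attains $f$ and no subset-sum argument.
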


We first apply the three lemmas to prove \cref{thm:alg_achieves_f}.

\begin{proof}[of \cref{thm:alg_achieves_f}]
    We may assume that \algsymb does not win early  as that would already guarantee the desired fill ratio.
    Then \cref{lem:packability_invariant} gives the packability invariant, $p_i + r_i \leq 1$, at every step; in particular $p_n + r_n \leq 1$, so the closing step (\Lineref{line:packall}) in which \algsymb packs its entire reserve is feasible. 
    Since \algsymb does not win early, it does not remove anything, so the final packing has size $p_n + r_n$ and incurs only reservation cost $\alpha r_n$, for a total gain of $p_n + (1 - \alpha) r_n$.

    Suppose first that $\Opt(I)$ does not contain any items rejected by \algsymb. 
    Then, every item of $\Opt(I)$ is packed or reserved by \algsymb; we write $\Popt$ and $\Ropt$ for its packed and reserved parts, such that $\opt(I) = \popt + \ropt$.
    As $\Popt \subseteq P_n$ and $\Ropt \subseteq R_n$, \algsymb has a gain of
    \[
        p_n + (1 - \alpha) r_n \geq \popt + (1 - \alpha) \ropt \geq (1 - \alpha) (\popt + \ropt) \geq f \cdot \opt(I),
    \]
    because $f \leq 1 - \alpha$.

    Otherwise, some rejected item is contained in $\Opt(I)$, and by \cref{lem:critical_rejection}, there is exactly one such item $x_i$. 
    As Phase 1 rejects nothing, $x_i$ was rejected in Phase 2. 
    Let $x_s$ be the item that triggered the transition to Phase 2: it failed the Phase-1 test in \Lineref{line:phase1check}, so $x_s \geq \pwin{r_{s - 1}}$, and since \algsymb does not win early and its knapsack is empty when $x_s$ arrives, $x_s$ is packed greedily. 
    Phase~2 makes no reservations and, as \algsymb does not win early, removes nothing; thus the reserve stays fixed and the packed size never decreases from the arrival of $x_s$ on.
    Hence, immediately before $x_i$ is rejected, we have $\pcurr \geq \pwin{\rcurr}$. \cref{lem:safe_rejection} therefore applies and certifies that this rejection is safe, so \algsymb attains its target fill ratio $f$ in this case as well.
\end{proof}

We now sketch the proofs of \cref{lem:packability_invariant,lem:critical_rejection,lem:safe_rejection}; the full proofs can be found in \cref{sec:proofs}.

\begin{proof}[of \cref{lem:packability_invariant}, sketch]
    In Phase~1, nothing is packed, so the invariant reduces to $r_i \leq 1$. The safety threshold is designed so that $r + \pwin{r}$ is non-decreasing in the reserve size $r$. Since every reserved item is smaller than the current threshold, the reserve can never overtake a certain value $\rwin \leq 1$ at which the threshold drops to $0$. For Phase~2, we argue by induction and consider a greedily packed item $x_i$. Ranging over all subsets $R'$ of the reserve, the condition of \Lineref{line:checkwin} accepts every item size in the interval $[f + \alpha \rcurr - r' - \pcurr,\, 1 - r' - \pcurr]$; since reserved items are small (they passed the Phase-1 test), a combinatorial argument shows that these intervals chain together and jointly cover $[f - (1 - \alpha) \rcurr - \pcurr,\, 1 - \pcurr]$ without gaps. Since \algsymb does not win, the packed item is outside of this range; and since it fits, it must lie below it; hence $p_i + r_i \leq f + \alpha \rcurr \leq 1$. The full proof is deferred to \cref{sub:packability}.
\end{proof}

\begin{proof}[of \cref{lem:critical_rejection}, sketch]
    We show that every item $x_i$ rejected by \algsymb satisfies $x_i > \frac{1}{2}$; since two such items never fit together, \Opt packs at most one of them. \algsymb rejects only in \Lineref{line:reject}, so a rejected $x_i$ did not fit, \ie, $x_i > 1 - \pcurr$; as $f \leq \frac{1}{2}$ by \cref{def:f}, it suffices to show $\pcurr \leq f$. Consider the last item packed before the rejection: packing it together with the entire reserve was feasible by \cref{lem:packability_invariant}. Since \algsymb does not win early, the win condition in \Lineref{line:checkwin} must have failed in its lower bound, which yields $\pcurr < f$. The full proof is deferred to \cref{subsec:crit_reject}.
\end{proof}

\begin{proof}[of \cref{lem:safe_rejection}, sketch]
    Since \algsymb does not win early, it packs its reserve at the end, and in Phase~2 the reserve is frozen; so its gain is $\pcurr + \plater + (1 - \alpha) \rcurr$, where $\Plater$ denotes the items packed after the rejection of $x_i$. By \cref{lem:critical_rejection}, all other items of $\Opt(I)$ are packed or reserved by \algsymb, so $\opt(I) \leq \popt + \plater + x_i + \ropt$ for the parts $\Popt \subseteq \Pcurr$ and $\Ropt \subseteq \Rcurr$ that \Opt takes from the first $i - 1$ items. Dropping the common term $\plater$ only decreases the fill ratio, so it suffices to show $\pcurr + (1 - \alpha) \rcurr \geq (\popt + x_i + \ropt) \cdot f$. Since \Opt packs $\Popt$, $x_i$, and $\Ropt$ together, the win condition in \Lineref{line:checkwin} satisfies its upper bound for $P' = \Popt$ and $R' = \Ropt$. As \algsymb did not win on $x_i$, the lower bound of \Lineref{line:checkwin} must have failed, giving $\popt + x_i + \ropt < f + \alpha \rcurr + \beta \pcurr$. Substituting this and solving for $\pcurr$ (using $1 - \beta f > 0$) turns the sufficient inequality into exactly $\pcurr \geq \pwin{\rcurr}$, which holds by assumption; indeed, the safety threshold \cref{def:pwin} is precisely the solution of this inequality. The full proof is deferred to \cref{subsec:safe_reject}.
\end{proof}

\subsection{Lower Bound}\label{sub:symbiosis:lower_bound}

We complement \cref{thm:alg_achieves_f} with matching lower bounds. Recall from \cref{def:f} that the target fill ratio is $f(\alpha, \beta) = \min(\fo(\alpha, \beta), \frac{1}{2}, 1 - \alpha)$; accordingly, we match each branch of the minimum by a separate lower bound: \cref{thm:globaltwo} matches the branch $f(\alpha, \beta) = \frac{1}{2}$, \cref{thm:lowerboundsymbiosistrivial} the branch $f(\alpha, \beta) = 1 - \alpha$, and \cref{thm:lowerboundf} the branch $f(\alpha, \beta) = \fo(\alpha, \beta)$.

The first bound holds on the entire positive quadrant, so we will reuse it in \cref{sec:dominance}. Its short proof is deferred to \cref{sub:lb_symbiosis_proofs}.

\begin{restatable}{theorem}{thmGlobalTwo}\label{thm:globaltwo}
    For any $\alpha>0$ and $\beta>0$, no algorithm can have a better competitive ratio than $2$.
\end{restatable}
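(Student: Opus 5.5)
The plan is an adversarial construction built from a decreasing sequence of items all slightly larger than $\frac{1}{2}$. Any two such items never fit together, and each one has an exact \emph{complement} that completes it to a full knapsack. Fix $\alpha, \beta > 0$ and set $c \da \min(\alpha, \beta) > 0$. Choose a small $\eps > 0$ and a number of rounds $m$ with $(m-1)\cdot c/2 \geq 1$. The adversary presents $y_j \da \frac{1}{2} + \eps/j$ for $j = 1, 2, \dots, m$, so that $y_1 > y_2 > \dots > y_m > \frac{1}{2}$. After each $y_j$ it inspects the state of the online algorithm \alg. Call $y_j$ \emph{available} after its arrival if it is currently packed or reserved, \ie, if it could still end up in the final knapsack.

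First, the stopping rule. If after processing $y_j$ the item $y_j$ is not available, the adversary sends the complement $z_j \da 1 - y_j = \frac{1}{2} - \eps/j$ and ends the instance. Then $\opt = 1$, witnessed by $\{y_j, z_j\}$. Any feasible final packing of \alg contains at most one item larger than $\frac{1}{2}$. If that item is some $y_k$ with $k < j$, then $z_j$ does not fit alongside it, because $y_k + z_j > 1$. The item $y_j$ itself is unavailable, and the items $y_{k}$ with $k > j$ were never shown. Hence \alg's final packing is a single item, of size at most $y_1 = \frac{1}{2} + \eps$, and its gain is at most $\frac{1}{2} + \eps$. The ratio is therefore at least $1/(\frac{1}{2} + \eps)$, which tends to $2$ as $\eps \to 0$. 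Also, if \alg rejects $y_1$ outright, the adversary simply ends the instance and \alg has gain $0$.

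Second, the case where \alg keeps every $y_j$ available for all $m$ rounds. I claim each round $j \geq 2$ costs \alg at least $c \cdot \frac{1}{2}$. Either $y_j$ is reserved, costing $\alpha y_j > \alpha/2$, or $y_j$ is packed. In the latter case the knapsack cannot simultaneously hold any earlier $y_k$. If some earlier $y_k$ was packed at that moment, it must be removed, costing $\beta y_k > \beta/2$. If no earlier item is packed, I charge the round to the reservation of $y_{j-1}$. This is legitimate because $y_{j-1}$ was available, hence packed or reserved: if it was packed it must now be removed, and if it was reserved it already paid $\alpha y_{j-1}$.

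I expect this charging to be the delicate step. The cleanest formulation is to charge round $j$ to $y_{j-1}$ itself: every $y_{j-1}$ with $j \geq 2$ is available but cannot be in the final packing, since at most one big item ends up packed and I may assume it is $y_m$. So $y_{j-1}$ was either reserved, paying $\alpha y_{j-1}$, or packed and later removed, paying $\beta y_{j-1}$. This gives a total cost of at least $(m-1)\cdot c/2 \geq 1$. The instance then ends with no complement, so $\opt = y_1 > 0$ while \alg's gain is at most $1 - 1 \leq 0$, and the competitive ratio is infinite.

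Combining both cases, every algorithm has ratio at least $1/(\frac{1}{2} + \eps)$ for every $\eps > 0$, so no ratio better than $2$ is achievable.
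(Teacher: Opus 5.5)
Your proof is correct and follows essentially the paper's argument. Like the paper, you present a decreasing sequence of items just above $\frac{1}{2}$ and end with the exact complement as soon as the current item is not kept, forcing a ratio of about $2$; otherwise you charge at least $\min(\alpha,\beta)/2$ to each of at least $m-1$ items for a reservation or removal, which drives the gain to at most $0$. Only the ``I may assume it is $y_m$'' step needs rewording. It suffices to note that at most one of the $m$ big items is in the final packing, so at least $m-1$ of them paid $\alpha y_j$ or $\beta y_j$; your tentative first charging scheme can then be dropped.
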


The two remaining bounds are tailored to $\Usymb$. Their proofs are adversarial decision trees: at each node, the adversary reacts to the algorithm's choice among packing, reserving, and rejecting with the next item or by ending the instance, and every leaf is shown to force a competitive ratio of at least the claimed bound. The proof of \cref{thm:lowerboundsymbiosistrivial} is a tree on three items (\cref{fig:lbsymtriv}); we provide a proof sketch here as a compact template for this technique and defer the full proof to \cref{sub:lb_symbiosis_proofs}.

\begin{figure}
    \centering
    \begin{tikzpicture}[
    font=\large,
    >=Latex
]
    \def\rejdist{3cm}
    \def\resdist{4.5cm}
    
    \node[itemsettings] (x1)
        {$x_1 \da \frac{1-\alpha}{2-\alpha}$};
    \node[itemsettings, ifpack=x1] (x2)
        {$x_2\da 1-x_1+\eps$};
    \node[itemsettings, ifpack=x2] (x3)
        {$x_3\da 1-x_1$};

    \node[leafsettings,
          ifreject=x1] (x1rej)
        {$c = \infty$};
    \node[leafsettings,
          ifreserve=x1] (x1res)
        {$c = \frac{1}{1-\alpha}$};
    \node[leafsettings,
          ifreject=x2] (x2rej)
        {$c = \frac{x_2}{x_1}$};
    \node[leafsettings,
          ifreserve=x2] (x2res)
        {$c = \min\Big(\frac{x_2}{x_1-\alpha\cdot x_2},\frac{x_2}{x_2-\beta\cdot x_2-\alpha\cdot x_1}\Big)$};
    \node[leafsettings, 
            ifreject=x3] (x3rej)
        {$c=\frac{1}{x_2-\beta\cdot x_1}$};

    \draw[rejarr] (x1.west) -| (x1rej.north);
    \draw[packarr] (x1.south) -- (x2.north);
    \draw[resarr] (x1.east) -| (x1res.north);
    \draw[rejarr] (x2.west) -| (x2rej.north);
    \draw[packarr] (x2.south) -- (x3.north);
    \draw[resarr] (x2.east) -| (x2res.north);
    \draw[rejarr] (x3.west) -| (x3rej.north);
    \draw[rejarr] (x2.west) -| (x2rej.north);
\end{tikzpicture}
    \caption{Structure of the lower bound analyzed in \cref{thm:lowerboundsymbiosistrivial}. Downward arrows (green) represent the choice to pack an item, while leftward arrows (red, dashed) and rightward arrows (blue, dotted), respectively, represent rejecting or reserving an item. }
    \label{fig:lbsymtriv}
\end{figure}

\begin{restatable}{theorem}{lowerBoundSymbiosisTrivial}\label{thm:lowerboundsymbiosistrivial}
    For any $(\alpha,\beta)$ in $\Usymb$,  no algorithm can have a better competitive ratio than $1/(1-\alpha)$.
\end{restatable}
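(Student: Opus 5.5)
We prove the bound with the three-item adversarial tree of \cref{fig:lbsymtriv}. Fix $(\alpha,\beta)\in\Usymb$ and a small $\eps>0$. The adversary first presents $x_1 \da \frac{1-\alpha}{2-\alpha}$. Note that $1-x_1 = \frac{1}{2-\alpha}$, so $\frac{1-x_1}{x_1} = \frac{1}{1-\alpha}$. We go through the algorithm's choice at each node and show that every leaf has ratio at least $\frac{1}{1-\alpha}$, up to terms vanishing as $\eps\to0$. Since the ratio is an infimum over all algorithms and $\eps$ is arbitrary, this gives the claim.

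\textbf{The $x_1$ node.} If $x_1$ is rejected, the instance ends with gain $0$, so the ratio is $\infty$. If $x_1$ is reserved, the instance ends. The best the algorithm can do is pack $x_1$, for gain $(1-\alpha)x_1$ against $\opt = x_1$, i.e.\ ratio $\frac{1}{1-\alpha}$.

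\textbf{The $x_2$ node.} If $x_1$ is packed, the adversary sends $x_2 \da 1-x_1+\eps$. This item does not fit next to $x_1$, and $\opt = x_2$.
\begin{itemize}
\item Rejecting $x_2$ ends the instance with ratio $\frac{x_2}{x_1} \to \frac{1}{1-\alpha}$.
\item Reserving $x_2$ also ends the instance. The algorithm either keeps $x_1$ and wastes the reservation, with gain $x_1-\alpha x_2 < x_1$. Or it removes $x_1$ and packs $x_2$, with gain at most $(1-\alpha)x_2-\beta x_1 < (1-\alpha)x_2$. Both ratios exceed $\frac{1}{1-\alpha}$, using $x_2 > x_1$ in the first case.
\item Packing $x_2$ forces removing $x_1$, costing $\beta x_1$.
\end{itemize}

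\textbf{The $x_3$ node.} After $x_2$ has been packed, the adversary sends $x_3 \da 1-x_1$ and ends the instance. Now $\opt = x_1+x_3 = 1$, since $x_1$ and $x_3$ fit together. The algorithm holds only $x_2$; $x_1$ is gone. Its best option is to keep $x_2$, for gain $x_2-\beta x_1$. Swapping $x_2$ for the smaller $x_3$ costs an additional $\beta x_2$ and is worse. So the ratio is $\frac{1}{x_2-\beta x_1}$.

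\textbf{The critical leaf.} This last leaf is the only one that uses the position of $(\alpha,\beta)$ in $\Usymb$, and I expect it to be the main step. Letting $\eps\to 0$, we need $\frac{1-\beta(1-\alpha)}{2-\alpha} \le 1-\alpha$. This is equivalent to $1 \le (1-\alpha)(2-\alpha+\beta)$, i.e.\ to
\[
\beta \ge \frac{1}{1-\alpha} - 2 + \alpha = \frac{3\alpha-\alpha^2-1}{1-\alpha}.
\]
This is exactly the lower boundary $\max\bigl(\tfrac12, \tfrac{3\alpha-\alpha^2-1}{1-\alpha}\bigr) < \beta$ in \cref{def:Usymb}, so the inequality holds strictly on $\Usymb$.

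\textbf{Remaining care.} One must check that no mixed strategy escapes the tree. For example, at the $x_2$ node the algorithm could remove $x_1$ and reject $x_2$; this only lowers its gain compared with the analysed options. Reservations made earlier are already accounted for, since every non-packing choice ends the instance immediately. Taking the limit $\eps\to0$ then completes the proof.
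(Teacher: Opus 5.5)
Your proposal is correct and follows essentially the same route as the paper: the same three-item adversarial tree with $x_1 = \frac{1-\alpha}{2-\alpha}$, $x_2 = 1-x_1+\varepsilon$, $x_3 = 1-x_1$, and the same use of the lower boundary $\beta > \frac{3\alpha-\alpha^2-1}{1-\alpha}$ of $\Usymb$ at the critical leaf $1/(x_2-\beta x_1)$. The one check you leave implicit at the $x_3$ node is that $x_2$ and $x_3$ cannot be packed together. The paper verifies this via $x_3 = \frac{1}{2-\alpha} > \frac{1}{2}$, i.e.\ $x_2 + x_3 > 1$, and you should state it as well.
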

\begin{proof}[sketch]
    Consider any deterministic algorithm \alglower. The structure of the proof is shown in \cref{fig:lbsymtriv}. \alglower first receives an item $x_1=(1-\alpha)/(2-\alpha)<1/2$. If \alglower rejects $x_1$, its competitive ratio is unbounded. If \alglower reserves $x_1$, the instance stops and its competitive ratio is at least $x_1/(x_1-\alpha\cdot x_1)=1/(1-\alpha)$.
    
    We can therefore assume that \alglower packs $x_1$. It then receives an item $x_2=1-x_1+\eps>x_1$, for some sufficiently small $\eps>0$.
    If \alglower reserves $x_2$, the instance stops. Whatever \alglower does, its gain is then less than $x_2-\alpha\cdot x_2$, so its competitive ratio is greater than $x_2/(x_2-\alpha\cdot x_2)=1/(1-\alpha)$.

    If \alglower rejects $x_2$, its competitive ratio is $x_2/x_1>1/(1-\alpha)$. We can therefore assume that \alglower packs $x_2$ and, since $x_1+x_2>1$, removes $x_1$. It then receives a final item $x_3=1-x_1$ completing $\Opt(I)$ to $1$. Since no items arrive afterwards, reserving $x_3$ is clearly suboptimal. Since $x_3<x_2$, there is also no benefit in removing $x_2$. \alglower's competitive ratio is therefore at least
    \begin{equation}\label{eq:simpleboundsketch}
    \frac{1}{x_2-\beta\cdot x_1}=\frac{1}{1-(1+\beta)\cdot x_1+\eps}>\frac{1}{1-\alpha+\eps}\xrightarrow{\eps\to\, 0}\frac{1}{1-\alpha}\;,
    \end{equation}
    since the fact that $(1+\beta)\cdot x_1>\alpha$ follows from the definition of $\Usymb$ in \cref{def:Usymb-f}.
\end{proof}

\cref{thm:lowerboundf} is the technical heart of the lower-bound side. The proof considers instances of up to four items and branches on all possible decisions an algorithm can take; the item sizes are tuned so that the defining identity \cref{eq:fo} of $\fo$ makes the decisive leaves evaluate to exactly $1/\fo$ in the limit. We state the result here and defer the full case analysis, together with the tree (\cref{fig:lbsym}), to \cref{sub:lb_symbiosis_proofs}.

\begin{restatable}{theorem}{thmLowerBoundf}\label{thm:lowerboundf}
    For any $(\alpha,\beta)\in\Usymb$ satisfying $\beta>1-\alpha$, no algorithm can have a better competitive ratio than $1/\fo(\alpha,\beta)$.
\end{restatable}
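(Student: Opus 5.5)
We will build an adversarial decision tree on at most four items, in the spirit of the three-item tree of \cref{thm:lowerboundsymbiosistrivial}, tuned so that the decisive leaves evaluate to $1/\fo$ in the limit $\eps \to 0$. Throughout we write $\fo = \fo(\alpha,\beta)$ and use the identity \cref{eq:fo}, $\fo^2/(1-\beta\fo) = (1-\fo)/(1+\alpha)$. The hypothesis $\beta > 1-\alpha$ together with $(\alpha,\beta)\in\Usymb$ should give $\fo \le \min(\tfrac12, 1-\alpha)$, so that the other branches of $f$ are not binding. It also gives $1-\beta\fo>0$.

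\textbf{First item.} The adversary sends $x_1 \da \fo + \eps$ (or a small item tuned near $\fo^2/(1-\beta\fo)$; I would first try $x_1$ equal to $y \da \fo^2/(1-\beta\fo)$). Rejecting ends the instance with an unbounded ratio, so the algorithm must pack or reserve.

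\textbf{Branch 1: the algorithm reserves $x_1$.} The adversary sends a large item $x_2 \da 1 - x_1 + \eps$ that does not fit with $x_1$.
\begin{itemize}
\item If the algorithm rejects $x_2$, the instance ends and the ratio is about $x_2/((1-\alpha)x_1)$.
\item If it packs $x_2$, the adversary sends $x_3 = 1-x_2$ (or ends), forcing a ratio of about $1/(x_2 - \alpha x_1)$.
\item If it reserves $x_2$ as well, the ratio is about $x_2/((1-\alpha)x_2 - \alpha x_1)$.
\end{itemize}
The sizes are chosen so that each of these leaves is at least $1/\fo$. The relation $(1+\alpha)y = 1-\fo$, i.e.\ $1 - (1+\alpha) y = \fo$, is exactly what makes the leaf $1/(1-x_1-\alpha x_1)$ equal $1/\fo$.

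\textbf{Branch 2: the algorithm packs $x_1$.} The adversary sends an item $x_2 > 1 - x_1$ that forces removal or rejection. The options are:
\begin{itemize}
\item Rejecting $x_2$ gives ratio $x_2/x_1$.
\item Reserving $x_2$ leads to a further item, a subcase.
\item Swapping (remove $x_1$, pack $x_2$) costs $\beta x_1$; the adversary then sends $x_3 = 1-x_1$, completing $\Opt$ to $1$ and yielding a ratio of about $1/(x_2 - \beta x_1)$.
\end{itemize}
Choosing $x_2$ so that $x_2/x_1 = 1/\fo$, i.e.\ $x_2 = x_1/\fo$, and checking that $x_2 - \beta x_1 = x_1(1-\beta\fo)/\fo = \fo$ via \cref{eq:fo}, makes both leaves equal $1/\fo$. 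This is the key calibration and indicates that $x_1 = y = \fo^2/(1-\beta\fo)$ is the right choice. In the reserve subcase, a fourth item (something like $1 - x_2$, or a large item incompatible with $x_2$) is used to punish holding a paid reservation alongside a packed $x_1$. The ratio there is $\max$ of expressions involving $x_1 - \alpha x_2$ and $x_2 - \beta x_1 - \alpha x_2$.

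\textbf{Main obstacle.} The hardest part is verifying that all the non-decisive leaves, especially the reserve branches, really have ratio at least $1/\fo$ on the whole subregion $\beta > 1-\alpha$ of $\Usymb$. One also needs $x_1 + x_2 > 1$, i.e.\ $y(1+1/\fo) > 1$, which should reduce via $Q_{\alpha,\beta}(\fo)=0$ to an inequality like $\beta > 1-\alpha$ or $\fo<1/2$. I would reduce each leaf inequality to a polynomial inequality in $\fo$ using $Q_{\alpha,\beta}(\fo)=0$, then invoke the defining bounds of $\Usymb$, in particular $\beta < 1+\alpha$ and the upper curve, together with the monotonicity from \cref{lem:f_non-increasing}. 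Since the natural three-item tree may not close in the Branch 1 reserve case, the fourth item will be added precisely there.
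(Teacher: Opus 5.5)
Your first item and your pack branch match the paper's. By \cref{eq:fo}, $\fo^2/(1-\beta\fo)=(1-\fo)/(1+\alpha)$, and $x_2=x_1/\fo$ is the paper's $\fo+\beta x_1$. Three corrections on that branch:
\begin{itemize}
\item $x_1+x_2>1$ reduces to $(\beta-\alpha)(1-\fo)>0$, i.e.\ to \cref{lemma:betagreateralpha}, not to $\beta>1-\alpha$.
\item If the algorithm reserves $x_2$, no further item is needed; you can end the instance. Keeping $x_1$ leaves gain at most $x_1$. Swapping gives $(1-\alpha)x_2-\beta x_1<x_1$, which follows from $\fo<\frac12$; this is one place where $\beta>1-\alpha$ enters.
\item Your aside that the hypotheses also force $\fo\le 1-\alpha$ is false for $\alpha>\frac12$, but nothing needs it.
\end{itemize}

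The genuine gaps are in the reserve branch. First, suppose the algorithm reserves $x_1$ and packs $x_2=1-x_1+\eps$. Your completing item $1-x_2=x_1-\eps$ fits next to the packed $x_2$, so the algorithm earns $1-\alpha x_1$ and you only obtain ratio $1/(1-\alpha x_1)$. Ending the instance instead gives only $(\fo+\alpha)/((1+\alpha)\fo)<1/\fo$. The completing item must pair with the reserved $x_1$, i.e.\ be $1-x_1$. This requires $x_2>x_1$, equivalently $2\fo>1-\alpha$, which follows from $\fo>1/(2+\alpha)$. It also creates a leaf your plan never addresses: the algorithm removes $x_2$ and packs $x_1$ together with $1-x_1$, earning $1-\alpha x_1-\beta x_2$. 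Bounding this by $\fo$ is equivalent to $(1+\beta)x_2\ge 1$, i.e.\ $\fo\ge(1-\alpha\beta)/(1+\beta)$. The paper proves this by evaluating $Q_{\alpha,\beta}$ at $(1-\alpha\beta)/(1+\beta)$ and controlling the resulting cubic in $\beta$. That uses $\beta\ge 1-\alpha$ when $\alpha\le\frac12$ and $\beta>\alpha$ otherwise.

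Second, if the algorithm reserves both $x_1$ and $x_2$ and the instance ends, the ratio tends to $(\fo+\alpha)/(\fo-\alpha^2)$. This is at least $1/\fo$ iff $\fo^2-(1-\alpha)\fo+\alpha^2\ge 0$. That holds for $\alpha\ge\frac13$, since the discriminant is $(1-3\alpha)(1+\alpha)$. It fails, e.g., at $(\alpha,\beta)=(0.3,0.79)$, where $\fo=\frac{25}{51}$ and the ratio is about $1.97<2.04=1/\fo$.

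You anticipate that an extra item is needed here but do not construct it, and it is the one genuinely new ingredient of the tree. For $\alpha<\frac13$ the adversary sends $z_1=\fo+\alpha(x_1+x_2)$, a valid size since $\fo\le\frac12$ and $\alpha<\frac13$. It is calibrated so that packing it yields gain exactly $\fo$, because $x_1$, $x_2$, $z_1$ are pairwise incompatible. If $z_1$ is packed, the adversary finishes with $1-x_2$, which again completes the optimum with a reserved item. You then have to verify three leaves:
\begin{itemize}
\item Rejecting $z_1$ gives ratio $z_1/(x_2-\alpha(x_1+x_2))\ge 1/\fo$. This is a quadratic inequality in $\fo$, settled via $\fo\ge 1/(2+\alpha)$ and $\alpha>\frac14$.
\item Reserving $z_1$ is no better, because $z_1-x_2<\alpha z_1$.
\item Swapping $z_1$ for the final item reduces to the inequality $(1+\beta)x_2\ge 1$ above.
\end{itemize}
Without this construction, your tree does not certify $1/\fo$ at points such as $(0.3,0.79)$.
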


Together with \cref{thm:alg_achieves_f}, these three bounds establish $\cknapresrem(\alpha, \beta) = 1/f(\alpha, \beta)$ on $\Usymb$; the formal assembly is part of the proof of \cref{thm:results} in \cref{sec:assembly}.

\section{Dominance}\label{sec:dominance}

In this section, we discuss that, outside of \Usymb, the equilibrium border is sharp: on $\Ures \setminus \Usymb$, a reservation-only algorithm is already optimal, while on $\Urem \setminus \Usymb$, a removal-only algorithm is.

As reservation and removal can always be left unused, an algorithm for \knapresrem can in particular simulate any algorithm that uses only one of the two mechanisms. This gives us a trivial upper bound:

\begin{observation}\label{obs:single_mechanism_upper}
    For every $(\alpha, \beta)$, we have $\cknapresrem(\alpha, \beta) \leq \min (\cknapres(\alpha), \cknaprem(\beta))$.
\end{observation}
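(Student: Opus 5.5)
This observation follows by simulation: we show that any algorithm for \knapres and any algorithm for \knaprem can be run, unchanged, as an algorithm for \knapresrem with the same gain on every instance. Taking infima over each class then gives both inequalities.

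\textbf{Reservation-only algorithms.} Fix an online algorithm $\alg$ for \knapres. Define $\alg'$ for \knapresrem to take exactly the decisions of $\alg$: pack, reserve, or reject each arriving item, and pack reserved items at the same times. $\alg'$ never removes anything. The two models have the same information, actions and capacity constraint once removal is excluded, so $\alg'$ is a valid online algorithm. On every instance $I$ it has the same packed set $P$ and reserved set $R$, and $D = \emptyset$. Hence
\[
    \gainsymb_{\alg'}(I) = p - \alpha r - \beta \cdot 0 = \gainsymb_\alg(I),
\]
and $c_{\alg'} = \calg$. Taking the infimum over all \knapres algorithms gives $\cknapresrem(\alpha,\beta) \leq \cknapres(\alpha)$. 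For $\alpha \geq 1$ this holds trivially, since we set $\cknapres(\alpha) = \infty$ there.

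\textbf{Removal-only algorithms.} Fix an online algorithm $\alg$ for \knaprem and let $\alg'$ mimic it, never reserving. The feasible actions coincide, so $\alg'$ is valid. We have $R = \emptyset$, and $P$ and $D$ agree with those of $\alg$, so the gains are identical. Taking the infimum gives $\cknapresrem(\alpha,\beta) \leq \cknaprem(\beta)$.

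Combining the two bounds gives the minimum. The only step needing care is checking that the models match exactly when one mechanism is switched off. This includes the end-of-input behaviour: \knapres algorithms may pack reserved items after learning the input has ended, and \knapresrem explicitly allows the same. Beyond that, nothing needs to be verified.
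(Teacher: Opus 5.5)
Your proof is correct and follows the same simulation argument the paper gives. Since either mechanism can be left unused, a \knapresrem algorithm can run any reservation-only or removal-only algorithm with identical gain, and taking infima gives both bounds; you simply make explicit the details the paper leaves implicit (matching packed, reserved and removed sets, end-of-input handling, and the $\alpha \geq 1$ convention).
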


We show successively that this bound is tight for various ranges of $\alpha$ and $\beta$.
In those regions where a reservation-only algorithm or a removal-only algorithm already achieves a competitive ratio of $2$, we have the corresponding lower bound of \cref{thm:globaltwo}. The other results are more involved. We defer their proofs to \cref{sec:lower_dominance}. 
The first two, \cref{thm:lowerbound_reservation_small,thm:lowerbound_reservation_medium}, are based on the lower bounds for the reservation-only case by \citefull{BBFHLR22} and show that these bounds still hold as long as $\beta$ is not too small compared to $\alpha$:

\begin{restatable}{theorem}{theoremLowerReservationSmall}\label{thm:lowerbound_reservation_small}
    For $1/4<\alpha\leq \sqrt{2}-1$ and $\beta\geq \tilde{\beta}(\alpha)\coloneq\frac{\alpha\cdot (3+\sqrt{5-4\alpha}-2\alpha)}{2(1-\alpha)}$, no algorithm can have a better competitive ratio than 
    \[\lowerbound\da\frac{1+\sqrt{5-4\alpha}}{2(1-\alpha)}\;.\]
\end{restatable}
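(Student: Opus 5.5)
The plan is to take the adversarial decision tree behind the reservation-only lower bound of \citefull{BBFHLR22} for $\frac{1}{4} < \alpha \leq \sqrt{2} - 1$ and show that it survives once removal is allowed, provided $\beta \geq \tilde{\beta}(\alpha)$.

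Two algebraic facts will do most of the work, and I would record them first. First, $\lowerbound$ is the positive root of $(1 - \alpha)\lowerbound^2 - \lowerbound - 1 = 0$, i.e., $(1 - \alpha)\lowerbound = 1 + 1/\lowerbound$. Second, the threshold simplifies to $\tilde{\beta}(\alpha) = \alpha(\lowerbound + 1)$, because $\frac{3 - 2\alpha + \sqrt{5 - 4\alpha}}{2(1 - \alpha)} = \lowerbound + 1$. I expect every removal leaf to reduce, via the first identity, to exactly the inequality $\beta \geq \alpha(\lowerbound + 1)$.

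Concretely, I would fix a deterministic algorithm \alglower and run the adversary of \cite{BBFHLR22}. It presents items whose sizes are tuned to $\lowerbound$, and in every branch it either stops or sends a complementary item. In the original tree, the sizes force a ratio of at least $\lowerbound$ at every leaf whenever \alglower rejects, reserves, or packs (up to an $\eps \to 0$ limit). I would take over all these leaves unchanged. Removal can only help \alglower at a node where it has packed some item $y$ and then receives an item $z$ with $y + z > 1$. At such a node, \alglower has one new option: remove $y$ (together with any other packed items) at cost $\beta y$ and pack $z$.

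For each such node I would add a new leaf: the adversary ends the instance (or sends the completing item $1 - y$, as in the proof sketch of \cref{thm:lowerboundsymbiosistrivial}). I would then bound \alglower's gain by $z - \beta y - \alpha r$, where $r$ is what it has reserved along the path, and compare this with $\opt \leq 1$ or with $\opt = z$. The leaf to check first is the one after the first packed item, because there $y$ is largest relative to $z$. The remaining removal leaves should be dominated by leaves of the same shape. I would also argue that removing an item in order to reserve $z$ is never better than simply reserving $z$ while keeping $y$, so those branches collapse onto existing leaves.

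The main obstacle is verifying the removal leaves with the specific BBFHLR item sizes, which are themselves expressed in $\lowerbound$ and $\alpha$. I would substitute those sizes and use $(1 - \alpha)\lowerbound^2 = \lowerbound + 1$ to reduce each condition $\opt/(z - \beta y - \alpha r) \geq \lowerbound$ to a linear inequality in $\beta$. The hope is that the tightest of these is precisely $\beta \geq \alpha(\lowerbound + 1) = \tilde{\beta}(\alpha)$; this would explain the threshold in the statement. If a leaf is still not tight enough, my fallback is to append one further adversarial item after the removal, as in the tree of \cref{fig:lbsymtriv}, to punish the post-removal state.
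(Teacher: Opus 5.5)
Your identities $(1-\alpha)\ell^2=\ell+1$ and $\tilde\beta(\alpha)=\alpha(\ell+1)$ are correct. The gap is the plan to keep the reservation-only adversary unchanged and handle removal only by appending leaves after a swap. This fails at the very first branch, and that branch is exactly where the threshold $\tilde\beta$ is decided.

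Suppose \alglower packs $x_1=u_1+\eps$, where $u_1=2/(3+\sqrt{5-4\alpha})=\ell/(2\ell+1)$. The adversary answers with some item $z$ with $x_1+z>1$. It ends the instance if $z$ is rejected, and completes $x_1$ to a full knapsack if \alglower removes $x_1$ and packs $z$.
\begin{itemize}
\item Rejection forces $z\geq \ell u_1$.
\item The swap forces $z-\beta u_1\leq 1/\ell$.
\end{itemize}
Both can hold only if $\beta\geq \ell-1/(\ell u_1)=\alpha(\ell+1)=\tilde\beta(\alpha)$. At $\beta=\tilde\beta(\alpha)$ they hold only for $z=\ell u_1$ in the limit.

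Nothing in the reservation-only setting forces this size. There, packing $x_1$ blocks every later item, so the natural answer is the item of size $1$, which gives ratio $1/u_1=2+1/\ell\geq\ell$. With removal, \alglower swaps to that item and gains $1-\beta u_1$ with a full knapsack. No appended item can punish this, so your fallback does not help. The ratio $1/(1-\beta u_1)$ reaches $\ell$ only for $\beta\geq(\ell-1)(2\ell+1)/\ell^2$. This strictly exceeds $\alpha(\ell+1)$ for every $\alpha<\sqrt{2}-1$; for example, $5/4$ versus $3/4$ at $\alpha=\frac14$.

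The missing idea is to insert a new intermediate item $x_2\da\ell u_1$ \emph{before} the removal opportunity, as the paper does. Rejecting it gives $x_2/x_1\to\ell$. After a swap, a final item of size $1$ gives ratio $1/(x_2-\beta x_1)$, and $x_2-\tilde\beta(\alpha)u_1=(\sqrt{5-4\alpha}-1)/2=1/\ell$. This is where $\tilde\beta$ actually comes from.

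A secondary point: the branches where \alglower reserves an item while keeping a packed one do not collapse onto the reservation-only leaves. At the end of the instance, \alglower may remove the packed item and pack the reserved one.
\begin{itemize}
\item After packing $x_1$ and reserving $x_2$, the gain $(1-\alpha)x_2-\beta x_1$ must be at most $x_1$. This needs $\beta\geq(1-\alpha)\ell-1=1/\ell$ by your identity, which holds since $\alpha\ell(\ell+1)\geq1$.
\item In the branches where \alglower packs $u_2$ or $u_3$ after reservations, swapping to the final item of size $1$ must not beat rejecting it. This needs $(1+\beta)u_2\geq1$ and $(1+\beta)u_3\geq1$, both implied by $\beta\geq\tilde\beta(\alpha)\geq 3/4$.
\end{itemize}
With $x_2$ inserted and these leaves checked, your identity-based reduction goes through, and the $x_2$-swap leaf is indeed the tight one.
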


\begin{restatable}{theorem}{theoremLowerReservationMedium}\label{thm:lowerbound_reservation_medium}
    For $\sqrt{2} - 1 < \alpha \leq \phi - 1$ and $\beta \geq 1 + \alpha$, no algorithm can have a better competitive ratio than $\ell\da 2 + \alpha$.
\end{restatable}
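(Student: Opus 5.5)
The plan is to take the adversarial decision tree behind the reservation-only lower bound $2+\alpha$ of \citefull{BBFHLR22} for $\sqrt{2}-1<\alpha\leq\varphi-1$ and extend it to \knapresrem. At every node where the algorithm could now also remove an item, we add the removal branches and show that they still force a ratio of at least $\ell = 2+\alpha$ (in the limit $\eps\to 0$). The leaves of the original tree that involve only packing, reserving and rejecting are unchanged. Their ratios are already at least $2+\alpha$ by the computation of \citefull{BBFHLR22}, so only the new removal leaves need work. This is the same decision-tree template used in the proof sketch of \cref{thm:lowerboundsymbiosistrivial}.

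\textbf{Step 1: accounting for removal.} The hypothesis $\beta\geq 1+\alpha$ is what makes removal useless here. If an algorithm packs an item $y$ and later removes it, then $y$ contributes $y-y-\beta y=-\beta y\leq-(1+\alpha)y$ to the gain. This is strictly worse than having reserved $y$ and never packed it, which contributes $-\alpha y$, and worse by an additional $y$ than rejecting it. So, as a bookkeeping lemma, in every leaf where the algorithm removes a set $D$ we bound its gain from above by
\[
\text{(gain of the reservation-only leaf obtained by treating the items of } D \text{ as reserved but not packed)} - (1+\beta-\alpha)\,d\;.
\]
This is not an online simulation, since the algorithm cannot know in advance which items it will remove. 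It is a leaf-by-leaf upper bound: the adversary only ever follows the algorithm's actual choices, so the bound can be applied at each leaf.

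\textbf{Step 2: the tree.} The adversary presents the items of the original construction: a first moderately sized item, then an item that does not fit together with it and is about $1/(2+\alpha)$ times $\opt$ in the relevant leaf, and so on. The items are followed by the completing items that punish whichever choice was made. In a branch where the algorithm has packed $x_j$ and a larger incompatible item $x_{j+1}$ arrives, the algorithm may remove $x_j$ and pack $x_{j+1}$. In that case the adversary continues exactly as on the branch ``$x_j$ rejected, $x_{j+1}$ packed''. The resulting gain is that branch's gain minus $\beta x_j$, and that branch's ratio was already at least $2+\alpha$, so this removal leaf is covered. Removing items at the very end never helps, because the final item is never larger than the ones that would be displaced.

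\textbf{Main obstacle.} The hard part is the leaves where the algorithm used \emph{both} a reservation and a removal. In the reservation-only tree, a branch that reserved an item may have ended because packing was impossible. Here, removing a packed item could make room for a reserved one and yield a combination the original tree never had to consider. For these leaves I would first try to show directly that the removed size $d$ is at least the size of the item made room for. This is natural because the adversary's items are tuned so that packed items block each other. With $d$ at least that size, the penalty $(1+\beta-\alpha)d\geq 2d$ from Step 1 wipes out the benefit. This step is where the exact item sizes from \citefull{BBFHLR22} must be checked against $\beta\geq1+\alpha$. If the bound fails at one leaf, the fallback is to shrink the item involved slightly (by $\eps$) so that the removal becomes infeasible or strictly dominated.
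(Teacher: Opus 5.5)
Your high-level plan of extending the reservation-only tree of \cite{BBFHLR22} with removal branches is what the paper does. However, your specific claims about that tree are false, and they hide exactly the leaf where $\beta\geq 1+\alpha$ is used. For this range the tree is short:
the first item is $x_1=1/(2+\alpha)$. If it is packed, the adversary ends with an item of size $1$. If it is reserved, the next item is $y_1=1-x_1+\varepsilon$, and if $y_1$ is packed, the adversary again ends with an item of size $1$.

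So the final item is the \emph{largest} possible, and your claim that removing at the very end never helps is wrong. The new leaf ``pack $x_1$, remove it, pack the final item'' has gain $1-\beta x_1$. This is at most $x_1$ (ratio at least $2+\alpha$, as in the reject leaf) if and only if $\beta\geq (1-x_1)/x_1=1+\alpha$. This single computation is the only place the hypothesis enters, and it is tight at $\beta=1+\alpha$. So no $\varepsilon$-shrinking fallback can rescue a weaker argument; removal is always feasible in any case.

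The branch where both mechanisms meet is easy by comparison. After reserving $x_1$ and packing $y_1$, removing $y_1$ for the final item yields $1-\alpha x_1-\beta y_1\leq 1-\beta x_1\leq x_1$, since $y_1>x_1$. Removing $y_1$ in favour of $x_1$ is dominated even more clearly. Your ``main obstacle'' is therefore not where the difficulty lies.

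The general reduction you propose in place of direct computation does not work either:
\begin{itemize}
\item Replacing ``pack $y$, later remove it'' by ``reserve $y$, never pack it'' changes the gain by $-(\beta-\alpha)y$, not $-(1+\beta-\alpha)y$. Your bookkeeping bound double-counts $y$ and is false as stated.
\item More importantly, the comparison configuration is not a leaf of the reservation-only tree. The adversary answers a reservation with a different item than a packing: reserving $x_1$ triggers $y_1$, not the item of size $1$. Its ratio therefore cannot be imported from \cite{BBFHLR22}. Indeed, ``$1$ packed, $x_1$ reserved'' has ratio $1/(1-\alpha x_1)=(2+\alpha)/2<2+\alpha$.
\item Step~2's reduction to a branch ``$x_j$ rejected, $x_{j+1}$ packed'' has nothing to reduce to, since rejecting $x_1$ ends the instance with unbounded ratio.
\end{itemize}
What is missing is to fix the concrete items and evaluate every removal leaf directly, as the paper does.
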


The final bound, \cref{thm:lower-equib}, covers a further large region, on which either the reservation-only algorithm for $\alpha\geq\phi-1$ by \citefull{BBFHLR22} or the removal-only algorithm for $\beta\geq 1/2$ by \citefull{HKM2014} is already sufficient to achieve the best-possible competitive ratio.

\begin{restatable}{theorem}{theoremLowerEquilibrium}\label{thm:lower-equib}
    For $0 < \alpha < 1$ and $\beta \geq 1/2$, no algorithm can have a better competitive ratio than 
    \begin{align*}
        \ell\da \min \left( \frac{1}{1 - \alpha}, \frac{1 + \beta + \sqrt{\beta^2 + 2\beta + 5}}{2} \right).
    \end{align*}
\end{restatable}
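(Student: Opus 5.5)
We build an adaptive adversary in the same decision-tree style as the proof of \cref{thm:lowerboundsymbiosistrivial}, starting from the removal-only lower bound of \citefull{HKM2014} for $\beta \geq 1/2$. Write $c \da \frac{1 + \beta + \sqrt{\beta^2 + 2\beta + 5}}{2}$, so that $\ell = \min(\frac{1}{1-\alpha}, c)$. The guiding idea is that every branch in which the algorithm reserves something should end immediately in a leaf whose ratio is essentially $\frac{1}{1-\alpha}$ or worse. Every branch in which it only packs, rejects, or removes should reproduce the \knaprem construction, whose leaves evaluate to $c$.

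\textbf{First item.} The adversary sends $x_1 = s$ with $s < 1/2$; the value of $s$ is tuned below.
\begin{itemize}
\item If the algorithm rejects $x_1$, the instance stops and the ratio is unbounded.
\item If it reserves $x_1$, the instance stops. The gain is at most $(1-\alpha)s$, so the ratio is at least $\frac{1}{1-\alpha} \geq \ell$.
\item Otherwise $x_1$ is packed.
\end{itemize}

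\textbf{Second item.} The adversary sends $x_2 = 1 - s + \eps$, which is larger than $s$ and does not fit together with $x_1$.
\begin{itemize}
\item Rejecting $x_2$ gives ratio $x_2/x_1 \to (1-s)/s$.
\item Reserving $x_2$ ends the instance. The best the algorithm can do is either keep $x_1$, with gain $s$, or remove $x_1$ and pack $x_2$, with gain $(1-\alpha)x_2 - \beta s$. Since $s < x_2$, both gains are strictly below $(1-\alpha)x_2$, so the ratio exceeds $\frac{1}{1-\alpha}$ (for $x_2 \le 1$, i.e., $\eps \le s$). This is exactly the argument in the sketch of \cref{thm:lowerboundsymbiosistrivial}.
\item Otherwise the algorithm packs $x_2$ and removes $x_1$.
\end{itemize}

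\textbf{Third item and beyond.} The adversary now continues the HKM sequence. In the simplest version it sends $x_3 = 1 - s$, which completes $\Opt$ to $1$ together with $x_1$. This yields a leaf of value $\frac{1}{x_2 - \beta s} \to \frac{1}{1 - (1+\beta)s}$. In every later branch where the algorithm reserves the current item, we stop and argue as above: the reserved item is larger than anything packed, so the ratio is at least $\frac{1}{1-\alpha}$.

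\textbf{Choice of $s$.} We balance the removal-only leaves so that their minimum is $c$. Using the identity $c^2 - (1+\beta)c - 1 = 0$, i.e.\ $c = 1 + \beta + 1/c$, the target value of $s$ can be read off from the HKM construction. If the plain three-item tree does not balance exactly at $c$, we will use the longer HKM sequence, a geometric-like chain of items that repeatedly forces a removal or a bad rejection, and take the limit. The regime $\beta \ge 1/2$ is exactly the one in which their bound exceeds $2$ and the construction closes.

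\textbf{Main obstacle.} The hard step is verifying the reserve branches deep in the tree. There, the algorithm already holds a packed item of size comparable to the new one, and we must confirm that its gain is still below $(1-\alpha)\cdot\opt$. Otherwise we must show that $\opt$ at that node is large enough that the ratio remains at least $\min(\frac{1}{1-\alpha}, c)$. This requires each newly arriving item to dominate the current packed content, possibly after subtracting removal cost. I would first check this monotonicity along the HKM sequence. If it fails at some node, I would modify that item so that reserving it is dominated, accepting that the corresponding leaf only certifies $\frac{1}{1-\alpha}$. That is why the bound is a minimum rather than $c$ alone.
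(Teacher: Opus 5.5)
Your decision tree is the same one the paper uses: $x_1=s$, then $x_2=1-s+\varepsilon$, then $x_3=1-s$, with reject/reserve/pack at each node. The approach is right, but you stop one line short of the step that actually proves the bound. You leave $s$ undetermined, and you plan a fallback to longer HKM chains in case the three-item tree ``does not balance.'' It does balance.

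The removal-only leaves are $(1-s)/s$ and $1/(1-(1+\beta)s)$. Take $s \da 1/(1+c)$. Then the first leaf is exactly $c$. For the second, $1-(1+\beta)s=(c-\beta)/(1+c)$, and your own identity $c=1+\beta+1/c$ gives $c-\beta=(c+1)/c$, so $1-(1+\beta)s=1/c$ and that leaf is also exactly $c$. This $s$ is precisely the paper's $x_1=\frac{3+\beta-\sqrt{\beta^2+2\beta+5}}{2(1+\beta)}$, and $s<1/2$ because $c>1$. So no limit over a longer sequence is needed, and your ``main obstacle'' about deep reserve branches never arises.

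Two smaller slips need fixing.

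\emph{Reserving $x_2$.} If the algorithm keeps $x_1$, its gain is $s-\alpha x_2$, not $s$, because the reservation fee has already been paid. With $s$ as written, your claim that both gains are below $(1-\alpha)x_2$ fails for $\alpha$ near $1$. With the correct gain $s-\alpha x_2<(1-\alpha)x_2$, the claim holds.

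\emph{Reserving $x_3$.} Your blanket argument that ``the reserved item is larger than anything packed'' is false here, since $x_3=1-s<x_2$. Handle this node as the paper does. Since $x_3$ is the last item, reserving it only adds cost $\alpha x_3$ to one of the other outcomes. Packing $x_3$ forces removal of $x_2$, because $x_2+x_3>1$, and yields $x_3-\beta(s+x_2)<x_2-\beta s$. So the gain is at most $x_2-\beta s\to 1/c$, and the ratio tends to $c\ge\ell$.
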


\section{Exploring the Symbiosis Region}\label{sec:proofs}

This section supplies the proofs omitted from \cref{sec:symbiosis}. \Cref{sub:fill_ratio} collects the elementary properties of the fill ratio $f$ on which the analysis relies, \cref{sub:packability} proves the packability invariant (\cref{lem:packability_invariant}) at the core of the upper bound, \cref{subsec:crit_reject} proves the critical rejection assertion (\cref{lem:critical_rejection}), \cref{subsec:safe_reject} proves the safe rejection invariant (\cref{lem:safe_rejection}), and \cref{sub:lb_symbiosis_proofs} then proves the three lower bounds stated in \cref{sub:symbiosis:lower_bound}.

\subsection{Properties of the Fill Ratio}\label{sub:fill_ratio}

Throughout, we regard $f$, $\fo$, and $\pwin{\cdot}$ as functions of $(\alpha, \beta)$ on $\Usymb$, recalled from \cref{def:Usymb,def:pwin,def:f,def:fo}. We first record a few facts, all valid at every $(\alpha, \beta) \in \Usymb$. Recall from \cref{def:fo} the definition of the polynomial
\[
    Q(x) = Q_{\alpha, \beta}(x) = (1 + \alpha - \beta) x^2 + (1 + \beta) x - 1\;.
\]
Since $(\alpha, \beta) \in \Usymb$, we have $\beta < 1 + \alpha$, \ie,
\begin{equation}
    1 + \alpha - \beta > 0\;, \label{eq:1+a-b>0}
\end{equation}
so $Q$ opens upward; since $Q(0) = -1 < 0$ and $Q(1) = 1 + \alpha > 0$, it has a unique root in $(0, 1)$, namely $\fo$. Hence
\begin{equation}
    0 < \fo < 1 \quad\text{and}\quad Q(\fo) = (1 + \alpha - \beta) \fo^2 + (1 + \beta) \fo - 1 = 0\;.\label{eq:0<fo<1_Q(fo)=0}
\end{equation}
Due to \cref{eq:1+a-b>0} and \cref{eq:0<fo<1_Q(fo)=0}, we have
\[
    (1 + \beta) \fo = 1 - (1 + \alpha - \beta) \fo^2 \leq 1\;,
\]
and thus $\beta \fo < (1 + \beta) \fo \leq 1$, which implies
\begin{equation}\label{eq:1-bfo>0}
    1 - \beta \fo > 0\;.
\end{equation}
So, rearranging \cref{eq:0<fo<1_Q(fo)=0} yields
\begin{equation}
    \frac{\fo^2}{1 - \beta \fo} = \frac{1 - \fo}{1 + \alpha}\;.\label{eq:fo_identity}
\end{equation}
This proves the identity \cref{eq:fo} and the positivity claim made after \cref{def:fo} in \cref{sec:introduction}.
Finally, since $f \leq \fo$ by \cref{def:f} and $f \leq 1 - \alpha$, we have
\begin{equation}
    1 - \beta f \geq 1 - \beta \fo > 0 \quad\text{and}\quad 1 - \alpha - \alpha f \geq (1 - \alpha)^2 > 0\;. \label{eq:1-bf>0_1-a-af>=(1-a)^2>0}
\end{equation}

\begin{lemma}\label{lemma:betagreateralpha}
    For any $(\alpha,\beta)\in \Usymb$, it holds that $\beta > \alpha$.
\end{lemma}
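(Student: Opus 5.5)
The plan is to use only the lower bound on $\beta$ in the definition of $\Usymb$, namely $\beta > \max\bigl(\tfrac12, \tfrac{3\alpha-\alpha^2-1}{1-\alpha}\bigr)$, together with $\tfrac14 < \alpha < \varphi - 1$. I will split at $\alpha = \tfrac12$, because that is where the two expressions in the maximum trade places.

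In the first case, $\tfrac14 < \alpha \leq \tfrac12$, the claim is immediate: $\beta > \tfrac12 \geq \alpha$.

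In the second case, $\tfrac12 < \alpha < \varphi - 1$, it suffices to show $\tfrac{3\alpha-\alpha^2-1}{1-\alpha} \geq \alpha$. Since $1-\alpha > 0$, this is equivalent to
\[
    3\alpha - \alpha^2 - 1 \geq \alpha - \alpha^2,
\]
that is, $2\alpha \geq 1$, which holds. Chaining gives $\beta > \tfrac{3\alpha-\alpha^2-1}{1-\alpha} \geq \alpha$.

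A single unified argument also works and avoids the case split: $\beta > \max\bigl(\tfrac12, \tfrac{3\alpha-\alpha^2-1}{1-\alpha}\bigr)$, and $\alpha$ is at most whichever of the two terms is active. I will present the two-case version for clarity.

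There is no real obstacle here. The only point requiring care is multiplying by $1-\alpha$, which is positive because $\alpha < \varphi - 1 < 1$. The upper bounds on $\beta$ in $\Usymb$ are not needed.
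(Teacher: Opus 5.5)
Your proposal is correct and follows essentially the same argument as the paper. Both split at $\alpha = \tfrac12$: for $\alpha \le \tfrac12$ you use $\beta > \tfrac12 \ge \alpha$, and for $\alpha > \tfrac12$ you reduce $\tfrac{3\alpha-\alpha^2-1}{1-\alpha} \ge \alpha$ to $2\alpha \ge 1$ using $1-\alpha>0$.
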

\begin{proof}
    Recall from the definition of $\Usymb$ in \cref{def:Usymb-f} that 
    \[\beta > \max\left(\frac{1}{2},\frac{3\alpha-\alpha^2-1}{1-\alpha}\right).\]
    If $\alpha\leq 1/2$, then clearly $\beta > 1/2 \geq \alpha$. If, however, $\alpha>1/2$, then from 
    \begin{equation*}
    \alpha>\frac{1}{2}\implies 2\alpha-1>0\implies 3\alpha-\alpha^2-1> \alpha-\alpha^2\implies \frac{3\alpha-\alpha^2-1}{1-\alpha}>\alpha\;,
    \end{equation*}
    it follows that $\beta>\alpha$.
\end{proof}

\begin{lemma}\label{lem:f_non-increasing}
    The function $f(\alpha, \beta)$ is non-increasing in $\alpha$ and $\beta$ on $\Usymb$ and $\fo(\alpha, \beta)$ is strictly decreasing in $\alpha$ and $\beta$, for $\beta\leq 1+\alpha$.
\end{lemma}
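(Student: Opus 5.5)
We first prove the claim about $\fo$ and then deduce the claim about $f$. Throughout, we consider the region $\beta \le 1+\alpha$ (with $\alpha,\beta\ge 0$). On it, $\fo(\alpha,\beta)$ is the unique positive root of $Q_{\alpha,\beta}$. Rather than differentiating the explicit formula \cref{def:fo}, which degenerates at $\beta = 1+\alpha$, we argue via the sign of $Q$.

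\textbf{Sign of $Q$ around its root.} Since $Q_{\alpha,\beta}(0) = -1 < 0$ and $Q_{\alpha,\beta}(1) = 1+\alpha > 0$, the root $\fo$ lies in $(0,1)$. We have $Q_{\alpha,\beta}(x) < 0$ for $0 \le x < \fo$. This holds because the leading coefficient is nonnegative, so $Q$ is either linear with positive slope $1+\beta$ or an upward parabola with $Q(0) < 0$. In both cases $Q$ is negative on $[0,\fo)$ and positive for $x > \fo$.

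\textbf{Monotonicity in $\alpha$.} Write
\[
Q_{\alpha,\beta}(x) = x^2 + \alpha x^2 + (1+\beta)x - \beta x^2 - 1 .
\]
The partial derivative in $\alpha$ is $x^2$, which is positive for $x \in (0,1)$. Fix $\alpha < \alpha'$ with $\beta \le 1+\alpha$; then also $\beta \le 1+\alpha'$. We get
\[
Q_{\alpha',\beta}(\fo(\alpha,\beta)) = (\alpha'-\alpha)\,\fo(\alpha,\beta)^2 > 0 .
\]
By the sign structure above, this forces $\fo(\alpha',\beta) < \fo(\alpha,\beta)$.

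\textbf{Monotonicity in $\beta$.} The partial derivative in $\beta$ is $x - x^2 = x(1-x)$, which is positive on $(0,1)$. Fix $\beta < \beta' \le 1+\alpha$. We get
\[
Q_{\alpha,\beta'}(\fo(\alpha,\beta)) = (\beta'-\beta)\,\fo(1-\fo) > 0 ,
\]
and hence $\fo(\alpha,\beta') < \fo(\alpha,\beta)$. This step is the only one where some care is needed, namely in ensuring that the sign argument remains valid when the leading coefficient $1+\alpha-\beta'$ is zero. The linear case above handles this.

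\textbf{Consequence for $f$.} On $\Usymb$ we have $\beta < 1+\alpha$, so $f = \min(\fo, \tfrac12, 1-\alpha)$. This is a minimum of three functions, each non-increasing in both $\alpha$ and $\beta$: $\fo$ is strictly decreasing, $\tfrac12$ is constant, and $1-\alpha$ is decreasing in $\alpha$ and constant in $\beta$. A pointwise minimum of non-increasing functions is non-increasing, so $f$ is non-increasing in $\alpha$ and $\beta$.

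One technical point remains: comparing two points of $\Usymb$ along a coordinate direction might seem to require the segment between them to stay inside $\Usymb$. It does not. The argument for $f$ is purely pointwise, since each of the three component functions is defined and monotone on the whole half-plane $\beta \le 1+\alpha$, and that half-plane contains $\Usymb$.
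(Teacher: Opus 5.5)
Your proof is correct, and it takes a different route from the paper's. The paper differentiates the identity $Q_{\alpha,\beta}(\fo)=0$ implicitly. This gives $\partial_\alpha \fo = -\fo^2/(2\fo(1+\alpha-\beta)+1+\beta)$ and $\partial_\beta \fo = (\fo^2-\fo)/(2\fo(1+\alpha-\beta)+1+\beta)$, both negative. It then treats $f$ exactly as you do, as a minimum of three non-increasing terms.

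You instead use the fact that $Q_{\alpha,\beta}(x)$ is affine in $\alpha$ and in $\beta$. The exact differences $Q_{\alpha',\beta}(\fo)=(\alpha'-\alpha)\fo^2$ and $Q_{\alpha,\beta'}(\fo)=(\beta'-\beta)\fo(1-\fo)$ are positive. The sign pattern of $Q$ on $[0,\infty)$ then turns this directly into a strict comparison of roots.

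Your version buys a calculus-free argument. You never need differentiability of $\fo$, which the paper tacitly takes from the implicit function theorem. You also never need the mean value theorem to get from negative partial derivatives to strict decrease.

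Your linear-case remark also covers the boundary $\beta'=1+\alpha$ explicitly. There, the paper's justification cites $1+\alpha-\beta>0$, which is only established on $\Usymb$; nothing actually breaks, since its denominator is still at least $1+\beta>0$. That boundary case is precisely what the paper uses later, in Case~8 of the proof of \cref{thm:lowerboundf}, to obtain $\fo(\alpha,\beta)>\fo(\alpha,1+\alpha)=1/(2+\alpha)$.

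Your technique also matches the root-location arguments the paper uses elsewhere, for instance $\fo<1-\alpha \iff Q(1-\alpha)>0$ in \cref{lem:f_piecewise}. The paper's route, in exchange, yields explicit derivative formulas, though these are not used anywhere else.
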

\begin{proof}
    By \cref{def:f}, it suffices to show that each of $\fo$, $\frac{1}{2}$, and $1 - \alpha$ is non-increasing in $\alpha$ and $\beta$. For the latter two this is clear, so consider $\fo$ for which we even prove that it is strictly decreasing. Let $\fderiva$, $\fderivb$ denote its partial derivatives. It suffices to show that they are non-positive, which we do via \cref{eq:0<fo<1_Q(fo)=0}. Differentiating both sides of the equation $Q(\fo) = 0$ with respect to $\alpha$ gives
    \[
        \fo^2 + 2\fo \cdot (1 + \alpha - \beta) \fderiva + (1 + \beta) \fderiva = 0 \iff \fderiva = - \frac{\fo^2}{2\fo \cdot (1 + \alpha - \beta) + 1 + \beta}\;,
    \]
    which is strictly negative by \cref{eq:1+a-b>0}. Differentiating with respect to $\beta$ yields
    \[
        - \fo^2 + 2\fo \cdot (1 + \alpha - \beta) \fderivb + \fo + (1 + \beta) \fderivb = 0 \iff \fderivb = \frac{\fo^2 - \fo}{2\fo \cdot (1 + \alpha - \beta) + 1 + \beta}\;,
    \]
    which is strictly negative by \cref{eq:1+a-b>0} and since $\fo^2 - \fo < 0$ by \cref{eq:0<fo<1_Q(fo)=0}.
\end{proof}

\begin{lemma}\label{lem:f_piecewise}
    For $(\alpha, \beta) \in \Usymb$, we have
    \[
        f = \begin{cases}
                \fo, &\text{for } \beta > \max \left( 1 - \alpha, \alpha + \frac{2\alpha - 1}{\alpha\cdot(1 - \alpha)} \right), \\
                \frac{1}{2}, &\text{for } \beta \leq 1 - \alpha\;,\\
                1 - \alpha, &\text{for } \beta \leq \alpha + \frac{2\alpha - 1}{\alpha\cdot(1 - \alpha)}\;.
            \end{cases}
    \]
\end{lemma}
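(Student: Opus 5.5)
Since $f=\min(\fo,\frac12,1-\alpha)$, we only need to decide which of the three terms is smallest. We compare $\fo$ with $\frac12$ and with $1-\alpha$ using the sign of $Q$. On $\Usymb$ we have $1+\alpha-\beta>0$ by \cref{eq:1+a-b>0}, so $Q$ opens upward. It is negative on $(0,\fo)$ and positive on $(\fo,\infty)$. Hence, for any $t\in(0,1)$, we have $\fo<t$ if and only if $Q(t)>0$, and $\fo\ge t$ if and only if $Q(t)\le 0$. This turns each comparison into a polynomial inequality in $\alpha$ and $\beta$.

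First we compare with $\frac12$. A direct computation gives
\[
Q\!\left(\tfrac12\right)=\tfrac{1+\alpha-\beta}{4}+\tfrac{1+\beta}{2}-1=\tfrac{\alpha+\beta-1}{4}.
\]
So $\fo\ge\frac12$ if and only if $\beta\le 1-\alpha$. Next we compare with $1-\alpha$:
\[
Q(1-\alpha)=(1+\alpha-\beta)(1-\alpha)^2+(1+\beta)(1-\alpha)-1 .
\]
Expanding, the coefficient of $\beta$ is $(1-\alpha)-(1-\alpha)^2=\alpha(1-\alpha)>0$. The $\beta$-free part is $(1+\alpha)(1-\alpha)^2+(1-\alpha)-1=1-2\alpha-\alpha^2+\alpha^3$, which factors as $-\alpha\bigl(\alpha(1-\alpha)-1\bigr)+(1-2\alpha)$ up to regrouping. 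Solving the linear condition $Q(1-\alpha)\le 0$ for $\beta$ therefore gives $\fo\ge 1-\alpha$ if and only if
\[
\beta\le\frac{\alpha^3-\alpha^2-2\alpha+1}{-\alpha(1-\alpha)}.
\]
We then check that this threshold equals $\alpha+\frac{2\alpha-1}{\alpha(1-\alpha)}$. Indeed, $\alpha^2(1-\alpha)+2\alpha-1=-\alpha^3+\alpha^2+2\alpha-1$, which matches.

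Finally we combine the two comparisons. If $\beta$ exceeds both thresholds, then $\fo<\frac12$ and $\fo<1-\alpha$, so $f=\fo$. If $\beta\le1-\alpha$, then $\fo\ge\frac12$, so $f=\min(\frac12,1-\alpha)$. On $\Usymb$ this minimum is $\frac12$ whenever $\alpha\le\frac12$. For $\alpha>\frac12$, the condition $\beta\le 1-\alpha<\frac12$ contradicts $\beta>\frac12$, so this case does not occur there. If $\beta\le\alpha+\frac{2\alpha-1}{\alpha(1-\alpha)}$, then $\fo\ge1-\alpha$, so $f=\min(\frac12,1-\alpha)$. This equals $1-\alpha$ when $\alpha\ge\frac12$.

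The main obstacle is the case $\alpha<\frac12$ in this third branch. Here we must show that the condition cannot hold on $\Usymb$. For $\alpha<\frac12$ the threshold is below $\alpha<\frac12<\beta$, so the condition fails. At $\alpha=\frac12$ the two values $\frac12$ and $1-\alpha$ coincide, so either label is correct. The same bookkeeping covers the overlap between the second and third branches: any overlap only happens where the values agree. We expect this boundary bookkeeping to be the only delicate step; everything else is the routine algebra above.
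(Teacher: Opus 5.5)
Your proof is correct and follows the paper's argument: you place $\fo$ relative to $\frac12$ and $1-\alpha$ using the sign of the upward quadratic $Q$, then use $\beta>\frac12$ on $\Usymb$ to decide between $\frac12$ and $1-\alpha$. Evaluating $Q(\frac12)=\frac{\alpha+\beta-1}{4}$ is in fact cleaner than the paper's manipulation of the explicit root formula via $|1-\beta|<\alpha$. One harmless slip: your aside that the $\beta$-free part ``factors as $-\alpha(\alpha(1-\alpha)-1)+(1-2\alpha)$'' is wrong, since that expression equals $1-\alpha-\alpha^2+\alpha^3$ and is off by $\alpha$; it does no damage because you never use it, and the threshold you derive from the correct $1-2\alpha-\alpha^2+\alpha^3$ is right.
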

\begin{proof}
    The lemma identifies, for \cref{def:f}, the regions of $\beta$ in which each term attains the minimum. We first show $\fo < \frac{1}{2} \iff \beta > 1 - \alpha$. By \cref{def:fo} and \cref{eq:1+a-b>0}, we have 
    \begin{align*}
        \fo < \frac{1}{2} &\iff - (1 + \beta) + \sqrt{(1 + \beta)^2 + 4(1 + \alpha - \beta)} < 1 + \alpha - \beta \\
            &\iff \sqrt{(1 + \beta)^2 + 4(1 + \alpha - \beta)} < 2 + \alpha \\
            &\iff (1 + \beta)^2 + 4(1 + \alpha - \beta) < 4 + 4\alpha + \alpha^2 \\
            &\iff (1 - \beta)^2 < \alpha^2 \iff |1 - \beta| < \alpha \iff (1 - \beta < \alpha) \lor (1 - \beta < -\alpha)\;.
    \end{align*}
    Since $1 - \beta > -\alpha$ by \cref{def:Usymb}, we get
    \begin{equation}
        \fo < \frac{1}{2} \iff \beta > 1 - \alpha\;. \label{eq:fo<1/2_iff_b>1-a}
    \end{equation}

    Next we show $\fo < 1 - \alpha \iff \beta > \alpha + \frac{2\alpha - 1}{\alpha\cdot (1 - \alpha)}$. As $\fo$ is the unique positive root of the upward quadratic $Q$, this is equivalent to $Q(1 - \alpha) > 0$. We have
    \begin{align*}
        &Q(1 - \alpha) = (1 + \alpha - \beta)(1 - \alpha)^2 + (1 + \beta)(1 - \alpha) - 1 > 0 \\
        &\quad\iff \beta \alpha\cdot (1 - \alpha) > - \alpha^3 + \alpha^2 + 2\alpha - 1 \iff \beta > \alpha + \frac{2\alpha - 1}{\alpha\cdot (1 - \alpha)}\;.
    \end{align*}
    Together with \cref{eq:fo<1/2_iff_b>1-a,def:f}, this gives $f = \fo$ for $\beta > \max \left( 1 - \alpha, \alpha + \frac{2\alpha - 1}{\alpha\cdot(1 - \alpha)} \right)$.

    Consider $\beta \leq 1 - \alpha$. By \cref{def:Usymb}, we have $\beta > \frac{1}{2}$, which gives $\frac{1}{2} < 1 - \alpha$. So with \cref{eq:fo<1/2_iff_b>1-a}, we have $f = \frac{1}{2}$ for $\beta \leq 1 - \alpha$.
    
    Finally, consider $\beta \leq \alpha + \frac{2\alpha - 1}{\alpha\cdot (1 - \alpha)}$. Again using $\beta > \frac{1}{2}$, this gives
    \[
        \frac{1}{2} \leq \alpha + \frac{2\alpha - 1}{\alpha\cdot (1 - \alpha)}\;,
    \]
    which is equivalent to
    \[
        0 \leq \left( \alpha - \frac{1}{2} \right)\cdot \left( 1 + \frac{2}{\alpha\cdot(1 - \alpha)} \right),
    \]
    where the right-hand factor is positive. Hence $\alpha \geq \frac{1}{2}$, \ie, $1 - \alpha \leq \frac{1}{2}$. So, we have $f = 1 - \alpha$ for $\beta \leq \alpha + \frac{2\alpha - 1}{\alpha\cdot(1 - \alpha)}$, which concludes the proof.
\end{proof}
Note that, on $\Usymb$, the last two cases never overlap: since $\beta > \frac{1}{2}$, the condition $\beta \leq 1 - \alpha$ forces $\alpha < \frac{1}{2}$, while $\beta \leq \alpha + \frac{2\alpha - 1}{\alpha\cdot(1 - \alpha)}$ forces $\alpha > \frac{1}{2}$; at $\alpha = \frac{1}{2}$ itself, both conditions would require $\beta \leq \frac{1}{2}$, which is excluded.

\begin{lemma}\label{lem:f_strictly_better}
    For all $(\alpha, \beta) \in \Usymb$, we have
    \[
        \cknapresrem(\alpha, \beta) < \min (\cknapres(\alpha), \cknaprem(\beta))\;.
    \]
\end{lemma}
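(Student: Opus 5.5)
By \cref{thm:alg_achieves_f}, $\cknapresrem(\alpha,\beta)\le 1/f(\alpha,\beta)$ on $\Usymb$. It therefore suffices to show $f>1/\cknapres(\alpha)$ and $f>1/\cknaprem(\beta)$. Since $f=\min(\fo,\tfrac12,1-\alpha)$, this amounts to checking that each of the three terms strictly exceeds both reciprocals at every $(\alpha,\beta)\in\Usymb$. Throughout we use $\tfrac14<\alpha<\varphi-1$, $\beta>\tfrac12$, and $\beta<1+\alpha$.

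\textbf{The constant terms.} On $\Usymb$ we have $\beta>\tfrac12$, so $\cknaprem(\beta)=\frac{1+\beta+\sqrt{\beta^2+2\beta+5}}{2}>2$. Since $\alpha>\tfrac14$, $\cknapres(\alpha)$ is given by the middle branches of \cref{eq:cknapres} and also exceeds $2$. Hence $\tfrac12$ beats both reciprocals.

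For $1-\alpha$, the branch $\cknapres=\frac{1}{1-\alpha}$ only applies at $\alpha\ge\varphi-1$, which is excluded. On $(\tfrac14,\varphi-1)$ the reservation ratio is strictly below $\frac1{1-\alpha}$, so we must check $1-\alpha>1/\cknapres(\alpha)$ directly on each branch:
\begin{itemize}
\item for $\alpha\in(\sqrt2-1,\varphi-1)$, the condition $(1-\alpha)(2+\alpha)>1$ is equivalent to $\alpha^2+\alpha-1<0$, i.e.\ $\alpha<\varphi-1$;
\item for $\alpha\in(\tfrac14,\sqrt2-1]$, the condition $(1+\sqrt{5-4\alpha})/2>1$ holds trivially.
\end{itemize}
For the removal side, $(1-\alpha)\cknaprem(\beta)>1$ follows from $\beta>\frac{3\alpha-\alpha^2-1}{1-\alpha}$. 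The equilibrium border shows this threshold is exactly where $\cknaprem(\beta)=\frac1{1-\alpha}$, and $\cknaprem$ is strictly increasing in $\beta$.

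\textbf{The term $\fo$.} This is the main obstacle, since it requires comparing $\fo$ with two irrational expressions. We use the sign characterization from the proof of \cref{lem:f_piecewise}: because $\fo$ is the unique positive root of the upward parabola $Q_{\alpha,\beta}$, we have $\fo>g$ if and only if $Q_{\alpha,\beta}(g)<0$, for any $g\in(0,1)$.

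\emph{Removal side.} Set $g=1/\cknaprem(\beta)$. This $g$ is the positive root of $g^2+(1+\beta)g-1=0$, obtained by solving $c^2-(1+\beta)c-1=0$. Hence
\[
Q_{\alpha,\beta}(g)=(1+\alpha-\beta)g^2+(1+\beta)g-1=(\alpha-\beta)g^2,
\]
which is negative by \cref{lemma:betagreateralpha}.

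\emph{Reservation side.} Set $g=1/\cknapres(\alpha)$. By \cref{lem:f_non-increasing}, $\fo$ is strictly decreasing in $\beta$, so it suffices to show $Q_{\alpha,\beta}(g)<0$ at the upper boundary $\beta=\min\bigl(\tilde\beta(\alpha),1+\alpha\bigr)$. We do this only with non-strict inequality at the boundary, since strict monotonicity then gives strictness inside. Because $Q_{\alpha,\beta}(g)=(1+\alpha)g^2+g-1+\beta(g-g^2)$ is linear in $\beta$ with positive slope $g-g^2$, it suffices to verify
\[
\beta_{\max}\,(g-g^2)\le 1-g-(1+\alpha)g^2
\]
on each branch.
\begin{itemize}
\item On $(\sqrt2-1,\varphi-1)$: here $g=\frac1{2+\alpha}$ and $\beta_{\max}=1+\alpha$. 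Then $Q_{\alpha,1+\alpha}(g)=(2+\alpha)g-1=0$, which is exactly the needed boundary equality.
\item On $(\tfrac14,\sqrt2-1]$: here $g=\frac{2(1-\alpha)}{1+\sqrt{5-4\alpha}}$ and $\beta_{\max}=\tilde\beta(\alpha)$. Substitute $s=\sqrt{5-4\alpha}$, so $\alpha=(5-s^2)/4$, and simplify to obtain a polynomial identity or inequality in $s$. We expect $\tilde\beta$ was chosen so that this also gives equality $Q=0$.
\end{itemize}
This last algebraic verification is the part we expect to be messiest, and we would check it symbolically.
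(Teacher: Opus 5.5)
Your proof is correct in substance, but for the $\fo$ term it takes a different route from the paper.

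\textbf{The terms $2$ and $\frac{1}{1-\alpha}$.} You handle these essentially as the paper does.

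\textbf{The term $\fo$: the paper's route.} The paper avoids algebra here. It invokes the full characterization $\cknapresrem=\frac1f$ from \cref{thm:results}, so it genuinely needs the matching lower bounds. It then uses that $\Usymb$ is open and $\fo$ strictly decreasing, and chains $\frac{1}{\fo(\alpha,\beta)}<\frac{1}{\fo(\alpha,\beta+\varepsilon)}\le\cknapresrem(\alpha,\beta+\varepsilon)\le\cknapres(\alpha)\le\cknaprem(\beta)$, or the symmetric chain with $\alpha+\varepsilon$.

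\textbf{The term $\fo$: your route.} You use only the upper bound of \cref{thm:alg_achieves_f} and check $\fo>1/\cknapres(\alpha)$ and $\fo>1/\cknaprem(\beta)$ directly through the sign of $Q_{\alpha,\beta}$. This is self-contained and more informative. The identity $Q_{\alpha,\beta}(g)=(\alpha-\beta)g^2$ for $g=1/\cknaprem(\beta)$ is neat. Your computation also shows that the upper boundary of $\Usymb$ is exactly where $1/\fo=\cknapres$.

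\textbf{The step you left open.} It does hold, and it is short. On $(\tfrac14,\sqrt2-1]$ put $s=\sqrt{5-4\alpha}$. Then $4(1-\alpha)=s^2-1$, so $g=\frac{2(1-\alpha)}{1+s}=\frac{s-1}{2}$ and $g^2+g=1-\alpha$. Hence
\[
Q_{\alpha,\beta}(g)=(g^2+g-1)+\alpha g^2+\beta(g-g^2)=(1-g)\bigl(\beta g-\alpha(1+g)\bigr).
\]
This is negative if and only if
\[
\beta<\alpha\frac{1+g}{g}=\alpha\frac{(s+1)^2}{s^2-1}=\frac{\alpha(3-2\alpha+s)}{2(1-\alpha)}=\tilde\beta(\alpha).
\]
This is precisely the computation of $\uniquebetayi$ in Case~6 of the proof of \cref{thm:lowerboundf}, where $\uniquerootyi=\frac{s-1}{2}$ is your $g$.

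\textbf{A simplification.} You need not determine which of $\tilde\beta(\alpha)$ and $1+\alpha$ is smaller. On $\Usymb$ both $\beta<\tilde\beta(\alpha)$ and $\beta<1+\alpha$ hold, so on each branch you can use whichever bound makes the boundary value of $Q$ equal to $0$.

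\textbf{A wording slip.} On $(\tfrac14,\varphi-1)$ the reservation ratio is strictly \emph{above} $\frac{1}{1-\alpha}$, not below. That is exactly what your two bullet checks establish.
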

\begin{proof}
    At every parameter pair $(\alpha, \beta)$, the competitive ratio $\cknapresrem(\alpha, \beta)$ is bounded from above by both single-mechanism ratios $\cknapres(\alpha)$ and $\cknaprem(\beta)$, since an algorithm may ignore either mechanism. Hence
    \begin{equation}
        \cknapresrem(\alpha, \beta) \leq \min \left( \cknapres(\alpha), \cknaprem(\beta) \right). \label{eq:cresrem<=cres_crem}
    \end{equation}

    Now fix $(\alpha, \beta) \in \Usymb$. By \cref{thm:results}, $\cknapresrem(\alpha, \beta) = \frac{1}{f(\alpha, \beta)}$, and by \cref{def:f}, $f(\alpha, \beta) = \min (\fo(\alpha, \beta), \frac{1}{2}, 1 - \alpha)$, hence
    \[
        \cknapresrem(\alpha, \beta) = \frac{1}{f(\alpha, \beta)} = \max \left( \frac{1}{\fo(\alpha, \beta)}, 2, \frac{1}{1 - \alpha} \right).
    \]
    We bound each of the three terms strictly below both $\cknapres(\alpha)$ and $\cknaprem(\beta)$.

    \begin{description}
        \item[\textnormal{\textit{The term $2$.}}] We know that $\cknapres(\alpha) = 2$ on $[0, \frac{1}{4}]$ and it strictly increases for $\alpha \geq \frac{1}{4}$. Since $\alpha > \frac{1}{4}$ on $\Usymb$, we thus have $2 < \cknapres(\alpha)$. Likewise, $\cknaprem(\beta) = 2$ on $[0, \frac{1}{2}]$ and it strictly increases for $\beta \geq \frac{1}{2}$. Since $\beta > \frac{1}{2}$ on $\Usymb$, we thus have $2 < \cknaprem(\beta)$.
    
        \item[\textnormal{\textit{The term $\frac{1}{1 - \alpha}$ versus $\cknapres(\alpha)$.}}] For $\alpha \in [\frac{1}{4}, \sqrt{2} - 1]$, we have $\cknapres(\alpha) = \frac{1 + \sqrt{5 - 4\alpha}}{2(1 - \alpha)}$. So, since $\frac{1 + \sqrt{5 - 4\alpha}}{2} > 1$, we have
        \[
            \frac{1}{1 - \alpha} < \frac{1 + \sqrt{5 - 4\alpha}}{2(1 - \alpha)} = \cknapres(\alpha)\;.
        \]
        For $\alpha \in [\sqrt{2} - 1, \varphi - 1]$, we have $\cknapres(\alpha) = 2 + \alpha$. So, we need to show $\frac{1}{1 - \alpha} < 2 + \alpha$, which is equivalent to $\alpha^2 + \alpha - 1 < 0$. The latter is true for $-\varphi < \alpha < \varphi - 1$, and thus on $\Usymb$.
    
        \item[\textnormal{\textit{The term $\frac{1}{1 - \alpha}$ versus $\cknaprem(\beta)$.}}] Since $\beta > \frac{1}{2}$ in $\Usymb$, we have $\cknaprem(\beta) = \frac{1 + \beta + \sqrt{\beta^2 + 2\beta + 5}}{2}$, which is the positive root of the upward quadratic
        \[
            C(x) \da x^2 - (1 + \beta) x - 1\;.
        \]
        So, to prove $\frac{1}{1 - \alpha} < \cknaprem(\beta)$, it suffices to show $C(\frac{1}{1 - \alpha}) < 0$, that is
        \[
            \left( \frac{1}{1 - \alpha} \right)^2 - (1 + \beta) \frac{1}{1 - \alpha} - 1 < 0\;.
        \]
        Solving this for $\beta$ gives $\beta > \frac{- \alpha^2 + 3\alpha - 1}{1 - \alpha}$, which holds on $\Usymb$.
    
        \item[\textnormal{\textit{The term $\frac{1}{\fo}$.}}] By \cref{lem:f_non-increasing}, $\fo$ is strictly decreasing in $\alpha$ and $\beta$, so $\frac{1}{\fo}$ is strictly increasing in each. Suppose $\cknapres(\alpha) \leq \cknaprem(\beta)$; the case $\cknaprem(\beta) \leq \cknapres(\alpha)$ is analogous. As $\Usymb$ is open and $\cknapres$ does not depend on $\beta$, we can choose $\varepsilon > 0$ with $(\alpha, \beta + \varepsilon) \in \Usymb$. Then, using that $\frac{1}{\fo(\alpha, \beta)} \leq \max \left( \frac{1}{\fo(\alpha, \beta)}, 2, \frac{1}{1 - \alpha} \right) = \cknapresrem(\alpha, \beta)$, we have
        \[
            \frac{1}{\fo(\alpha, \beta)} < \frac{1}{\fo(\alpha, \beta + \varepsilon)} \leq \cknapresrem(\alpha, \beta + \varepsilon) \leq \cknapres(\alpha) \leq \cknaprem(\beta)\;,
        \]
        where the third inequality holds by \cref{eq:cresrem<=cres_crem}.
    \end{description}
    Each of the three terms is therefore strictly below both $\cknapres(\alpha)$ and $\cknaprem(\beta)$, so their maximum is as well: $\cknapresrem(\alpha, \beta) < \min \left( \cknapres(\alpha), \cknaprem(\beta) \right)$, for all $(\alpha, \beta) \in \Usymb$.
\end{proof}

\subsection{The Packability Invariant}\label{sub:packability}

In this subsection, we prove \cref{lem:packability_invariant}, which we restate here:
\LemmaPackability*
We prove this for each of the two phases of \algsymb. The reserve
\begin{equation}
    \rwin \da \frac{f^2}{1 - \alpha - \alpha f}\;, \label{def:rwin}
\end{equation}
for which the safety threshold $\pwin{\cdot}$ becomes $0$, \ie, $\pwin{\rwin} = 0$, will be useful in the following proofs.

Let us begin by proving the packability invariant in Phase 1, in which no items get packed by \algsymb, reducing the invariant to $r_i \leq 1$. We need the following technical lemma.

\begin{lemma}\label{lem:r+pwin(r)_increasing_in_r}
    For $(\alpha, \beta) \in \Usymb$, the term $r + \pwin{r}$ is non-decreasing in $r$.
\end{lemma}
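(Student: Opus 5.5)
Since $\pwin{r}$ is affine in $r$, the sum $r + \pwin{r}$ is affine as well, and I will show that its slope is non-negative. By \cref{def:pwin},
\[
    r + \pwin{r} = \frac{f^2}{1 - \beta f} + \left(1 - \frac{1 - \alpha - \alpha f}{1 - \beta f}\right) r\;,
\]
where the denominator is positive by \cref{eq:1-bf>0_1-a-af>=(1-a)^2>0}. The slope is non-negative if and only if $1 - \alpha - \alpha f \leq 1 - \beta f$, \ie, $\beta f \leq \alpha + \alpha f$, equivalently
\[
    (\beta - \alpha)\, f \leq \alpha\;.
\]
By \cref{lemma:betagreateralpha} we have $\beta - \alpha > 0$, so the condition to verify becomes $f \leq \alpha/(\beta - \alpha)$.

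I will bound $f$ by $\fo$, since $f \leq \fo$ by \cref{def:f}. Then it suffices to show $\fo \leq \alpha/(\beta-\alpha)$. Because $\fo$ is the unique positive root of the upward-opening quadratic $Q$, this is equivalent to $Q\bigl(\alpha/(\beta-\alpha)\bigr) \geq 0$. Write $t \da \alpha/(\beta-\alpha)$. If $t \geq 1$, the claim is immediate from $\fo < 1$, so only the case $\beta > 2\alpha$ remains. There I will multiply $Q(t)$ by $(\beta-\alpha)^2$ and obtain the polynomial inequality
\[
    (1+\alpha-\beta)\alpha^2 + (1+\beta)\alpha(\beta-\alpha) - (\beta-\alpha)^2 \geq 0\;.
\]
Expanding and collecting terms gives
\[
    \alpha^3 - \alpha^2\beta - \alpha^2 + \alpha^2 - \ldots
\]
so I expect to simplify it to a short expression and then check it on $\Usymb$. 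Using $1+\alpha-\beta > 0$ and the upper bound $\beta < 1 + \alpha$, together with the cap $\beta < \alpha(3 - 2\alpha + \sqrt{5-4\alpha})/(2(1-\alpha))$, should close the argument.

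The main obstacle is this last verification, because it must hold on all of $\Usymb$. If bounding by $\fo$ turns out to be too lossy near the corner where $\alpha$ is close to $\tfrac14$ and $\beta$ is close to $\tfrac34$, I will switch to the other branches of the minimum in \cref{def:f}. Using $f \leq \tfrac12$, the condition reduces to $\beta \leq 3\alpha$. Using $f \leq 1-\alpha$, it reduces to $\beta \leq \alpha/(1-\alpha) + \alpha$.

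The near-corner concern already shows up in the first of these. At $\alpha$ slightly above $\tfrac14$, the region $\Usymb$ permits $\beta$ up to about $\tfrac34 = 3\alpha$, so the bound $f \leq \tfrac12$ is tight exactly there. Elsewhere, where $\beta$ is large relative to $\alpha$, I expect the $\fo$ branch to be the active one, since $\fo$ decreases in $\beta$ by \cref{lem:f_non-increasing}. I would therefore split the argument according to \cref{lem:f_piecewise} and verify the polynomial inequality separately in each piece, using the relevant boundary of $\Usymb$ in each.
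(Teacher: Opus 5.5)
Your reduction to $(\beta-\alpha)f\le\alpha$ is right, and bounding $f$ only by $\fo$ is a genuinely simpler route than the paper's. The paper splits along the three branches of \cref{lem:f_piecewise}:
\begin{itemize}
\item for $f=\frac12$ it uses $\beta\le 1-\alpha<3\alpha$;
\item for $f=1-\alpha$ it does a direct computation;
\item for $f=\fo$ it uses exactly your quadratic argument.
\end{itemize}
That last computation uses nothing beyond $(\alpha,\beta)\in\Usymb$. Combined with $f\le\fo$, it covers the whole region at once, so the other two cases, and \cref{lem:f_piecewise} itself, are unnecessary for this lemma. The split only makes the inequalities in the first two branches more elementary. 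Your route replaces all of them with a single quadratic check valid everywhere.

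What you have not done is the verification that carries the content. Also, the ingredients you list are not quite the right ones: $1+\alpha-\beta>0$ and $\beta<1+\alpha$ are the same condition, and it only makes $Q$ open upward.
\begin{itemize}
\item The numerator is $\alpha^3-\alpha^2-2\alpha^2\beta+3\alpha\beta+\alpha\beta^2-\beta^2$. This is minus the quadratic $(1-\alpha)\beta^2-\alpha(3-2\alpha)\beta+\alpha^2(1-\alpha)$ in $\beta$.
\item That quadratic has discriminant $\alpha^2(5-4\alpha)$ and roots $\beta_\pm=\alpha(3-2\alpha\pm\sqrt{5-4\alpha})/(2(1-\alpha))$. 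So you need $\beta_-\le\beta\le\beta_+$.
\item The upper inequality is literally the cap in \cref{def:Usymb}.
\item The lower one follows from $\beta>\alpha$, which you already invoked, because $\beta_-<\alpha$ is equivalent to $\sqrt{5-4\alpha}>1$. The paper instead shows $\beta_-\le\frac12<\beta$.
\end{itemize}
With this, your separate case $t\ge 1$ is superfluous.

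Your contingency plan is also unneeded, and as phrased it would fail without the full split. $\Usymb$ contains points with $\beta>3\alpha$, e.g.\ $(\alpha,\beta)=(0.3,0.92)$. There the bound $f\le\frac12$ only gives $(\beta-\alpha)/2=0.31>\alpha$. The $\frac12$-branch works only when restricted to $\beta\le 1-\alpha$, as the paper does.
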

\begin{proof}
    By \cref{def:pwin}, we have
    \[
        r + \pwin{r} = \frac{f^2 - (1 - \alpha - \alpha f) r + (1 - \beta f) r}{1 - \beta f} = \frac{f^2 + (\alpha + \alpha f - \beta f) r}{1 - \beta f}\;.
    \]
    Since $1 - \beta f > 0$ by \cref{eq:1-bf>0_1-a-af>=(1-a)^2>0}, it suffices to prove
    \begin{equation}
        \alpha + \alpha f - \beta f \geq 0. \label{suf:a+af-bf>=0}
    \end{equation}
    We distinguish three cases according to \cref{lem:f_piecewise}.
    \begin{casedist}
        \case Assume $\beta \leq 1 - \alpha$, so we have $f = \frac{1}{2}$. Then \cref{suf:a+af-bf>=0} simplifies to $3\alpha \geq \beta$ and, since $\beta \leq 1 - \alpha$, it suffices that $3\alpha \geq 1 - \alpha$, \ie, $\alpha \geq \frac{1}{4}$, which holds in $\Usymb$.
        \case Assume $\beta \leq \alpha + \frac{2 \alpha - 1}{\alpha\cdot(1 - \alpha)}$, so we have $f = 1 - \alpha$. Then \cref{suf:a+af-bf>=0} is equal to $\alpha + \alpha\cdot(1 - \alpha) - \beta\cdot(1 - \alpha) \geq 0$. Since $\beta \leq \alpha + \frac{2\alpha - 1}{\alpha\cdot(1 - \alpha)}$, we have
        \begin{align*}
            \alpha + \alpha\cdot(1 - \alpha) - \beta\cdot(1 - \alpha) &\geq \alpha + \alpha\cdot(1 - \alpha) - \left( \alpha + \frac{2\alpha - 1}{\alpha\cdot(1 - \alpha)} \right) (1 - \alpha) \\
                &= \alpha - \frac{2\alpha - 1}{\alpha} = \frac{(\alpha - 1)^2}{\alpha} \geq 0\;,
        \end{align*}
        which was to be proven.
        \case Assume $\beta > \max \left( 1 - \alpha, \alpha + \frac{2\alpha - 1}{\alpha\cdot(1 - \alpha)} \right)$, so we have $f = \fo$. Since $\beta > \alpha$ by \cref{lemma:betagreateralpha}, then \cref{suf:a+af-bf>=0} simplifies to $\frac{\alpha}{\beta - \alpha} \geq \fo$. As $\fo$ is the unique positive root of the upward quadratic $Q$ from \cref{def:fo}, with $Q(0) = -1 < 0$, it suffices that $Q \left( \frac{\alpha}{\beta - \alpha} \right) \geq 0$. We have
        \begin{align*}
            Q \left( \frac{\alpha}{\beta - \alpha} \right) &= \frac{(1 + \alpha - \beta) \alpha^2 + (1 + \beta)(\beta - \alpha) \alpha - (\beta - \alpha)^2}{(\beta - \alpha)^2} \\
                &= \frac{- \alpha^2 + \alpha^3 - 2\alpha^2 \beta + 3\alpha \beta + \alpha \beta^2 - \beta^2}{(\beta - \alpha)^2}\;.
        \end{align*}
        So, it suffices that $- \alpha^2 + \alpha^3 - 2\alpha^2 \beta + 3\alpha \beta + \alpha \beta^2 - \beta^2 \geq 0$, that is,
        \[
            (1 - \alpha)\beta^2 + (2\alpha^2 - 3\alpha)\beta + \alpha^2 - \alpha^3 \leq 0\;.
        \]
        The left-hand site is an upward quadratic in $\beta$, hence nonpositive exactly between its roots, which are, by the quadratic formula,
        \[
            \beta_\pm = \frac{3\alpha - 2\alpha^2 \pm \alpha \sqrt{5 - 4\alpha}}{2(1 - \alpha)} = \frac{\alpha \cdot (3 - 2\alpha \pm \sqrt{5 - 4\alpha})}{2(1 - \alpha)}\;.
        \]
        So, it suffices to prove $\beta_- \leq \beta \leq \beta_+$. Since $(\alpha, \beta) \in \Usymb$, we have $\beta < \frac{\alpha \cdot (3 - 2\alpha + \sqrt{5 - 4\alpha})}{2(1 - \alpha)} = \beta_+$, proving the second inequality. Again, since $(\alpha, \beta) \in \Usymb$, we have $\beta > \frac{1}{2}$, so to prove the first inequality, it suffices to show $\beta_- \leq \frac{1}{2}$, which simplifies to
        \[
            0 \leq 1 - 4\alpha + 2\alpha^2 + \alpha \sqrt{4 - 4\alpha + 1}\;.
        \]
        Since $4 - 4\alpha + 1 \geq 4 - 4\alpha + \alpha^2 = (2 - \alpha)^2$, we have
        \[
            1 - 4\alpha + 2\alpha^2 + \alpha \sqrt{4 - 4\alpha + 1} \geq 1 - 4\alpha + 2\alpha^2 + \alpha (2 - \alpha) = 1 - 2\alpha + \alpha^2 = (1 - \alpha)^2 \geq 0\;,
        \]
        which was to be proven.
    \end{casedist}
\end{proof}

The safety threshold was chosen so that the reserve can never grow past $\rwin$; the next lemma makes this precise.
\begin{lemma}\label{lem:ri<=rwin<=1}
    Let $(\alpha, \beta) \in \Usymb$ and assume \algsymb does not win early on an instance $I = (x_1, \dots, x_n)$. Then, at every step $i$, we have
    \[
        r_i \leq \rwin \leq 1\;.
    \]
\end{lemma}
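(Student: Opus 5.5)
The statement splits into two independent claims: the purely numerical bound $\rwin \leq 1$, and the dynamic claim $r_i \leq \rwin$ for every step $i$. I will handle the inequality first and then the run of \algsymb.

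\textbf{The bound $\rwin \leq 1$.} By \cref{def:rwin} and the positivity $1 - \alpha - \alpha f > 0$ from \cref{eq:1-bf>0_1-a-af>=(1-a)^2>0}, this is equivalent to
\[
    f^2 + \alpha f \leq 1 - \alpha, \qquad\text{i.e.,}\qquad f^2 \leq (1 - \alpha)(\ldots)
\]
in the appropriate rearranged form; concretely, I will show $f^2 + \alpha f + \alpha \leq 1$. Since $f \leq \frac12$, the left-hand side is at most $\frac14 + \frac{3\alpha}{2}$, which is at most $1$ whenever $\alpha \leq \frac12$. For $\alpha > \frac12$, I will instead use $f \leq 1 - \alpha$, which gives
\[
    f^2 + \alpha f + \alpha \leq (1 - \alpha)^2 + \alpha(1 - \alpha) + \alpha = 1\;.
\]
Both ranges of $\alpha$ are therefore covered, and no case analysis via \cref{lem:f_piecewise} is needed.

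\textbf{The bound $r_i \leq \rwin$.} Reservations happen only in Phase~1, and Phase~2 (without winning early) never changes $R$. It therefore suffices to prove $r_i \leq \rwin$ for all $i$ within Phase~1, which I will do by induction on $i$.

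The base case is $r_0 = 0 \leq \rwin$. For the induction step, suppose $x_i$ is reserved. Then $x_i < \pwin{r_{i-1}}$, so
\[
    r_i = r_{i-1} + x_i < r_{i-1} + \pwin{r_{i-1}}\;.
\]
By the induction hypothesis $r_{i-1} \leq \rwin$, and by \cref{lem:r+pwin(r)_increasing_in_r} the map $r \mapsto r + \pwin{r}$ is non-decreasing. Hence
\[
    r_{i-1} + \pwin{r_{i-1}} \leq \rwin + \pwin{\rwin} = \rwin\;,
\]
using $\pwin{\rwin} = 0$. This gives $r_i < \rwin$.

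There is one subtlety: if $\pwin{r_{i-1}} \leq 0$, no item can be reserved at all, because items have positive size. The argument above then applies only to the case where a reservation actually occurs, and it remains valid as written.

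\textbf{Main obstacle.} The only step that needs any care is verifying $\rwin \leq 1$, and in particular choosing the right upper bound on $f$ depending on whether $\alpha \leq \frac12$ or $\alpha > \frac12$. Once $f \leq \min(\frac12, 1 - \alpha)$ is used in this way, the verification is routine.
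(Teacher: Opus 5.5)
Your proof is correct and follows the paper's argument. The dynamic part is the same induction over Phase~1 using \cref{lem:r+pwin(r)_increasing_in_r} and $\pwin{\rwin} = 0$, and for $\rwin \leq 1$ the paper likewise reduces to $(f+\alpha)f \leq 1-\alpha$. Your case split on $\alpha$ is unnecessary: your second computation, which uses $f \leq 1-\alpha$, already holds for every $\alpha$ (the paper argues $(f+\alpha)f \leq f \leq 1-\alpha$). You should also delete the placeholder ``$f^2 \leq (1-\alpha)(\ldots)$''.
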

\begin{proof}
    We begin by proving $\rwin \leq 1$. Because of \cref{def:rwin}, this simplifies to
    \[
        (f + \alpha) f \leq 1 - \alpha\;.
    \]
    Since $f \leq 1 - \alpha$ by \cref{def:f}, we have $f + \alpha\leq 1$, so $(f + \alpha) f \leq f \leq 1 - \alpha$, as needed.

    Next, we prove that $r_i \leq \rwin$ for every step $i$. Let $x_s$ be the item that triggers \algsymb to switch from Phase 1 to Phase 2, defining $s = n + 1$ if no switch occurs. In particular, every item before $x_s$ is reserved by \algsymb and no others. \algsymb does not win early, so from time step $s$ on, the reserve stays the same, \ie, $r_i = r_{s - 1}$, for all $i \geq s$. So, it suffices to prove $r_i \leq \rwin$ for every step $i < s$, which we do by induction on $i$.
    
    For $i = 0$, we have $r_0 = 0 \leq \rwin$. Assume $\rcurr \leq \rwin$, for some fixed $0 < i < s$. As $x_i$ is reserved, the test in \Lineref{line:phase1check} gives $x_i < \pwin{\rcurr}$. So, we have
    \[
        r_i = \rcurr + x_i < \rcurr + \pwin{\rcurr}\;.
    \]
    By \cref{lem:r+pwin(r)_increasing_in_r}, $r + \pwin{r}$ is non-decreasing in $r$, and $\rcurr \leq \rwin$ by induction hypothesis. So, we have
    \[
        r_i < \rwin + \pwin{\rwin} = \rwin\;,
    \]
    where the equality uses $\pwin{\rwin} = 0$ from \cref{def:rwin}; this closes the induction.
\end{proof}

As in Phase 1 no items get packed by \algsymb, the packability invariant is reduced to $r_i \leq 1$, which is a direct corollary from \cref{lem:ri<=rwin<=1}.

\begin{corollary}\label{cor:packability_invariant_phase_1}
    Let $(\alpha, \beta) \in \Usymb$ and assume that \algsymb does not win early. Then the packability invariant $p_i + r_i \leq 1$ holds at every step $i$ of Phase 1. \qed
\end{corollary}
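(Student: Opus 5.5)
The plan is to reduce the invariant to a bound on the reserve alone and then apply \cref{lem:ri<=rwin<=1} directly.

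Let $x_s$ be the item that ends Phase~1, with $s = n + 1$ if Phase~1 never ends. We consider the steps $0 \leq i < s$. In these steps \algsymb only executes \Lineref{line:reserve}. It never reaches \Lineref{line:pack} or \Lineref{line:win}. Since $P_0 = \emptyset$, induction on $i$ shows that $P_i = \emptyset$, and hence $p_i = 0$, for every such $i$. The invariant $p_i + r_i \leq 1$ therefore becomes $r_i \leq 1$.

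This follows at once from \cref{lem:ri<=rwin<=1}. Its hypotheses are $(\alpha, \beta) \in \Usymb$ and that \algsymb does not win early, and both hold here. The lemma gives $r_i \leq \rwin \leq 1$ at every step $i$, in particular for every step of Phase~1. Combining this with $p_i = 0$ yields $p_i + r_i = r_i \leq 1$.

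No step is a real obstacle, since the substantive work was already done in \cref{lem:r+pwin(r)_increasing_in_r} and \cref{lem:ri<=rwin<=1}. The only point to state carefully is which indices count as Phase~1. The step $i = s$ on which $x_s$ arrives belongs to Phase~2, because at that step $x_s$ may be packed. The Phase~2 case is handled separately by the inductive interval-covering argument sketched earlier.
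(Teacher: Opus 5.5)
Your proposal is correct and follows the paper's argument: since nothing is packed during Phase~1, $p_i = 0$, so the invariant reduces to $r_i \leq 1$, which follows directly from \cref{lem:ri<=rwin<=1}. Your step indexing, with Phase~1 being the steps $i < s$ and step $s - 1$ serving as the base case for Phase~2, also agrees with how the paper uses the corollary in \cref{lem:packability_invariant_phase_2}.
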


Let us now prove the packability invariant in Phase 2. For Phase 2, we record how the option of winning early depends on the size of the current item. Removing nothing and adding a subset $\Rsub \subseteq \Rcurr$, \algsymb wins with $\Rsub$ exactly when
\[
    f + \alpha \rcurr \leq \pcurr + \rsub + x_i \leq 1\;.
\]
So, we can define a corresponding \emph{pack-and-win interval} of item sizes
\begin{equation}
    \winterval{\Rsub} \da [f + \alpha \rcurr - \rsub - \pcurr, 1 - \rsub - \pcurr]\;. \label{def:winterval}
\end{equation}
The crucial structural fact to be proven is that these intervals leave no gaps as $\Rsub$ ranges over all subsets of the reserve. Before we do so, we need the following two technical lemmas.

\begin{lemma}\label{lem:xcurr<=1-f-arcurr}
    Let $(\alpha, \beta) \in \Usymb$. For every item $x_i$ that got reserved by \algsymb, we have
    \[
        x_i \leq 1 - f - \alpha r_i\;.
    \]
\end{lemma}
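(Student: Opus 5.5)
The plan is to combine the Phase-1 test with the bound on the reserve from \cref{lem:ri<=rwin<=1}, and to reduce the claim to a linear function of the previous reserve size whose values at the two endpoints can be checked directly.

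Since $x_i$ was reserved, the test in \Lineref{line:phase1check} gives $x_i < \pwin{\rcurr}$, and $r_i = \rcurr + x_i$. Hence the claim $x_i \leq 1 - f - \alpha r_i$ is equivalent to
\[
  (1 + \alpha) x_i + \alpha \rcurr \leq 1 - f\;,
\]
and it suffices to show
\[
  g(r) \da (1 + \alpha)\, \pwin{r} + \alpha r \leq 1 - f
\]
for $r = \rcurr$. By \cref{lem:ri<=rwin<=1}, $\rcurr \in [0, \rwin]$. (That lemma assumes \algsymb does not win early, but its induction only uses the Phase-1 reservations, so it applies verbatim to every reserved item.) By \cref{def:pwin}, $g$ is affine in $r$. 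Therefore it attains its maximum over $[0, \rwin]$ at one of the endpoints, and it suffices to check $r = 0$ and $r = \rwin$.

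At $r = \rwin$, we have $\pwin{\rwin} = 0$, so the inequality reads
\[
  \frac{\alpha f^2}{1 - \alpha - \alpha f} \leq 1 - f\;.
\]
Multiplying by $1 - \alpha - \alpha f > 0$ (\cref{eq:1-bf>0_1-a-af>=(1-a)^2>0}) and expanding, this becomes $\alpha f^2 \leq 1 - \alpha - f + \alpha f^2$, that is, $f \leq 1 - \alpha$. This holds by \cref{def:f}.

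At $r = 0$, the inequality reads
\[
  \frac{f^2}{1 - \beta f} \leq \frac{1 - f}{1 + \alpha}\;.
\]
This is the step needing care, since $f$ need not equal $\fo$. For $f = \fo$ it holds with equality by \cref{eq:fo_identity}. In general, $f \leq \fo$, and on $[0, \fo]$ the left side $t \mapsto t^2/(1 - \beta t)$ is increasing (the denominator stays positive by \cref{eq:1-bfo>0}), while the right side is decreasing in $t$. So for $f \leq \fo$,
\[
  \frac{f^2}{1 - \beta f} \leq \frac{\fo^2}{1 - \beta \fo} = \frac{1 - \fo}{1 + \alpha} \leq \frac{1 - f}{1 + \alpha}\;.
\]
Both endpoint checks therefore hold, so $g(\rcurr) \leq 1 - f$. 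Combined with the strict inequality $x_i < \pwin{\rcurr}$, this gives $(1 + \alpha) x_i + \alpha \rcurr < 1 - f$, which is the claim.

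I expect the $r = 0$ endpoint to be the main obstacle. It is where the defining identity \cref{eq:fo} must be transferred from $\fo$ to the truncated value $f = \min(\fo, \tfrac12, 1 - \alpha)$, and the monotonicity argument above is how I plan to handle it.
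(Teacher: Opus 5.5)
Your proof is correct, but it takes a different route from the paper's.

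\textbf{The paper's route.} After the same reduction to $(1+\alpha)\pwin{\rcurr}+\alpha\rcurr\le 1-f$, the paper splits into the three branches of \cref{lem:f_piecewise} and checks each by direct algebra:
\begin{itemize}
\item For $f=\tfrac12$, it notes that the left side increases in $\beta$ (using $\pwin{\rcurr}>0$ from the Phase-1 test). It substitutes $\beta=1-\alpha$ and ends at $\alpha\le\tfrac12$.
\item For $f=1-\alpha$, it factors out $1-\rcurr$, using $\rcurr\le 1$ from \cref{lem:ri<=rwin<=1}, and lands exactly on the case hypothesis.
\item For $f=\fo$, it cancels the constant terms via \cref{eq:fo_identity} and is left with $\fo\le 1-\alpha$.
\end{itemize}

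\textbf{How your route relates.} You use that the left side is affine in $\rcurr$ on $[0,\rwin]$, so two endpoint checks suffice. The conditions the paper verifies branch by branch are instances of your two endpoint conditions:
\begin{itemize}
\item $\alpha\le\tfrac12$ and $\fo\le 1-\alpha$ are $f\le 1-\alpha$ specialized to $f=\tfrac12$ and $f=\fo$.
\item $\beta\le 1-\alpha$ and $\beta\le\alpha+\frac{2\alpha-1}{\alpha(1-\alpha)}$ are $f^2/(1-\beta f)\le(1-f)/(1+\alpha)$ specialized to $f=\tfrac12$ and $f=1-\alpha$.
\end{itemize}
Your monotonicity argument transfers \cref{eq:fo_identity} from $\fo$ to any $f\le\fo$. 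This removes the need for the branch distinction: the lemma holds for every $0<f\le\min(\fo,1-\alpha)$ once the reserve bound is available.

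\textbf{Cost.} You invoke $\rcurr\le\rwin$ in every branch, whereas the paper needs a reserve bound only for $f=1-\alpha$. In that branch the bound is unavoidable anyway: there $\rwin=1$, the inequality is tight at $\rcurr=1$, and the affine function has nonnegative slope. Since \cref{lem:ri<=rwin<=1} is proved without this lemma, nothing circular arises. Your remark that its ``no early win'' hypothesis is harmless is also right, because all reservations happen in Phase~1.
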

\begin{proof}
    Since $r_i = \rcurr + \xcurr$, the claim is equivalent to
    \begin{equation}
        (1 + \alpha) x_i \leq 1 - f - \alpha \rcurr\;. \label{suf:lem:xcurr<1-f-arcurr:(1+a)xi<=1-f-arcurr}
    \end{equation}
    As $x_i$ was reserved, the test in \Lineref{line:phase1check} of \algsymb gives
    \begin{equation}
        x_i < \pwin{\rcurr} = \frac{f^2 - (1 - \alpha - \alpha f) \rcurr}{1 - \beta f}\;. \label{eq:lem:xcurr<=1-f-arcurr:xi_upper_bound}
    \end{equation}
    Substituting this into \cref{suf:lem:xcurr<1-f-arcurr:(1+a)xi<=1-f-arcurr}, it suffices to show
    \begin{equation}
        (1 + \alpha) \frac{f^2 - (1 - \alpha - \alpha f) \rcurr}{1 - \beta f} \leq 1 - f - \alpha \rcurr\;. \label{suf:lem:xcurr<1-f-arcurr:substituted}
    \end{equation}
    We distinguish three cases according to \cref{lem:f_piecewise}.
    \begin{casedist}
        \case Assume $\beta \leq 1 - \alpha$, so $f = \frac{1}{2}$. In \cref{eq:lem:xcurr<=1-f-arcurr:xi_upper_bound}, the numerator $f^2 - (1 - \alpha - \alpha f) \rcurr = \pwin{\rcurr} (1 - \beta f)$ is positive, since $\pwin{\rcurr} > x_i > 0$ by \cref{eq:lem:xcurr<=1-f-arcurr:xi_upper_bound} and $1 - \beta f > 0$ by \cref{eq:1-bf>0_1-a-af>=(1-a)^2>0}. With $f = \frac{1}{2}$ fixed, only the denominator $1 - \beta f = 1 - \frac{\beta}{2}$ depends on $\beta$; hence the left-hand side of \cref{suf:lem:xcurr<1-f-arcurr:substituted} increases in $\beta$, while the right-hand side is independent of $\beta$. Replacing $\beta$ by its maximum $1 - \alpha$ thus yields the hardest case, and it suffices to show
        \[
            (1 + \alpha) \frac{\frac{1}{4} - (1 - \frac{3}{2} \alpha) \rcurr}{\frac{1}{2} (1 + \alpha)} \leq \frac{1}{2} - \alpha \rcurr\;,
        \]
        which simplifies to
        \[
            \frac{1}{2} - (2 - 3\alpha) \rcurr \leq \frac{1}{2} - \alpha \rcurr\;.
        \]
        For $\rcurr = 0$, the inequality holds trivially. For $\rcurr > 0$, it simplifies to
        \[
            \alpha \leq \frac{1}{2}\;,
        \]
        which holds since $\frac{1}{2} < \beta \leq 1 - \alpha$ by \cref{def:Usymb}.
        \case Assume $\beta \leq \alpha + \frac{2\alpha - 1}{\alpha\cdot(1 - \alpha)}$, so $f = 1 - \alpha$. Substituting $f = 1 - \alpha$ in \cref{suf:lem:xcurr<1-f-arcurr:substituted}, it suffices to show
        \[
            (1 + \alpha)\cdot \frac{(1 - \alpha)^2 - (1 - \alpha - \alpha(1 - \alpha)) \rcurr}{1 - \beta\cdots (1 - \alpha)} \leq 1 - (1 - \alpha) - \alpha \rcurr\;,
        \]
        which simplifies to
        \[
            (1 + \alpha) (1 - \rcurr) \frac{(1 - \alpha)^2}{1 - \beta (1 - \alpha)} \leq \alpha(1 - \rcurr)\;.
        \]
        For $\rcurr = 1$, this holds trivially. Otherwise, by \cref{lem:ri<=rwin<=1}, $1 - \rcurr \geq 0$, and $1 - \beta(1 - \alpha) = 1 - \beta f > 0$ by \cref{eq:1-bf>0_1-a-af>=(1-a)^2>0}. So, the inequality is equivalent to
        \[
            (1 + \alpha) (1 - \alpha)^2 \leq \alpha(1 - \beta(1 - \alpha))\;,
        \]
        which rearranges to $\alpha \beta \cdot (1 - \alpha) \leq \alpha - (1 + \alpha)(1 - \alpha)^2 = (2\alpha - 1) + \alpha^2\cdot (1 - \alpha)$ and thus to exactly the case hypothesis
        \[
            \beta \leq \alpha + \frac{2\alpha - 1}{\alpha\cdot(1 - \alpha)}\;.
        \]
        \case Assume $\beta > \max \left( 1 - \alpha, \alpha + \frac{2 \alpha - 1}{\alpha\cdot (1 - \alpha)} \right)$, so $f = \fo$. Dividing \cref{suf:lem:xcurr<1-f-arcurr:substituted} by $1 + \alpha > 0$ and splitting both sides, it suffices to show
        \[
            \frac{\fo^2}{1 - \beta \fo} - \frac{(1 - \alpha - \alpha \fo) \rcurr}{1 - \beta \fo} \leq \frac{1 - \fo}{1 + \alpha} - \frac{\alpha \rcurr}{1 + \alpha}\;.
        \]
        Since $\frac{\fo^2}{1 - \beta \fo} = \frac{1 - \fo}{1 + \alpha}$ by \cref{eq:fo_identity}, the first terms on each side cancel and the inequality becomes
        \[
            \frac{\alpha \rcurr}{1 + \alpha} \leq \frac{(1 - \alpha - \alpha \fo) \rcurr}{1 - \beta f}\;.
        \]
        For $\rcurr = 0$, this holds trivially. For $\rcurr > 0$, it simplifies to
        \[
            \alpha \leq (1 - \alpha - \alpha \fo) \frac{1 + \alpha}{1 - \beta \fo}\;.
        \]
        Again by \cref{eq:fo_identity}, we have $\frac{1 + \alpha}{1 - \beta \fo} = \frac{1 - \fo}{\fo^2}$, so it suffices to show
        \[
            \alpha \leq (1 - \alpha - \alpha \fo) \frac{1 - \fo}{\fo^2}\;.
        \]
        By multiplying with $\fo^2 > 0$ and rearranging, the inequality simplifies to
        \[
            \fo \leq 1 - \alpha\;,
        \]
        which holds because $\fo = f \leq 1 - \alpha$ by \cref{def:f}.
    \end{casedist}
\end{proof}

The following purely combinatorial lemma shows that the subset sums of a set whose elements are individually small in the sense below, leave no gaps larger than $\gapconst$; applied to the reserve, it will let the pack-and-win-intervals chain into a single interval.

\begin{lemma}\label{lem:subset_gaps}
    Let $S = \{x_1, \dots, x_n\}$ be a set of $n$ positive reals, not necessarily distinct, and $\gapconst \geq 0$ a constant with $x_i \leq \gapconst + \sum_{j > i} x_j$ for all $i$. Then any enumeration $S_1, \dots, S_{2^n}$ of the subsets of $S$ with $s_1 \leq \cdots \leq s_{2^n}$ has consecutive gaps of at most $\gapconst$, \ie, $s_{k + 1} - s_k \leq \gapconst$ for every $k$.
\end{lemma}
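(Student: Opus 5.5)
We argue by induction on $n$, adding the elements from the back. Write $T_m \da \{x_m, \dots, x_n\}$ and let $\Sigma_m$ denote the multiset of subset sums of $T_m$. The claim is equivalent to the following: the sorted list of $\Sigma_1$ has consecutive differences at most $\gapconst$. We prove the stronger statement that, for every $m$, the sorted list of $\Sigma_m$ has consecutive gaps at most $\gapconst$. This list starts at $0$ and ends at $t_m \da \sum_{j \geq m} x_j$. The base case is $T_{n+1} = \emptyset$, whose only subset sum is $0$, so there is nothing to show.

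For the inductive step, assume that $\Sigma_{m+1}$ has gaps at most $\gapconst$, that its minimum is $0$ and that its maximum is $t_{m+1}$. Then
\[
    \Sigma_m = \Sigma_{m+1} \cup (x_m + \Sigma_{m+1}),
\]
taken as a multiset union. Each of the two parts has gaps at most $\gapconst$ on its own range: the first covers $[0, t_{m+1}]$ and the second covers $[x_m, x_m + t_{m+1}]$, since shifting preserves gaps. Merging two sorted sequences cannot create a larger gap anywhere the ranges overlap. The reason is that inserting additional points only splits existing gaps. So a large gap can only appear in the transition region between the two ranges.

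We split into two cases according to how the ranges meet. If $x_m \leq t_{m+1}$, the two ranges overlap. Every consecutive pair in the merged list then either lies inside the range of one sequence, which is covered by that sequence's gap bound, or straddles the boundary. The straddling case is handled by the same argument because the union of the two ranges is an interval. If instead $x_m > t_{m+1}$, the only new gap is the one between the largest element $t_{m+1}$ of the first part and the smallest element $x_m$ of the second part. By hypothesis this gap satisfies $x_m - t_{m+1} \leq \gapconst$.

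The main subtlety is making the merging argument rigorous. Concretely, we need the following fact. Let $A$ and $B$ be finite sets of reals, each with consecutive gaps at most $\gapconst$, and suppose the gap between the intervals $[\min A, \max A]$ and $[\min B, \max B]$ is at most $\gapconst$ (it is zero if the intervals overlap). Then $A \cup B$ has consecutive gaps at most $\gapconst$.

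To prove this, take consecutive elements $u < v$ of $A \cup B$. Suppose $u \in A$ and $u < \max A$. Then the successor of $u$ in $A$ is at most $u + \gapconst$, so $v \leq u + \gapconst$. The case $u \in B$ with $u < \max B$ is symmetric. Otherwise $u = \max A$ or $u = \max B$, and the other set contains $v$ and nothing strictly between $u$ and $v$. Say $u = \max A$ and $v \in B$ (the other case is symmetric). If $v = \min B$, then $v - u$ is exactly the gap between the two intervals, which is at most $\gapconst$. If $v > \min B$, then the predecessor of $v$ in $B$ lies at or below $u$, because $u$ and $v$ are consecutive in the union. Hence $v - u \leq \gapconst$.

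Applying this fact at $m = 1$ gives the lemma. Since the sorted sequence $s_1 \leq \dots \leq s_{2^n}$ in the statement is a sorted enumeration of $\Sigma_1$ with multiplicity, repeated values only add zero gaps, and the bound $s_{k+1} - s_k \leq \gapconst$ follows.
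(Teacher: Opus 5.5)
Your proof is correct and follows essentially the same route as the paper. Both use backward induction over the suffix sets $\{x_m,\dots,x_n\}$, split the subsets according to whether they contain $x_m$, note that the shifted chain keeps its gaps, and observe that the only possibly new gap is $x_m - \sum_{j>m} x_j \le \gamma$ between the top of the unshifted chain and the bottom of the shifted one; you isolate this as a direct merging fact for two gap-bounded sets whose hulls are at most $\gamma$ apart, whereas the paper argues by contradiction on an adjacent pair $T_a$, $T'_b$, forcing $a = M$ and $b = 1$.
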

\begin{proof}
    For $n = 0$, there is nothing to show, so assume $n \geq 1$. We prove, by induction on $i$ from $n$ down to $1$, the following: an enumeration of the subsets of $\{x_i, \dots, x_n\}$ with non-decreasing sizes has consecutive size gaps of at most $\gapconst$. Taking $i = 1$ gives the claim.

    For $i = n$, the subsets of $\{x_n\}$ are $\emptyset$ and $\{x_n\}$. Their gap is $x_n \leq \gapconst + \sum_{j > n} x_j = \gapconst$, using the assumption of the lemma. So the enumeration $\emptyset, \{x_n\}$ fulfills the wanted property.

    For a fixed $i < n$, assume every enumeration $T_1, \dots, T_M$ of the subsets of $T \da \{x_{i + 1}, \dots, x_n\}$ with $t_1 \leq \cdots \leq t_M$ has consecutive gaps of at most $\gapconst$, \ie, $t_{k + 1} - t_k \leq \gapconst$ for all $k$, where $M = 2^{n - i}$. Every subset of $\{x_i, \dots, x_n\}$ is either some $T_k$ or some $T'_k \da T_k \cup \{x_i\}$.
    
    Any enumeration of the subsets of $\{x_i, \dots, x_n\}$ with non-decreasing sizes arises by merging the two chains $T_1, \dots, T_M$ and $T'_1, \dots, T'_M$ in order of non-decreasing size: the enumeration restricted to each chain is itself non-decreasing, so it induces a valid ordering of that chain, and the induction hypothesis applies to the ordering induced on $T_1, \dots, T_M$. Note also that the primed chain has the same internal gaps as the unprimed one, since $t'_{k + 1} - t'_k = t_{k + 1} - t_k \leq \gapconst$. We prove that any such merged enumeration has consecutive gaps of at most $\gapconst$.

    Consider an arbitrary such enumeration. Consecutive elements of the merged enumeration come from the two sorted chains, each of which has internal gaps of at most $\gapconst$, as observed above. A new gap can only occur where the merge switches chains, \ie, between some $T_a$ and some $T'_b = T_b \cup \{x_i\}$ that are adjacent in the merge.
    
    To arrive at a contradiction, assume two sets $T_a$ and $T'_b$ that are adjacent in the merge have a gap of more than $\gapconst$.
    
    First, assume $t_a \leq t'_b$. The set $T_{a + 1}$ has a gap of at most $\gapconst$ to $T_a$ by induction hypothesis and thus $t_a \leq t_{a + 1} < t'_b$, since the gap between $T_a$ and $T'_b$ is larger than $\gapconst$. But $T_{a + 1}$ succeeds $T_a$ in the merge, so it would have to lie strictly between $T_a$ and $T_b$ in the enumeration, contradicting their adjacency. Hence $T_{a + 1}$ does not exist, implying $a = M$. With a similar argumentation, we get $b = 1$. So, the gap between $T_a$ and $T'_b$ is
    \[
        t'_b - t_a = t'_1 - t_M = x_i - \sum_{j > i} x_j \leq \gapconst\;,
    \]
    which is a contradiction.
    
    Now, assume $t_a > t'_b$. A similar argumentation as above gives $a = 1$ and $b = M$, so $t_a = t_1 = 0$ and $t'_b = t'_M = \sum_{j \geq i} x_j$, which is a contradiction to $t_a > t'_b$.
\end{proof}

\begin{lemma}\label{lem:wintervals_cover_without_gaps}
    Let $(\alpha, \beta) \in \Usymb$, assume \algsymb does not win early on an instance $I = (x_1, \dots, x_n)$, and let step $i$ be in Phase 2. Then, the intervals $\winterval{\Rsub}$ with $\Rsub \subseteq \Rcurr$ cover the interval
    \[
        [f - (1 - \alpha) \rcurr - \pcurr, 1 - \pcurr]
    \]
    without gaps.
\end{lemma}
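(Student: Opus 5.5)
The plan is to reduce the claim to \cref{lem:subset_gaps}, applied to the reserve $\Rcurr$ with gap constant $\gapconst \da 1 - f - \alpha\rcurr$.

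First, the geometry of the intervals. Each $\winterval{\Rsub}$ has the form $[a - \rsub,\, b - \rsub]$ with $a \da f + \alpha\rcurr - \pcurr$ and $b \da 1 - \pcurr$. So every interval has the same length $b - a = 1 - f - \alpha\rcurr = \gapconst$. This length is nonnegative because $f + \alpha\rcurr \leq f + \alpha\rwin \leq 1$, by \cref{lem:ri<=rwin<=1} together with $f \leq 1-\alpha$.

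Next, order the subsets of $\Rcurr$ by non-decreasing size, $s_1 \leq \dots \leq s_{2^m}$. Two consecutive intervals $[a - s_{k+1}, b - s_{k+1}]$ and $[a - s_k, b - s_k]$ overlap or touch exactly when $b - s_{k+1} \geq a - s_k$, that is, when $s_{k+1} - s_k \leq \gapconst$. If this holds for every $k$, the union of the intervals is $[a - \rcurr,\, b]$. Since the extreme subsets are $\Rsub = \Rcurr$ and $\Rsub = \emptyset$, this union equals $[f - (1-\alpha)\rcurr - \pcurr,\, 1 - \pcurr]$, which is the claim.

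It remains to verify the hypothesis of \cref{lem:subset_gaps}. Because \algsymb does not win early, Phase~2 never changes the reserve, so $\Rcurr$ is exactly the set of items reserved in Phase~1, listed as $y_1, \dots, y_m$ in arrival order. We need
\[
  y_j \leq \gapconst + \sum_{l > j} y_l .
\]
By \cref{lem:xcurr<=1-f-arcurr}, $y_j \leq 1 - f - \alpha r^{(j)}$, where $r^{(j)}$ is the reserve size just after $y_j$ is reserved. Since $r^{(j)} \leq \rcurr$, this gives $y_j \leq 1 - f - \alpha\rcurr = \gapconst$ immediately, and the hypothesis follows with room to spare because the tail sum is nonnegative. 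With the hypothesis in hand, \cref{lem:subset_gaps} supplies the gap bound and closes the argument.

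The only step I expect to need care is the bookkeeping in the hypothesis check: we must use that the reserve is frozen throughout Phase~2, so that $\rcurr$ bounds every intermediate reserve size. The case $\rcurr = 0$ is trivial, since then only the single interval $\winterval{\emptyset} = [f - \pcurr,\, 1 - \pcurr]$ occurs.
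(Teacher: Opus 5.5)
Your reduction is the same as the paper's: all pack-and-win intervals have length $\gapconst = 1 - f - \alpha\rcurr \geq 0$. Intervals that are consecutive by subset size overlap exactly when the size gap is at most $\gapconst$. \cref{lem:subset_gaps}, applied to the frozen Phase-1 reserve, supplies this gap bound.

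The flaw is in verifying the hypothesis of \cref{lem:subset_gaps}. From \cref{lem:xcurr<=1-f-arcurr} you get $y_j \leq 1 - f - \alpha r^{(j)}$. But since $r^{(j)} \leq \rcurr$, the right-hand side is \emph{at least} $\gapconst$, not at most, so $y_j \leq \gapconst$ does not follow.

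It is in fact false in general. Take $\alpha = 0.3$ and $\beta = 0.9$, where $f = \fo \approx 0.4782$, and the instance $(0.4, 0.0105, 0.001)$. \algsymb reserves the first two items, since $0.4 < \pwin{0} \approx 0.4014$ and $0.0105 < \pwin{0.4} \approx 0.0106$. The third item starts Phase~2 without an early win. At that step $\gapconst = 1 - f - 0.3 \cdot 0.4105 \approx 0.3987 < 0.4$. So the tail sum is not ``room to spare''; it is needed.

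The repair is to keep the tail sum instead of discarding it. Writing $\rcurr = r^{(j)} + \sum_{l > j} y_l$, you get
\[
    y_j \leq 1 - f - \alpha r^{(j)} = \gapconst + \alpha \sum_{l > j} y_l \leq \gapconst + \sum_{l > j} y_l\;,
\]
using $\alpha < 1$. This is exactly the paper's computation: the tail sum in the hypothesis of \cref{lem:subset_gaps} absorbs the additional reservation cost $\alpha \sum_{l>j} y_l$ of the items reserved after $y_j$. With this correction, the rest of your argument goes through unchanged.
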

\begin{proof}
    For subsets $\Rsubo$ and $\Rsubt$ with $\rsubo \leq \rsubt$, the interval $\winterval{\Rsubt}$ is componentwise below $\winterval{\Rsubo}$, so the two overlap if and only if $f + \alpha \rcurr - \rsubo - \pcurr \leq 1 - \rsubt - \pcurr$, which simplifies to
    \begin{equation}
        \rsubt - \rsubo \leq 1 - f - \alpha \rcurr \ad \gapconst\;. \label{eq:cond_for_winterval_overlap}
    \end{equation}
    Note that $\gapconst \geq 0$ and that every interval $\winterval{\Rsub}$ is nonempty: by \cref{lem:ri<=rwin<=1}, $\rcurr \leq \rwin$, so $f + \alpha \rcurr \leq f + \alpha \rwin = \frac{(1 - \alpha) f}{1 - \alpha - \alpha f} \leq 1$, where the last inequality is equivalent to $f \leq 1 - \alpha$.
    Since \algsymb reserves only in Phase 1, the reserve is $\Rcurr = \{x_1, \dots, x_{s - 1}\}$, where $x_s$ is the item that triggered \algsymb to switch from Phase 1 to Phase 2. By \cref{lem:xcurr<=1-f-arcurr}, every reserved item satisfies $x_j \leq 1 - f - \alpha r_j$. Using $(1 - \alpha) \sum_{k = j + 1}^{s - 1} x_k \geq 0$ and $\rcurr = r_j + \sum_{k = j + 1}^{s - 1} x_k$, this gives
    \[
        x_j \leq 1 - f - \alpha r_j + (1 - \alpha) \sum_{k = j + 1}^{s - 1} x_k = 1 - f - \alpha \rcurr + \sum_{k = j + 1}^{s - 1} x_k = \gapconst + \sum_{k = j + 1}^{s - 1} x_k\;,
    \]
    so the elements $x_1, \dots, x_{s - 1}$ of $\Rcurr$ satisfy the assumption of \cref{lem:subset_gaps} with the constant $\gapconst$. By \cref{lem:subset_gaps}, any enumeration $\emptyset = S_1, S_2, \dots, S_m = \Rcurr$ of the subsets of $\Rcurr$ with $s_1 \leq \cdots \leq s_m$ has consecutive size gaps at most $\gapconst$. Thus, by \cref{eq:cond_for_winterval_overlap}, the corresponding intervals overlap consecutively, so their union is connected; it runs from the top end $1 - \pcurr$ of $\winterval{\emptyset}$ down to the bottom end $f + \alpha \rcurr - \rcurr - \pcurr = f - (1 - \alpha) \rcurr - \pcurr$ of $\winterval{\Rcurr}$, which is the claim.
\end{proof}

\begin{lemma}\label{lem:packability_invariant_phase_2}
    Let $(\alpha, \beta) \in \Usymb$ and suppose that \algsymb does not win early. Then the packability invariant $p_i + r_i \leq 1$ holds at every step $i$ in Phase 2.
\end{lemma}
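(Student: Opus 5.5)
We argue by induction over the steps of Phase~2. The base case is the last step of Phase~1, step $s-1$, where $p_{s-1}+r_{s-1}\le 1$ holds by \cref{cor:packability_invariant_phase_1}; if no switch occurs there is nothing to prove. Since \algsymb does not win early, Phase~2 never removes anything and never reserves, so $r_i=\rcurr$ throughout. Hence at a step $i$ in which $x_i$ is rejected, $p_i+r_i=\pcurr+\rcurr\le 1$ follows directly from the induction hypothesis. It remains to treat a step in which $x_i$ is packed greedily in \Lineref{line:pack}.

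For such a step, $x_i$ fits, so $x_i\le 1-\pcurr$, and $x_i>0$. Since \algsymb did not win early on $x_i$, the win test in \Lineref{line:checkwin} failed for every $P'\subseteq\Pcurr$ and $\Rsub\subseteq\Rcurr$. In particular it failed for $P'=\Pcurr$ (no removal, so the term $\beta(p-p')$ vanishes). Therefore $x_i$ lies in none of the pack-and-win intervals $\winterval{\Rsub}$ from \cref{def:winterval}.

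By \cref{lem:wintervals_cover_without_gaps}, the union of these intervals is the whole interval $[f-(1-\alpha)\rcurr-\pcurr,\,1-\pcurr]$. Because $x_i\le 1-\pcurr$ and $x_i$ avoids this interval, it must lie strictly below it:
\[
x_i< f-(1-\alpha)\rcurr-\pcurr .
\]
Adding $\pcurr+\rcurr$ to both sides gives
\[
p_i+r_i=\pcurr+x_i+\rcurr< f+\alpha\rcurr .
\]
Finally, \cref{lem:ri<=rwin<=1} gives $\rcurr\le\rwin$, and the computation in the proof of \cref{lem:wintervals_cover_without_gaps} shows
\[
f+\alpha\rwin=\frac{(1-\alpha)f}{1-\alpha-\alpha f}\le 1,
\]
which is equivalent to $f\le 1-\alpha$. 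Hence $p_i+r_i\le 1$.

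The only delicate point is the first step of Phase~2, $i=s$, where $\Pcurr=\emptyset$ and the item $x_s$ ended Phase~1. The argument above does not rely on how $x_i$ entered Phase~2, and \cref{lem:wintervals_cover_without_gaps} applies to every Phase-2 step, so this case needs no separate treatment. The main work, the gap-free covering, is already contained in \cref{lem:wintervals_cover_without_gaps}. What remains is to check carefully that the failed win test really excludes $x_i$ from every $\winterval{\Rsub}$, in particular that every interval $\winterval{\Rsub}$ is nonempty.
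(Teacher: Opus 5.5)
Your proof is correct and follows the paper's argument essentially verbatim. You induct from step $s-1$ via \cref{cor:packability_invariant_phase_1} and dispose of rejections by the induction hypothesis; for a packed $x_i$, you exclude it from the covered interval of \cref{lem:wintervals_cover_without_gaps} to get $p_i+r_i\le f+\alpha\rcurr\le f+\alpha\rwin\le 1$. Your closing worry can be dropped: $x_i\in\winterval{\Rsub}$ is literally the win test with $P'=\Pcurr$, so the exclusion needs no nonemptiness, which is used (and already verified) only inside the covering lemma.
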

\begin{proof}
    Let $x_s$ be the item that triggers the switch from Phase 1 to Phase 2; the Phase-2 steps are $i \geq s$. We use induction on $i$, with the last Phase-1 step as base: at step $s - 1$, \cref{cor:packability_invariant_phase_1} gives $p_{s - 1} + r_{s - 1} \leq 1$.
    
    For a fixed $i \geq s$, assume the packability invariant holds for step $i - 1$, \ie, $\pcurr + \rcurr \leq 1$. Since \algsymb does not win, it either rejects or packs item $x_i$. If $x_i$ is rejected, then $p_i + r_i = \pcurr + \rcurr \leq 1$ by induction hypothesis. If $x_i$ is packed, we get by \cref{lem:wintervals_cover_without_gaps},
    \begin{equation}
        x_i \notin [f - (1 - \alpha) \rcurr - \pcurr, 1 - \pcurr]\;. \label{eq:lem:packability_invariant_phase_2:xi_notin_wintervals}
    \end{equation}
    Since $x_i$ gets packed, it must fit into the knapsack, \ie, $\pcurr + x_i \leq 1$ or equivalently $x_i \leq 1 - \pcurr$. So \cref{eq:lem:packability_invariant_phase_2:xi_notin_wintervals} simplifies to
    \[
        x_i < f - (1 - \alpha) \rcurr - \pcurr\;.
    \]
    So, we have
    \[
        p_i + r_i = \pcurr + x_i + \rcurr \leq f + \alpha \rcurr \leq f + \alpha\cdot  \rwin = \frac{(1 - \alpha) f}{1 - \alpha - \alpha f} \leq 1\;,
    \]
    where the second inequality holds by \cref{lem:ri<=rwin<=1} and the last is equivalent to $f \leq 1 - \alpha$, which holds by \cref{def:f}.
\end{proof}

We can now use the results from \cref{lem:packability_invariant_phase_2,cor:packability_invariant_phase_1} to prove \cref{lem:packability_invariant}.

\begin{proof}[of \cref{lem:packability_invariant}]
    Recall that the lemma asserts $p_i + r_i \leq 1$ for every $0 \leq i \leq n$, provided that \algsymb does not win early. \cref{cor:packability_invariant_phase_1} covers the steps of Phase 1, including the initial step $0$, and \cref{lem:packability_invariant_phase_2} covers those of Phase 2; together this is every step $0 \leq i \leq n$.
\end{proof}

\subsection{Critical Rejection}\label{subsec:crit_reject}

With the packability invariant established, we can supply the deferred proof of \cref{lem:critical_rejection}, which we restate here.

\LemmaCriticalRejection*

\begin{proof}
       We show that every item \algsymb rejects has size greater than $\frac{1}{2}$. Two items of size greater than $\frac{1}{2}$ cannot both fit into a knapsack of capacity $1$, so $\Opt(I)$ contains at most one of them, which is the claim.

    Let $x_i$ be an item rejected by \algsymb. Phase~1 rejects nothing, so $x_i$ is rejected in Phase 2; and since \algsymb does not win, $x_i$ was rejected only because it did not fit, i.e., $\pcurr + x_i > 1$. It therefore suffices to show $\pcurr \leq f$, as then $x_i > 1 - \pcurr \geq 1 - f \geq \frac{1}{2}$, using $f \leq \frac{1}{2}$.

    Note that $\pcurr > 0$ and thus $\Pcurr$ is nonempty since otherwise $x_i$ would be greedily packed by \algsymb. 
    Let $x_j$ be the last item packed before $x_i$ is rejected. 
    No items are packed or reserved between the packing of $x_j$ and the rejection of $x_i$, and, as \algsymb does not win early, none are removed either; so the packed and reserved sizes are unchanged over this span, and in particular $\rcurr = r_{j - 1}$ and $\pcurr + \rcurr = p_{j - 1} + x_j + r_{j - 1}$.
    By \cref{lem:packability_invariant}, the packability invariant holds at time step $i - 1$, \ie, $\pcurr + \rcurr \leq 1$. So, we have
    \begin{equation}
        p_{j - 1} + r_{j - 1} + x_j \leq 1\;.\label{eq:lem:critical_rejection:upper_bound_of_win_condition}
    \end{equation}
    Now consider the round in which $x_j$ arrives. \algsymb could have packed $x_j$ together with the whole reserve $R_{j - 1}$, removing nothing; this move satisfies the upper bound of the condition in \Lineref{line:checkwin} because of \cref{eq:lem:critical_rejection:upper_bound_of_win_condition}. If the lower bound would also hold, namely $f + \alpha r_{j - 1} \leq p_{j - 1} + x_j + r_{j - 1}$, then $x_j$ would have caused \algsymb to win early. As that did not happen, that lower bound fails and using $\rcurr = r_{j - 1}$ we have
    \[
        p_{j - 1} + r_{j - 1} + x_j < f + \alpha \rcurr\;.
    \]
    Rearranging and using $\alpha - 1 \leq 0$, we have
    \[
        \pcurr = p_{j - 1} + x_j < f + (\alpha - 1) \rcurr \leq f\;,
    \]
    which completes the proof. 
\end{proof}

\subsection{Safe Rejection}\label{subsec:safe_reject}

Now we prove that, after \algsymb has reached a reserve of size $\rwin$, all its rejections are safe.

\LemmaSafeRejection*

\begin{proof}
    Since \algsymb does not win early, it packs its reserve after the arrival of the last item (\Lineref{line:packall}); so we have
    \[
        \gain{\symb(I)} = p_n + (1 - \alpha) r_n\;.
    \]
    
    Consider the time step at which $x_i$ is rejected. \algsymb has packed $\Pcurr$, reserved $\Rcurr$, and packs a (possibly empty) set $\Plater$ of further items afterwards. As $x_i$ is rejected, \algsymb is in Phase 2, so the reserve is frozen from here on, \ie, $r_n = \rcurr$, and we have
    \begin{equation}
        \gain{\symb(I)} = \pcurr + \plater + (1 - \alpha) \rcurr\;.\label{eq:lem:safe_rejection:gain(alg)}
    \end{equation}

    By \cref{lem:critical_rejection}, $x_i$ is the only item packed by \Opt and rejected by \algsymb. Thus, every other item of $\Opt(I)$ is packed or reserved by \algsymb. In particular, let $\Popt \subseteq \Pcurr$ and $\Ropt \subseteq \Rcurr$ denote the part \Opt takes from the first $i - 1$ items. Since \Opt's remaining items lie among $\Plater$, we have
    \begin{equation}
        \opt(I) \leq \popt + \plater + x_i + \ropt\;.\label{eq:lem:safe_rejection:opt}
    \end{equation}
    
    The fill ratio $\gain{\symb(I)}/\opt(I)$ is positive and at most $1$, so subtracting the common term $\plater$ from \cref{eq:lem:safe_rejection:gain(alg)} and \cref{eq:lem:safe_rejection:opt} only decreases it; hence it suffices to show
    \begin{equation}
        \pcurr + (1 - \alpha) \rcurr \geq (\popt + x_i + \ropt)\cdot f\;.\label{suf:lem:safe_rejection:reduced_fraction}
    \end{equation}
    
    Since \algsymb does not win early on $x_i$, the condition (\Lineref{line:checkwin}) fails for $P' = \Popt$, $R' = \Ropt$. As \Opt packs $\Popt$, $x_i$, and $\Ropt$ together, we have $\popt + x_i + \ropt \leq 1$, so the upper bound of the condition holds and the lower bound must fail:
    \[
        \popt + x_i + \ropt < f + \alpha \rcurr + \beta (\pcurr - \popt) \leq f + \alpha \rcurr + \beta \pcurr\;.
    \]
    Substituting this in \cref{suf:lem:safe_rejection:reduced_fraction} only strengthens its claim, so it suffices to show
    \[
        \pcurr + (1 - \alpha) \rcurr \geq (f + \alpha \rcurr + \beta \pcurr)\cdot f\;.
    \]
    Since $1 - \beta f > 0$ (see \cref{eq:1-bf>0_1-a-af>=(1-a)^2>0}), solving this inequality for $\pcurr$ yields
    \[
        \pcurr \geq \frac{f^2 - (1 - \alpha - \alpha f) \rcurr}{1 - \beta f} \overset{\cref{def:pwin}}{=} \pwin{\rcurr}\;,
    \]
    which holds by assumption.  
\end{proof}

\subsection{Lower Bounds}\label{sub:lb_symbiosis_proofs}
In this section, we restate and prove the lower bounds from \cref{thm:globaltwo,thm:lowerboundsymbiosistrivial,thm:lowerboundf} as already stated in \cref{sec:symbiosis}.
\thmGlobalTwo*
\begin{proof}
    Let $\eps>0$ be sufficiently small. Now consider any deterministic algorithm \alglower.
    \alglower receives items of size $1/2+\eps^i$, for $1\leq i\leq N$, in this order, where $N$ is a large value to be determined later. 
    If \alglower ever rejects an item $1/2+\eps^i$, the instance immediately ends after a final item of size $1/2-\eps^i$. The optimal solution has value $1$, while \alglower cannot pack any two items together, because $(1/2+\eps^j)+(1/2-\eps^i)>1$, for any $j<i$. Its competitive ratio is therefore at least
    \[
        \frac{1}{1/2+\eps}\xrightarrow{\eps\to 0} 2\;.
    \]
    If \alglower never rejects such an item, it must either reserve or pack all of them. Since it cannot pack two of the items simultaneously, it must remove any item it packs, except for the last. If $N_\alpha$ is the number of items that \alglower reserves and $N_\beta$ is the number of items it removes, we therefore know that $N_\alpha+N_\beta\geq N-1$. Since every item has size greater than $1/2$, the total cost paid for reservation and removal is therefore at least
    \[\frac{\alpha\cdot N_\alpha+\beta\cdot N_\beta}{2} \geq \frac{\min(\alpha,\beta)\cdot (N-1)}{2} > 1\;,\]
    for $N>2/\min(\alpha,\beta)+1$. Whatever \alglower packs, its gain is therefore negative and its competitive ratio is unbounded.
\end{proof}
\lowerBoundSymbiosisTrivial*
\begin{proof}
    Consider any deterministic algorithm \alglower. The structure of the proof is shown in \cref{fig:lbsymtriv}. \alglower first receives an item $x_1=(1-\alpha)/(2-\alpha)$. Because $0<\alpha<1$, we know that $0<x_1<1/2$. If \alglower rejects $x_1$, its competitive ratio is unbounded. If it reserves it, the instance stops and its competitive ratio is at least
    \[\frac{x_1}{x_1-\alpha\cdot x_1}=\frac{1}{1-\alpha}\;.\]
    We can therefore assume that \alglower packs $x_1$. It then receives an item $x_2=1-x_1+\eps$, for some sufficiently small $\eps>0$. Note that, since $x_1<1/2$, we know that $x_2>x_1$. 
    If \alglower reserves $x_2$, the instance stops. \alglower's gain is then either
    $x_1-\alpha\cdot x_2<x_2-\alpha\cdot x_2$ if it does not pack $x_2$, or $x_2-\beta\cdot x_1-\alpha\cdot x_2<x_2-\alpha\cdot x_2$ if it removes $x_1$ and packs $x_2$. In both cases, its competitive ratio is therefore greater than 
    \[\frac{x_2}{x_2-\alpha\cdot x_2}=\frac{1}{1-\alpha}\;.\]
    If \alglower rejects $x_2$, its competitive ratio is 
    \begin{align*}
    \frac{x_2}{x_1}&>\frac{1-x_1}{x_1}=\frac{1-\frac{1-\alpha}{2-\alpha}}{\frac{1-\alpha}{2-\alpha}}=\frac{1}{1-\alpha}\;.
    \end{align*}
    We can therefore assume that \alglower packs $x_2$. Since $x_1+x_2=1+\epsilon>1$, it has to remove $x_1$. It then receives a final item $x_3=1-x_1$. Since no items arrive afterwards, reserving $x_3$ is clearly suboptimal. Since $x_1<1/2$, we know that $x_3>1/2$ and thus $x_2+x_3$ cannot both be packed together. Since $x_2>x_3$, \alglower also does not benefit from packing $x_3$ instead of $x_2$. \alglower's gain is therefore highest if it rejects $x_3$. In this case, its competitive ratio is
    \begin{equation}\label{eq:simplebound}
    \frac{1}{x_2-\beta\cdot x_1}=\frac{1}{1-(1+\beta)\cdot x_1+\eps}\;.
    \end{equation}
    Recall from the definition of $\Usymb$ in \cref{def:Usymb-f} that
    \[\beta > \frac{3\alpha-\alpha^2-1}{1-\alpha}\;.\]
    This means that
    \begin{align*}
    (1+\beta)\cdot x_1&=(1+\beta)\cdot \frac{1-\alpha}{2-\alpha}\\
    &> \left(1+\frac{3\alpha-\alpha^2-1}{1-\alpha}\right)\cdot \frac{1-\alpha}{2-\alpha}\\
    &=\frac{2\alpha-\alpha^2}{1-\alpha}\cdot \frac{1-\alpha}{2-\alpha}\\
    &=\alpha\;.
    \end{align*}
    Thus, by \eqref{eq:simplebound}, \alglower's competitive ratio is at least
    \[\frac{1}{1-\alpha+\eps}\xrightarrow{\eps\to 0} \frac{1}{1-\alpha}\;.\]
\end{proof}

\thmLowerBoundf*
\begin{proof}
We first give an overview of the structure of the proof. The proof is dependent on whether or not $\alpha\geq 1/3$, but is identical up to a certain point. The structure is illustrated in \cref{fig:lbsym}. We then check all relevant conditions separately. They are therefore presented in the order in which they are analyzed later, which may make the enumeration initially seem somewhat arbitrary in places.
\begin{figure}
    \centering
    \begin{tikzpicture}[
    font=\large,
    >=Latex
]
    \def\rejdist{3cm}
    \def\resdist{3.5cm}
    
    \node[itemsettings] (x1)
        {$x_1 \da \frac{1-\fo}{1+\alpha} + \eps$};
    \node[itemsettings, 
          ifpack=x1] (x2)
        {$x_2 \da \fo + \beta x_1$};
    \node[itemsettings,
          ifpack=x2] (x3)
        {$x_3 \da 1 - x_1$};
    \node[itemsettings,
          ifreservedepth3=x1] (y1)
        {$y_1 \da 1 - x_1 + \eps$};
    \node[itemsettings,
          ifpack=y1] (y2)
        {$y_2 \da 1 - x_1$};
    \node[itemsettings,
          ifreservedepth2=y1] (z1)
        {$z_1 \da \fo + \alpha\cdot (x_1 + y_1)$};
    \node[itemsettings,
          ifpack=z1] (z2)
        {$z_2 \da 1 - y_1$};
    
    \node[leafsettings,
          ifreject=x1] (x1rej)
        {$c = \infty$};
    \node[leafsettings,
          ifreject=x2] (x2rej)
        {$c = \frac{x_2}{x_1}$};
    \node[leafsettings,
            anchor=west,
          below right = 0.5cm and 1.7cm of x2.east] (x2res)
        {$c = \min \left( \frac{x_2}{x_1 - \alpha x_2}, \frac{x_2}{x_2 - \alpha x_2 - \beta x_1} \right)$};
    \node[leafsettings,
          ifreject=x3] (x3rej)
        {$c \geq \frac{1}{x_2 - \beta x_1}$};
    \node[leafsettings,
          ifpack=x3] (x3pack)
        {$c \geq \frac{1}{x_3 - \beta \cdot(x_1 + x_2)}$};
    \node[leafsettings,
          ifreject=y1] (y1rej)
        {$c = \frac{y_1}{x_1 - \alpha x_1}$};
    \node[leafsettings,
          anchor=west, below right = 1cm and 3cm of y1.east] (y1res)
        {$c = \frac{y_1}{y_1 - \alpha\cdot (x_1 +y_1)}$};
            
    \node[leafsettings,
          ifreject=y2] (y2rej)
        {$c \geq \frac{1}{y_1 - \alpha x_1}$};
    \node[leafsettings,
          ifpack=y2] (y2pack)
        {$c \geq \frac{1}{1 - \alpha x_1 - \beta y_1}$};
    \node[leafsettings,
          ifreserve=z1] (z1res)
        {$c = \frac{z_1}{z_1 - \alpha\cdot (x_1 + y_1 + z_1)}$};
    \node[leafsettings,
          ifreject=z1] (z1rej)
        {$c = \frac{z_1}{y_1 - \alpha \cdot(x_1 + y_1)}$};
    \node[leafsettings,
          ifreject=z2] (z2rej)
        {$c \geq \frac{1}{z_1 - \alpha \cdot(x_1 + y_1)}$};
    \node[leafsettings,
          ifpack=z2] (z2pack)
        {$c \geq \frac{1}{1 - \alpha \cdot(x_1 + y_1) - \beta z_1}$};


    \node[at=(y1.east), yshift=0.1cm] (y1uppereast) {};
    \node[at=(y1.east), yshift=-0.1cm] (y1lowereast) {};
    \node[at=(y1uppereast), anchor=south west, color=ReserveColor, font=\normalsize] {$\alpha\geq 1/3$};
    \node[at=(y1lowereast), anchor=north west, color=ReserveColor, font=\normalsize] {$\alpha< 1/3$};
    
    \draw[rejarr] (x1.west) -| (x1rej.north);
    \draw[packarr] (x1.south) -- (x2.north);
    \draw[resarr] (x1.east) -| (y1.north);
    \draw[rejarr] (x2.west) -| (x2rej.north);
    \draw[resarr over] (x2.east) -| (x2res.north);
    \draw[packarr] (x2.south) -- (x3.north);
    \draw[rejarr] (x3.west) -| (x3rej.north);
    \draw[packarr] (x3.south) -- (x3pack.north);
    \draw[rejarr] (y1.west) -| (y1rej.north);
    \draw[packarr] (y1.south) -- (y2.north);
    \draw[resarr] (y1lowereast) -| (z1.north);
    \draw[rejarr] (y2.west) -| (y2rej.north);
    \draw[packarr] (y2.south) -- (y2pack.north);
    \draw[resarr] (y1uppereast) -| (y1res.north);
    \draw[rejarr] (z1.west) -| (z1rej.north);
    \draw[resarr] (z1.east) -| (z1res.north);
    \draw[packarr] (z1.south) -- (z2.north);
    \draw[rejarr] (z2.west) -| (z2rej.north);
    \draw[packarr] (z2.south) -- (z2pack.north);


    \node[numbersettings, slightlyaboveleft=x2] {\symbiosisCaseXii};
    \node[numbersettings, slightlyaboveleft=x2rej] {\symbiosisCaseRejectXii};
    \node[numbersettings, slightlyaboveright=x2res] {\symbiosisCaseReserveXii};
    \node[numbersettings, slightlyaboveleft=x3rej] {\symbiosisCaseRejectXiii};
    \node[numbersettings, slightlyaboveleft=x3pack] {\symbiosisCasePackXiii};
    \node[numbersettings, slightlyaboveleft=y2] {\symbiosisCaseYii};
    \node[numbersettings, slightlyaboveleft=y1rej] {\symbiosisCaseRejectYi};
    \node[numbersettings, slightlyaboveleft=y2rej] {\symbiosisCaseRejectYii};
    \node[numbersettings, slightlyaboveleft=y2pack] {\symbiosisCasePackYii};
    \node[numbersettings, slightlyaboveleft=z1] {\symbiosisCaseZi};
    \node[numbersettings, slightlyaboveleft=z1rej] {\symbiosisCaseRejectZi};
    \node[numbersettings, slightlyaboveright=z1res] {\symbiosisCaseReserveZi};
    \node[numbersettings, slightlyaboveright=y1res] {\symbiosisCaseReserveYi};
    \node[numbersettings, slightlyaboveleft=z2rej] {\symbiosisCaseRejectZii};
    \node[numbersettings, slightlyaboveleft=z2pack] {\symbiosisCasePackZii};

\end{tikzpicture}
    \caption{Structure of the lower bound analyzed in \cref{thm:lowerboundf}. Downward arrows (green) represent the choice to pack an item, while leftward arrows (red, dashed) and rightward arrows (blue, dotted), respectively, represent rejecting or reserving an item.}
    \label{fig:lbsym}
\end{figure}

Consider any deterministic algorithm \alglower. 
\alglower first receives an item $x_1\da (1-\fo)/(1+\alpha)+\eps$. Note that, because $0<\fo<1$ and $\alpha>0$, this is a valid item size. If \alglower rejects this item, its competitive ratio is unbounded.  If \alglower packs $x_1$, it receives an item $x_2\da \fo+\beta\cdot x_1$. We will check in \casename~\symbiosisCaseXii that this is indeed a valid item size and that $x_1+x_2>1$, so the two items cannot be packed together. If \alglower rejects $x_2$, the instance stops. We will show in \casename~\symbiosisCaseRejectXii that its competitive ratio is at least $1/\fo$ in this case. If it reserves $x_2$, the instance also ends and we check its competitive ratio in \casename~\symbiosisCaseReserveXii. If \alglower removes $x_1$ and packs $x_2$, it receives a final item $x_3\da 1-x_1$. Since this is the last item, reserving it is clearly suboptimal, so we only check the cases where $x_3$ is rejected (\casename~\symbiosisCaseRejectXiii) or packed (\casename~\symbiosisCasePackXiii).

If $x_1$ is reserved instead, \alglower receives an item $y_1\da 1-x_1+\eps$. If $y_1$ is rejected, the instance ends (\casename~\symbiosisCaseRejectYi). If it is packed, it receives a final item $y_2\da 1-x_1$. We check in \casename~\symbiosisCaseYii that $y_1+y_2>1$, so the two items cannot be packed together. Once again, since $y_2$ is the last item, reserving it is clearly suboptimal, so we only check the cases where it is rejected (\casename~\symbiosisCaseRejectYii) or packed (\casename~\symbiosisCasePackYii).

If $y_1$ is reserved, the way the proof continues depends on the value of $\alpha$. If $\alpha\geq 1/3$, the instance ends and we check in \casename~\symbiosisCaseReserveYi that \alglower's competitive ratio is at least $1/\fo$. If $\alpha<1/3$, \alglower receives an item $z_1\da \fo+\alpha\cdot (x_1+y_1)$. We check in \casename~\symbiosisCaseZi that this is indeed a valid item.  If $z_1$ is rejected (\casename~\symbiosisCaseRejectZi) or reserved (\casename~\symbiosisCaseReserveZi), the instance ends. Finally, if $z_1$ is packed, the instance receives a final item $z_2\da 1-y_1$. Once again, since this is the last item, we only check the cases where $z_2$ is rejected (\casename~\symbiosisCaseRejectZii) or packed (\casename~\symbiosisCasePackZii).

We now prove that, in each of these cases, the necessary conditions hold and that, however the instance ends, \alglower's competitive ratio is at least $1/\fo$ (in some cases, this holds only in the limit $\eps\to 0$).

\begin{description}
    \item[\casename~\symbiosisCaseXii.] We first need to check that $x_2=\fo+\beta\cdot x_1$ is in fact a valid item size. 
        Clearly, $x_2\geq 0$. Then recall that, by definition of $\Usymb$ in \cref{def:Usymb-f}, it holds that $\beta < 1+\alpha$, so from 
        \[
            \beta < 1+\alpha\implies\frac{\beta}{1+\alpha} < 1\implies \beta\cdot \frac{1-\fo}{1+\alpha} < 1-\fo\implies \fo+\beta\cdot\frac{1-\fo}{1+\alpha}< 1
        \]
        it follows that $x_2=\fo+\beta\cdot \frac{1-\fo}{1+\alpha}+\beta\cdot\eps\leq 1$ for sufficiently small $\eps$.
        We also check that $x_1+x_2>1$. This is the case because
        \begin{align*}
            x_1+x_2&=\frac{1-\fo}{1+\alpha}+\fo+\beta\cdot \frac{1-\fo}{1+\alpha}+\beta\eps\\
            &>\frac{(1-\fo)\cdot (1+\beta)+(1+\alpha)\cdot \fo}{1+\alpha}\\
            &=\frac{1+\beta-\fo\cdot (\beta-\alpha)}{1+\alpha}\\
            &>\frac{1+\beta-(\beta-\alpha)}{1+\alpha}\\
            &=\frac{1+\alpha}{1+\alpha}\;,
        \end{align*}
        because $\fo<1$ and $\beta>\alpha$ by \cref{lemma:betagreateralpha}.
    \item[\casename~\symbiosisCaseRejectXii.] In this situation, \alglower has packed $x_1$ and rejected $x_2$. 
        Its competitive ratio is therefore at least 
        \[
            \frac{x_2}{x_1}=\frac{\fo}{x_1}+\beta\xrightarrow{\eps\to 0}\frac{\fo}{\frac{1-\fo}{1+\alpha}}+\beta\;.
        \]
        By \cref{eq:fo}, this in turn is equal to
        \[
            \frac{\fo}{\frac{1-\fo}{1+\alpha}}+\beta=\frac{\fo}{\frac{\fo^2}{1-\beta\cdot \fo}}+\beta=\frac{1-\beta\cdot \fo}{\fo}+\beta=\frac{1}{\fo}\;.
        \]
    \item[\casename~\symbiosisCaseReserveXii.] In this situation, \alglower has packed $x_1$ and reserved $x_2$. 
        Its gain is thus either $x_1-\alpha\cdot x_2$ if it does not pack $x_2$, or $x_2-\alpha\cdot x_2-\beta\cdot x_1$ if it removes $x_1$ and packs $x_2$ (note that we know from \casename~\symbiosisCaseXii that $x_1+x_2>1$). Clearly, $x_1-\alpha\cdot x_2\leq x_1$. We show that the same holds for $x_2-\alpha\cdot x_2-\beta\cdot x_1$. This is the case because by \eqref{eq:fo<1/2_iff_b>1-a} we know that $\fo<1/2$, and from the implication
        \begin{align*}
            \fo<1/2&\implies\fo<1-\fo\\
            &\implies (1-\alpha^2)\cdot \fo<1-\fo\\
            &\implies (1-\alpha^2)\cdot \fo<(1+\alpha\cdot \beta)\cdot(1-\fo)\\
            &\implies (1-\alpha)\cdot \fo<(1+\alpha\cdot \beta)\cdot\frac{1-\fo}{1+\alpha}\\
            &\implies (1-\alpha)\cdot \fo<(1+\alpha\cdot \beta)\cdot x_1\\
            &\implies (1-\alpha)\cdot \fo-\alpha\cdot \beta\cdot x_1<x_1\\
            &\implies (1-\alpha)\cdot \fo-\alpha\cdot \beta\cdot x_1+\beta\cdot x_1<x_1+\beta\cdot x_1\\
            &\implies (1-\alpha)\cdot (\fo+\beta\cdot x_1)<(1+\beta)\cdot x_1\\
            &\implies (1-\alpha)\cdot x_2<(1+\beta)\cdot x_1\\
            &\implies x_2-\alpha\cdot x_2-\beta\cdot x_1<x_1\;.
        \end{align*}
        In either case, \alglower's competitive ratio is therefore at least $x_2/x_1$, which we proved in \casename~\symbiosisCaseRejectXii was at least $1/\fo$. 

    \item[\casename~\symbiosisCaseRejectXiii.] In this situation, \alglower has packed $x_1$, removed $x_1$, packed $x_2$ and rejected $x_3$. 
        Its gain is therefore $x_2-\beta\cdot x_1$, while the optimal solution is equal to $x_1+x_3=1$. Its competitive ratio is therefore at least
        \[
            \frac{1}{x_2-\beta\cdot x_1}=\frac{1}{\fo+\beta\cdot x_1-\beta\cdot x_1}=\frac{1}{\fo}
        \;.\]
    \item[\casename~\symbiosisCasePackXiii.] In this situation, \alglower has packed $x_1$, removed $x_1$, packed $x_2$, and packed $x_3$. 
        Since we showed in \casename~\symbiosisCaseRejectXii that $x_2/x_1\to 1/\fo>1$ as $\eps\to 0$, we know that $x_2>x_1$, so $x_3+x_2>1$, so \alglower must have removed $x_2$. Its gain is therefore $x_3-\beta\cdot (x_1+x_2)$, while the optimal solution is equal to $x_1+x_3=1$.
        Now recall that $x_1+x_2>1$. This means that $x_3=1-x_1<x_2<x_2\cdot (1+\beta)$, so $x_3-\beta\cdot x_2<x_2$. This means that \alglower's competitive ratio is at least as high as in \casename~\symbiosisCaseRejectXiii, which we already showed was at least $1/\fo$.
    \item[\casename~\symbiosisCaseRejectYi.] In this situation, \alglower has reserved $x_1$ and then rejected $y_1$. 
        Its competitive ratio is therefore at least
        \[
            \frac{y_1}{x_1-\alpha\cdot x_1}\xrightarrow{\eps\to 0} \frac{1-\frac{1-\fo}{1+\alpha}}{(1-\alpha)\cdot \frac{1-\fo}{1+\alpha}}=\frac{\alpha+\fo}{(1-\alpha)\cdot (1-\fo)}
        \;.\]
        Our goal is therefore to show that 
        \[
            \frac{\alpha+\fo}{(1-\alpha)\cdot (1-\fo)}\geq \frac{1}{\fo}\;.
        \]
        Note that 
        \begin{equation}\label{eq:toshow}
            \frac{\alpha+\fo}{(1-\alpha)\cdot (1-\fo)}\geq \frac{1}{\fo}\iff \fo\cdot (\alpha+\fo)\geq (1-\alpha)\cdot (1-\fo)\iff \fo^2+\fo+\alpha-1\geq 0\;.
        \end{equation}
        Now recall from \cref{def:Usymb-f} that, for $\beta<1+\alpha$, the value $\fo$ is the unique root of 
        \[
            \fpolynomial{\alpha}{\beta}(x)=(1+\alpha-\beta)\cdot x^2+(1+\beta)\cdot x-1\;.
        \]
        We also  interpret \cref{eq:toshow} as a function of $\fo$:
        \begin{align*}
            \rejectyifunca(x)&=x^2+x+\alpha-1\;.
        \end{align*}
        Then by \cref{eq:toshow}, we need to show that $\rejectyifunca(\fo)\geq 0$. 
        Note that $\rejectyifunca$ has a unique positive root 
        \[
            \uniquerootyi\coloneq\frac{\sqrt{5-4\alpha}-1}{2}\;.
        \]
        Since $0<\alpha<1$, we know that $0<\uniquerootyi< 1$. This means that we can define $\uniquebetayi$ as the value of $\beta$ such that $\fpolynomial{\alpha}{\beta}(\uniquerootyi)=0$, \ie, 
        \begin{equation*}
            \uniquebetayi\cdot (\uniquerootyi-\uniquerootyi^2)+(1+\alpha)\uniquerootyi^2+\uniquerootyi-1=0\;,
        \end{equation*}
        or equivalently,
        \begin{equation*}
            \uniquebetayi=\frac{1-\uniquerootyi-(1+\alpha)\cdot\uniquerootyi^2}{\uniquerootyi-\uniquerootyi^2}\;,
        \end{equation*}
        which, using the fact that $\psi_{\alpha}(\uniquerootyi)=0$, can be simplified to
        \begin{align*}
            \frac{1-\uniquerootyi-(1+\alpha)\uniquerootyi^2}{\uniquerootyi-\uniquerootyi^2}&=\frac{1-\uniquerootyi-(1+\alpha)(1-\alpha-\uniquerootyi)}{\uniquerootyi-(1-\alpha-\uniquerootyi)}\\
            &=\frac{\alpha\cdot \uniquerootyi+\alpha^2}{2\uniquerootyi+\alpha-1}\\
            &=\frac{\alpha\cdot (\uniquerootyi+\alpha)}{2\uniquerootyi+\alpha-1}\cdot \frac{2\uniquerootyi+3-\alpha}{2\uniquerootyi+3-\alpha}\\
            &=\frac{\alpha\cdot (2\uniquerootyi^2+\uniquerootyi(2\alpha+3-\alpha)+(3-\alpha)\cdot \alpha)}{4\uniquerootyi^2+2\uniquerootyi\cdot (3-\alpha+\alpha-1)+(3-\alpha)(\alpha-1)}\\
            &=\frac{\alpha\cdot (2\cdot (1-\alpha-\uniquerootyi)+\uniquerootyi(3+\alpha)+(3-\alpha)\cdot \alpha)}{4\uniquerootyi^2+4\uniquerootyi+4\alpha-\alpha^2-3}\\
            &=\frac{\alpha\cdot (\uniquerootyi\cdot (\alpha+1)-(\alpha^2-\alpha-2))}{4\uniquerootyi^2+4\uniquerootyi+4\alpha-4+1-\alpha^2}\\
            &=\frac{\alpha\cdot (\uniquerootyi\cdot (\alpha+1)-(\alpha+1)\cdot (\alpha-2))}{1-\alpha^2}\\
            &=\frac{\alpha\cdot (\uniquerootyi\cdot (\alpha+1)-(\alpha+1)\cdot (\alpha-2))}{(1+\alpha)(1-\alpha)}\\
            &=\frac{\alpha\cdot (\uniquerootyi- \alpha+2)}{1-\alpha}\;.
        \end{align*}
        Now recall that 
        \[
            \uniquerootyi=\frac{\sqrt{5-4\alpha}-1}{2}
        \;,\]
        so 
        \begin{equation}\label{eq:expressionuniquebetayi}
            \uniquebetayi=\frac{\alpha\cdot (\uniquerootyi- \alpha+2)}{1-\alpha}=\frac{\alpha\cdot(3-2\alpha+\sqrt{5-4\alpha})}{2(1-\alpha)}
        \;.\end{equation}
        By definition of $\Usymb$ in \cref{def:Usymb-f}, this means that $\beta< \uniquebetayi$.
        
        Now observe that $\fpolynomial{\alpha}{\beta}(\uniquerootyi)$ is linear in $\beta$ with coefficient $\uniquerootyi-\uniquerootyi^2=\uniquerootyi\cdot (1-\uniquerootyi)>0$, and that it vanishes at $\beta=\uniquebetayi$ by the definition of $\uniquebetayi$. Hence $\beta<\uniquebetayi$ implies $\fpolynomial{\alpha}{\beta}(\uniquerootyi)<0$. Since $\fo$ is the unique positive root of $\fpolynomial{\alpha}{\beta}(x)$, this gives $\fo > \uniquerootyi$, which in turn implies $\rejectyifunca(\fo) \geq 0$.
    \item[\casename~\symbiosisCaseReserveYi.] In this situation, \alglower has reserved $x_1$ and $y_1$, and $\alpha\geq 1/3$, so the instance stops. 
        If \alglower packs $x_1$, its gain is $x_1-\alpha\cdot (x_1+y_1)<x_1-\alpha\cdot x_1$, so its competitive ratio is at least
        \[
            \frac{y_1}{x_1-\alpha\cdot x_1}\geq \frac{1}{\fo}\;,
        \]
        as shown in \casename~\symbiosisCaseRejectYi. If, on the other hand, \alglower packs $y_1$, its gain is
        \[
            y_1-\alpha\cdot (x_1+y_1)=1-x_1+\eps-\alpha\cdot (1+\eps) \xrightarrow{\eps\to 0} 1-\frac{1-\fo}{1+\alpha}-\alpha
        \;.\]
        Since the optimal solution is at least $y_1=1-x_1+\eps=1-(1-\fo)/(1+\alpha)$, \alglower's competitive ratio (as $\eps\to 0$) is at least
        \[
            \frac{1-\frac{1-\fo}{1+\alpha}}{1-\frac{1-\fo}{1+\alpha}-\alpha}=\frac{\frac{\fo+\alpha}{1+\alpha}}{\frac{\fo+\alpha-\alpha-\alpha^2}{1+\alpha}}=\frac{\fo+\alpha}{\fo-\alpha^2}\;.
        \]
        If $\fo\leq \alpha^2$, the gain is nonpositive in the limit $\eps\to 0$ and \alglower's competitive ratio is unbounded, so assume $\fo>\alpha^2$. Our goal is to show that the ratio above, too, is at least $1/\fo$. This means that we need to show that
        \[
            \fo\cdot (\fo+\alpha)\geq \fo-\alpha^2
        \;,\]
        which is equivalent to
        \[
            \fo^2+\fo\cdot (\alpha-1)+\alpha^2\geq 0\;.
        \]
        This however is the case because $\alpha\geq 1/3$:
        \[
            \fo^2+\fo\cdot (\alpha-1)+\alpha^2\geq \fo^2-\fo\cdot \frac{2}{3}+\frac{1}{9}=\left(\fo-\frac{1}{3}\right)^2\geq 0
        \;.\]
    \item[\casename~\symbiosisCaseYii.] In this situation, \alglower has reserved $x_1=(1-\fo)/(1+\alpha)+\eps$ and packed $y_1=1-x_1+\eps$. It then receives an item $y_2=1-x_1$. 
        We need to convince ourselves that $y_1>x_1$, so $y_1$ and $y_2$ cannot be packed simultaneously. 
        Note the equivalence
        \begin{equation}\label{eq:f_condition_CaseYii}
            1-\frac{1-\fo}{1+\alpha}> \frac{1-\fo}{1+\alpha}\iff \frac{\fo+\alpha}{1+\alpha}>\frac{1-\fo}{1+\alpha}\iff 2\fo>1-\alpha\;,
        \end{equation}
        which means that, as long as $2\fo>1-\alpha$, for sufficiently small values of $\eps$, we can guarantee that $y_1>x_1$. 
        This we do by using \cref{lem:f_non-increasing}, which states that $\fo$ is decreasing in $\beta$. Since $\beta<1+\alpha$ on $\Usymb$, we know that $\fo(\alpha,\beta)>\fo(\alpha,1+\alpha)$. Recall that $\fo$ is the unique solution to the equation 
        \[
            (1+\alpha-\beta)\cdot \fo^2+(1+\beta)\cdot \fo-1=0\;,
        \]
        so $\fo(\alpha,1+\alpha)=1/(2+\alpha)$, which means that
        \begin{equation}\label{eq:lowerboundf}
            \fo(\alpha,\beta)>\frac{1}{2+\alpha}\;,
        \end{equation}
        for any $(\alpha,\beta)\in\Usymb$.
        Therefore, 
        \[
            2\fo> \frac{2}{2+\alpha}> 1-\alpha\;,
        \]
        because
        \[
            (2+\alpha)\cdot (1-\alpha)=2-\alpha^2-\alpha<2\;,
        \]
        for $\alpha>0$. Therefore, \cref{eq:f_condition_CaseYii} holds and we know that $y_1+y_2>1$.
    \item[\casename~\symbiosisCaseRejectYii.] In this situation, \alglower has reserved $x_1$, packed $y_1$, and rejected $y_2$. 
        The optimal solution has value $x_1+y_2=1$. If \alglower removes $y_1$ and packs $x_1$, its gain is $x_1-\alpha\cdot x_1-\beta\cdot y_1<x_1-\alpha\cdot x_1$. Its competitive ratio is therefore at least
        \[
            \frac{1}{x_1-\alpha\cdot x_1}\geq \frac{y_1}{x_1-\alpha\cdot x_1}\geq \frac{1}{\fo}\;,
        \]
        as shown in \casename~\symbiosisCaseRejectYi, where \alglower rejected $y_1$. 

        If \alglower does not remove $y_1$, its gain is 
        \[
            y_1-\alpha\cdot x_1\xrightarrow{\eps\to 0} 1-\frac{1-\fo}{1+\alpha}-\alpha\cdot \frac{1-\fo}{1+\alpha}=1-(1+\alpha)\cdot \frac{1-\fo}{1+\alpha}=\fo\;,
        \]
        so its competitive ratio is at least $1/\fo$.
    \item[\casename~\symbiosisCasePackYii.] In this situation, \alglower has reserved $x_1$, packed $y_1$, and packed $y_2$. 
        Since we know from \casename~\symbiosisCaseYii that $y_1+y_2>1$, this means that it must have removed $y_1$. The optimal solution has value $x_1+y_2=1$, while \alglower's gain is at most $x_1+y_2-\alpha\cdot x_1-\beta\cdot y_1=1-\alpha\cdot x_1-\beta\cdot y_1$.
        We now show that this gain is no larger than the gain of $y_1-\alpha\cdot x_1$ \alglower achieved in \casename~\symbiosisCaseRejectYii, where it rejected $y_2$. In other words, we show that
        \[
            1-\alpha\cdot x_1-\beta\cdot y_1\leq y_1-\alpha\cdot x_1\;.
        \]
        Now note the equivalence
        \begin{align*}
            1-\alpha\cdot x_1-\beta\cdot y_1\leq y_1-\alpha\cdot x_1&\iff y_1\geq \frac{1}{1+\beta}\\
            &\iff 1-\frac{1-\fo}{1+\alpha}\geq \frac{1}{1+\beta}\\
            &\iff \frac{\fo+\alpha}{1+\alpha}\geq \frac{1}{1+\beta}\\
            &\iff \fo\geq \frac{1+\alpha}{1+\beta}-\alpha=\frac{1-\alpha\cdot \beta}{1+\beta}\;.
        \end{align*}
        Because of this, combined with the fact that $\fo$ is the only positive root of the polynomial $\fpolynomial{\alpha}{\beta}(x)=(1+\alpha-\beta)\cdot x^2+(1+\beta)\cdot x-1$, and $\fpolynomial{\alpha}{\beta}(0)<0$, it suffices to show that
        \[
            (1+\alpha-\beta)\cdot\frac{(1-\alpha\cdot\beta)^2}{(1+\beta)^2}+(1-\alpha\cdot\beta)-1\leq 0\;,
        \]
        which is equivalent to
        \begin{equation}\label{eq:showremoveYii}
            (1+\alpha-\beta)\cdot(1-\alpha\cdot\beta)^2- \alpha\cdot\beta\cdot (1+\beta)^2\leq 0\;.
        \end{equation}
        We define
        \begin{align*}
            \packyiifunc{\beta}&\da (1+\alpha-\beta)\cdot (1-\alpha\cdot\beta)^2- \alpha\cdot\beta\cdot (1+\beta)^2\\
            &=\beta^3\cdot(-\alpha^2-\alpha)+\beta^2\cdot(\alpha^3+\alpha^2)+\beta\cdot(-2\alpha^2-3\alpha-1)+\alpha+1\;.
        \end{align*}    
        
        We first show that $\packyiifunc{\beta}$ is decreasing in $\beta$ for $(\alpha,\beta)\in\Usymb$ by considering the derivative:
        \begin{equation}\label{eq:diffrho}
            \diff{\beta}\,\packyiifunc{\beta}=3\beta^2\cdot (-\alpha^2-\alpha)+2\beta\cdot(\alpha^3+\alpha^2)-2\alpha^2-3\alpha-1\;.
        \end{equation}
        For $1/4\leq \alpha\leq 1/2$, we know that $\beta<1+\alpha\leq 3/2$, so we can bound the derivative by
        \begin{align*}
            \diff{\beta}\,\packyiifunc{\beta}&<2\beta\cdot(\alpha^3+\alpha^2)-3\alpha-1\\
            &\leq 3\cdot \left(\frac{1}{8}+\frac{1}{4}\right)-\frac{3}{4}-1\\
            &=-\frac{5}{8}<0\;.
        \end{align*}
        For $1/2\leq \alpha\leq \phi-1<7/10$, we know that $1/2\leq \beta\leq 1+\alpha\leq 17/10$, so
        \begin{align*}
            \diff{\beta}\,\packyiifunc{\beta}&=3\beta^2\cdot (-\alpha^2-\alpha)+2\beta\cdot(\alpha^3+\alpha^2)-2\alpha^2-3\alpha-1\\
            &\leq \frac{3}{4}\cdot \left(-\frac{1}{4}-\frac{1}{2}\right)+4\beta\cdot \alpha^2-2\alpha^2-3\alpha-1\\
            &\leq -\frac{9}{16}+\frac{34}{5}\cdot \alpha^2-2\alpha^2-3\alpha-1\\
            &\leq\frac{24}{5}\cdot \alpha^2-\frac{3}{2}-\frac{25}{16}\\
            &\leq \frac{24}{5}\cdot \frac{49}{100}-\frac{49}{16}\\
            &\leq\frac{12}{5}-\frac{49}{16}=-\frac{53}{80}<0\;.
        \end{align*}
        We now make a second case distinction between $\alpha\leq 1/2$ and $\alpha>1/2$. For $\alpha\leq 1/2$, since $\packyiifunc{\beta}$ is decreasing in $\beta$ and $\beta\geq 1-\alpha$, it suffices to show that $\packyiifunc{1-\alpha}\leq 0$. We compute
        \begin{alignat*}{2}
            \packyiifunc{1-\alpha}&=\;&&(1-\alpha)^3\cdot (-\alpha^2-\alpha)+(1-\alpha)^2\cdot (\alpha^3+\alpha^2)\\
            &&&+(1-\alpha)\cdot (-2\alpha^2-3\alpha-1)+\alpha+1\\
            &=\;&&2\alpha^5-3\alpha^4+\alpha^3+4\alpha^2-2\alpha\\
            &=\;&&\alpha\cdot (2\alpha^4-3\alpha^3+\alpha^2+4\alpha-2)\;.
        \end{alignat*}
        Now the function $2\alpha^4-3\alpha^3+\alpha^2+4\alpha-2$ is increasing for $0<\alpha<1/2$, since its derivative satisfies
        \[
            8\alpha^3-9\alpha^2+2\alpha+4>4-9\alpha^2>4-9/4>0\;.
        \]
        Since, additionally,
        \[
            2\cdot \left(\frac{1}{2}\right)^4-3\cdot\left(\frac{1}{2}\right)^3+\left(\frac{1}{2}\right)^2+4\cdot \frac{1}{2}-2=0\;,
        \]
        we know that $\packyiifunc{1-\alpha}\leq \packyiifunc[1/2]{1/2}=0$, for any $1/4\leq \alpha\leq 1/2$. 
        
        For $\alpha>1/2$, recall that $\beta>\alpha$ by \cref{lemma:betagreateralpha}. Since $\packyiifunc{\beta}$ is decreasing in $\beta$, it suffices to show that $\packyiifunc{\alpha}\leq 0$. We compute
        \begin{alignat*}{2}
            \packyiifunc{\alpha}&=\;&&\alpha^3\cdot (-\alpha^2-\alpha)+\alpha^2\cdot (\alpha^3+\alpha^2)\\
            &&&+\alpha\cdot (-2\alpha^2-3\alpha-1)+\alpha+1\\
            &=\;&&-2\alpha^3-3\alpha^2+1\\
            &=\;&&(\alpha+1)^2\cdot (1-2\alpha)\;,
        \end{alignat*}
        which is negative for $\alpha>1/2$. 
    \item[\casename~\symbiosisCaseZi.] In this situation, \alglower has reserved $x_1$ and $y_1$ and because $\alpha< 1/3$, it received an item $z_1=\fo+\alpha\cdot (x_1+y_1)$. 
        We convince ourselves in this case that $z_1$ is a valid item size. Clearly, $z_1\geq 0$. Additionally, we know that $\fo\leq 1/2$ because of \cref{lem:f_piecewise}, more specifically because of \cref{eq:fo<1/2_iff_b>1-a}. Therefore, 
        \[
            z_1=\fo+\alpha\cdot (x_1+y_1)=\fo+\alpha\cdot (1+\eps)\leq \frac{1}{2}+\frac{1}{3}\cdot (1+\eps)<1\;,
        \]
        for sufficiently small values of $\eps$.
        Additionally, recall that 
        \[
            y_1=1-x_1+\eps=\frac{\fo+\alpha}{1+\alpha}=\fo+\alpha\cdot \frac{1-\fo}{1+\alpha}=\fo+\alpha\cdot (x_1-\eps)\leq \fo+\alpha\cdot x_1\;.
        \]
        Therefore, $z_1=\fo+\alpha\cdot (x_1+y_1)> y_1$.
    \item[\casename~\symbiosisCaseRejectZi.] In this situation, \alglower has reserved $x_1$ and $y_1$, and rejected $z_1$. Also note that $\alpha<1/3$.
        Now recall that we showed in \casename~\symbiosisCaseRejectXii that 
        \[
            \frac{x_2}{x_1} \xrightarrow{\eps\to 0} \frac{1}{\fo}>2\;,
        \]
        since we know by \cref{lem:f_piecewise} that $\fo<1/2$ for $\beta>1-\alpha$ which we assumed. Hence $x_2>2x_1$ for sufficiently small $\varepsilon$, and since $x_2 \leq 1$ by \casename~\symbiosisCaseXii, this implies that $x_1<1/2$, and thus $y_1=1-x_1+\eps>1/2>x_1$. This means that \alglower's gain is at most $y_1-\alpha\cdot (x_1+y_1)\leq y_1-\alpha$. If $y_1-\alpha\leq 0$, \alglower's competitive ratio is unbounded. Otherwise, its competitive ratio is at least
        \[
            \frac{z_1}{y_1-\alpha}\xrightarrow{\eps\to 0} \frac{\fo+\alpha}{1-\frac{1-\fo}{1+\alpha}-\alpha}=\frac{\fo+\alpha}{\frac{\fo+\alpha}{1+\alpha}-\alpha}\;.
        \]
        Our goal is therefore to show 
        \[
            \frac{\fo+\alpha}{\frac{\fo+\alpha}{1+\alpha}-\alpha}\geq \frac{1}{\fo}\;,
        \]
        which is equivalent to
        \[
            \fo\cdot(\fo+\alpha)\geq \frac{\fo+\alpha}{1+\alpha}-\alpha
        \]
        or
        \[
            (1+\alpha)\cdot \fo^2+(\alpha^2+\alpha-1)\cdot \fo+\alpha^2\geq 0\;.
        \]
        Now, for any fixed value of $\alpha$, the polynomial
        \[
            \rejectzifuncalpha(x)=(1+\alpha)\cdot x^2+(\alpha^2+\alpha-1)\cdot x+\alpha^2
        \]
        is increasing in $x\geq 1/(2+\alpha)$, as we can see by considering the derivative:
        \begin{align*}
            \diff{x}\,\rejectzifuncalpha(x)&=2x\cdot (1+\alpha)+\alpha^2+\alpha-1\geq \frac{2(1+\alpha)}{2+\alpha}+\alpha^2+\alpha-1\\
            &> 1+\alpha^2+\alpha-1>0\;.
        \end{align*}
        Since we know from \cref{eq:lowerboundf} that $\fo\geq 1/(2+\alpha)$, it suffices to check that $\rejectzifuncalpha(1/(2+\alpha))>0$:
        \begin{align*}
            \rejectzifuncalpha\left(\frac1{2+\alpha}\right)&=(1+\alpha)\cdot \left(\frac{1}{2+\alpha}\right)^2+(\alpha^2+\alpha-1)\cdot \frac{1}{2+\alpha}+\alpha^2\\
            &=\left(\frac{1}{2+\alpha}\right)^2\cdot (1+\alpha+(2+\alpha)\cdot  (\alpha^2+\alpha-1)+(2+\alpha)^2\cdot \alpha^2)\\
            &=\left(\frac{1}{2+\alpha}\right)^2\cdot (\alpha^4+5\alpha^3+7\alpha^2+2\alpha-1)\;.
        \end{align*}
        The first factor of this term is positive, while the second is clearly increasing in $\alpha$, so since $\alpha\geq 1/4$, it suffices to check that 
        \[\left(\frac{1}{4}\right)^4+5\left(\frac{1}{4}\right)^3+7\left(\frac{1}{4}\right)^2+2\cdot \frac{1}{4}-1=\frac{5}{256}>0\;.\]
    \item[\casename~\symbiosisCaseReserveZi.] In this situation, \alglower has reserved $x_1$, $y_1$, and $z_1$. Also note that $\alpha<1/3$.    
        Recall from \casename~\symbiosisCaseZi that $z_1> y_1$, and from \casename~\symbiosisCaseRejectZi that $y_1> x_1$. Since $x_1+y_1>1$, this also means that no two of these items can be packed simultaneously. \alglower's gain is therefore at most
        $z_1-\alpha\cdot (x_1+y_1+z_1)$. We also showed in \casename~\symbiosisCaseZi that $y_1=\fo+\alpha\cdot (x_1-\eps)$, so
        \[
            z_1-y_1=\fo+\alpha\cdot (x_1+y_1)-(\fo+\alpha\cdot (x_1-\eps))=\alpha\cdot y_1+\alpha\cdot \eps<\alpha\cdot z_1\;,
        \]
        for sufficiently small values of $\eps$.
        Therefore, \alglower's gain satisfies
        \[
            z_1-\alpha\cdot (x_1+y_1+z_1)\leq z_1-\alpha\cdot (x_1+y_1)-(z_1-y_1)=y_1-\alpha\cdot (x_1+y_1)\;,
        \]
        and its competitive ratio is at least
        \[
            \frac{z_1}{y_1-\alpha\cdot (x_1+y_1)}\geq \frac{1}{\fo}\;,
        \]
        as shown in \casename~\symbiosisCaseRejectZi.
    \item[\casename~\symbiosisCaseRejectZii.] In this situation, \alglower has reserved $x_1$ and $y_1$, packed $z_1$ and rejected the final item $z_2$. 
        As in \casename~\symbiosisCaseReserveZi, we know that $z_1>y_1>x_1$ and no two of these items can be packed simultaneously. \alglower's gain is therefore at most
        $z_1-\alpha\cdot (x_1+y_1)=\fo$. Since the optimal solution has value $y_1+z_2=1$, \alglower's competitive ratio is at least $1/\fo$.
    \item[\casename~\symbiosisCasePackZii.] In this situation, \alglower has reserved $x_1$ and $y_1$, packed $z_1$, and packed $z_2=1-y_1$. 
        Since $z_1>y_1$ as shown in \casename~\symbiosisCaseZi, this means that it had to remove $z_1$. Its gain is therefore at most 
        \[
            1-\alpha\cdot (x_1+y_1)-\beta\cdot z_1<1-\alpha\cdot x_1-\beta\cdot z_1<1-\alpha\cdot x_1-\beta\cdot y_1\;.
        \]
        Its competitive ratio is therefore at least
        \[
            \frac{1}{1-\alpha\cdot x_1-\beta\cdot y_1}\;,
        \]
        which we showed in \casename~\symbiosisCasePackYii is at least $1/\fo$.
\end{description}
\end{proof}
\section{Lower Bounds for the Dominance Region}\label{sec:lower_dominance}
In this section, we restate and prove the lower bounds from \cref{thm:lowerbound_reservation_small,thm:lowerbound_reservation_medium,thm:lower-equib} as already stated in \cref{sec:dominance}, beginning with \cref{thm:lower-equib}, whose adversarial tree is the simplest of the three. 

\theoremLowerEquilibrium*
\begin{proof}
The proof is an extension of a lower bound presented by \citefull{HKM2014}. The structure is shown in \cref{fig:lbequi}.
\begin{figure}
    \centering
    \begin{tikzpicture}[
    font=\large,
    >=Latex
]
    \def\rejdist{3cm}
    \def\resdist{4cm}
    \node[itemsettings] (x1)
        {$x_1 \da \frac{3 + \beta - \sqrt{\beta^2 + 2 \beta + 5}}{2 (1 + \beta)}$};
    \node[itemsettings, 
          ifpack=x1] (x2)
        {$x_2 \da 1 - x_1 + \varepsilon$};
    \node[itemsettings, 
          ifpack=x2] (x3)
        {$x_3 \da 1 - x_1$};
    
    \node[leafsettings,
          ifreject=x1] (x1rej)
        {$c = \infty$};
    \node[leafsettings,
          ifreserve=x1] (x1res)
        {$c = \frac{x_1}{x_1 - \alpha x_1}$};
    \node[leafsettings,
          ifreject=x2] (x2rej)
        {$c = \frac{x_2}{x_1}$};
    \node[leafsettings,
          ifreserve=x2] (x2res)
        {$c = \min \left( \frac{x_2}{x_1 - \alpha x_2}, \frac{x_2}{x_2 - \alpha x_2 - \beta x_1} \right)$};
    \node[leafsettings,
          ifreject=x3] (x3rej)
        {$c = \frac{1}{x_2 - \beta x_1}$};

    \draw[rejarr] (x1.west) -| (x1rej.north);
    \draw[resarr] (x1.east) -| (x1res.north);
    \draw[packarr] (x1.south) -- (x2.north);
    \draw[rejarr] (x2.west) -| (x2rej.north);
    \draw[resarr] (x2.east) -| (x2res.north);
    \draw[packarr] (x2.south) -- (x3.north);
    \draw[rejarr] (x3.west) -| (x3rej.north);
\end{tikzpicture}
    \caption{Structure of the lower bound analyzed in \cref{thm:lower-equib}. Downward arrows (green) represent the choice to pack an item, while leftward arrows (red, dashed) and rightward arrows (blue, dotted), respectively, represent rejecting or reserving an item.}
    \label{fig:lbequi}
\end{figure}
Consider any deterministic algorithm \alglower and let $\eps>0$ be sufficiently small. \alglower first receives an item 
\[x_1=\frac{3+\beta-\sqrt{\beta^2+2\beta+5}}{2(1+\beta)}\;.\]
Note that this value satisfies
\begin{equation}
    \frac{1-x_1}{x_1}=\frac{1 + \beta + \sqrt{\beta^2 + 2\beta + 5}}{2}\label{eq:equilibriumx1}
\end{equation}
and
\begin{equation}
    \frac{1}{1-x_1-\beta\cdot x_1}=\frac{1 + \beta + \sqrt{\beta^2 + 2\beta + 5}}{2}\label{eq:equilibriumx2}\;.
\end{equation}
Also note that clearly, $x_1>0$, and that, because $\beta\geq 1/2$, 
\[
    x_1=\frac{3+\beta-\sqrt{\beta^2+2\beta+5}}{2(1+\beta)}\leq \frac{3+\beta-\sqrt{\beta^2+2\beta+1}}{2(1+\beta)}=\frac{1}{1+\beta}<1\;,
\]
so $x_1$ is a valid item size. Additionally,
\[
    \frac{1 + \beta + \sqrt{\beta^2 + 2\beta + 5}}{2}> \frac{1 + \beta + \sqrt{\beta^2}}{2}\geq 1\;,
\]
so by \cref{eq:equilibriumx1}, we know that $1-x_1>x_1$. 
If \alglower rejects $x_1$, its competitive ratio is unbounded. If it reserves the item, the instance stops and its competitive ratio is at least
\[
    \frac{x_1}{x_1-\alpha\cdot x_1}=\frac{1}{1-\alpha}\geq \ell\;.
\]
We can therefore assume it packs $x_1$. It then receives an item $x_2=1-x_1+\eps$. If it rejects this item, its competitive ratio is 
\[
    \frac{1-x_1}{x_1}\geq \ell
\]
by \cref{eq:equilibriumx1}. If it reserves it, the instance stops. \alglower's gain is $x_2-\alpha\cdot x_2-\beta\cdot x_1<x_2-\alpha\cdot x_2$ if it removes $x_1$ to pack $x_2$, or $x_1-\alpha\cdot x_2<x_2-\alpha\cdot x_2$ if it does not. In either case, its competitive ratio is at least
\[
    \frac{x_2}{x_2-\alpha\cdot x_2}=\frac{1}{1-\alpha}\geq \ell\;.
\]
Finally, if \alglower packs $x_2$ and removes $x_1$, it receives a final item $x_3=1-x_1$. Since there are no further items, reserving $x_3$ is clearly suboptimal, while since $x_2>x_3$, \alglower gains no benefit from removing $x_2$ and packing $x_3$. The optimal solution has value $x_1+x_3=1$, while \alglower has a gain of at most $x_2-\beta\cdot x_1$. Its competitive ratio is therefore at least
\[
\frac{1}{x_2-\beta\cdot x_1}\xrightarrow{\eps\to 0} \frac{1}{1-x_1-\beta\cdot x_1}=\frac{1 + \beta + \sqrt{\beta^2 + 2\beta + 5}}{2}\geq \ell\;,
\]
where the last inequality holds by \cref{eq:equilibriumx2}.
\end{proof}
\theoremLowerReservationSmall*
\begin{proof}
The proof is an extension of the lower bound by \citefull{BBHLR2021} and based on the same three item sizes $\valname_1$, $\valname_2$, $\valname_3$ satisfying the system of equations
\begin{align}
\label{eq:system1}
\lowerbound=\frac{1}{\valname_2-\alpha\cdot \valname_1}&=\frac{\valname_2}{(1-\alpha)\cdot \valname_1}\\ \intertext{and}
\label{eq:system2}
\frac{\valname_3}{\valname_2-\alpha\cdot (\valname_1+\valname_2)}&=\frac{1}{\valname_3-\alpha\cdot (\valname_1+\valname_2)}\;.
\end{align}
This system of equations is satisfied for
\begin{align}
\valname_1&=\frac{2}{3+\sqrt{5-4\alpha}}\label{eq:value_xi_1}\;,\\
\valname_2&=1-\valname_1=\frac{1+\sqrt{5-4\alpha}}{3+\sqrt{5-4\alpha}}\;,\\
\valname_3&=\frac{\alpha+\sqrt{4(\valname_2-\alpha)+\alpha^2}}{2}\;.
\end{align}
We first give an overview of the structure of the proof. The structure is illustrated in \cref{fig:lbreslow}. We then check all relevant conditions separately. Some of the cases we consider build upon each other. They are therefore presented in the order in which they are analyzed later, which may make the enumeration initially seem somewhat arbitrary in places.
\begin{figure}
    \centering
    \begin{tikzpicture}[
    font=\large,
    >=Latex
]
    \def\rejdist{3cm}
    \def\resdist{4cm}
    
    \node[itemsettings] (x1)
        {$x_1\da \valname_1+\eps$};
    \node[itemsettings, 
          ifpack=x1] (x2)
        {$x_2 \da \lowerbound\cdot\valname_1$};
    \node[itemsettings,
          ifpack=x2] (f1)
        {$1$};
    \node[itemsettings,
          ifreservedepth3=x1] (y1)
        {$y_1 \da \valname_2$};
    \node[itemsettings,
          ifpack=y1] (f2)
        {$1$};
    \node[itemsettings,
          ifreservedepth2=y1,
          xshift=1cm] (z1)
        {$z_1 \da \valname_3$};
    \node[itemsettings,
          ifpack=z1] (f3)
        {$1$};
    
    \node[leafsettings,
          ifreject=x1] (x1rej)
        {$c = \infty$};
    \node[leafsettings,
          ifreject=x2] (x2rej)
        {$c = \frac{x_2}{x_1}$};
    \node[leafsettings,
          anchor=west,
          below right = 0.5cm and 1.7cm of x2.east] (x2res)
        {$c = \min \left( \frac{x_2}{x_1 - \alpha x_2}, \frac{x_2}{x_2 - \alpha x_2 - \beta x_1} \right)$};
    \node[leafsettings,
          ifreject=f1] (f1rej)
        {$c \geq \frac{1}{x_2 - \beta x_1}$};
    \node[leafsettings,
          ifpack=f1] (f1pack)
        {$c \geq \frac{1}{1 - \beta \cdot(x_1 + x_2)}$};
    \node[leafsettings,
          ifreject=y1] (y1rej)
        {$c = \frac{y_1}{x_1 - \alpha x_1}$};
    \node[leafsettings,
          ifreject=f2] (f2rej)
        {$c \geq \frac{1}{y_1 - \alpha x_1}$};
    \node[leafsettings,
          ifpack=f2] (f2pack)
        {$c \geq \frac{1}{1 - \alpha x_1 - \beta y_1}$};
    \node[leafsettings,
          ifreject=z1] (z1rej)
        {$c = \frac{z_1}{y_1 - \alpha \cdot(x_1 + y_1)}$};
    \node[leafsettings,
          ifreserve=z1] (z1res)
        {$c = \frac{z_1}{z_1 - \alpha \cdot(x_1 + y_1 + z_1)}$};
    \node[leafsettings,
          ifreject=f3] (f3rej)
        {$c \geq \frac{1}{z_1 - \alpha \cdot(x_1 + y_1)}$};
    \node[leafsettings,
          ifpack=f3] (f3pack)
        {$c \geq \frac{1}{1 - \alpha \cdot(x_1 + y_1) - \beta z_1}$};

    \draw[rejarr] (x1.west) -| (x1rej.north);
    \draw[packarr] (x1.south) -- (x2.north);
    \draw[resarr] (x1.east) -| (y1.north);
    \draw[rejarr] (x2.west) -| (x2rej.north);
    \draw[resarr over] (x2.east) -| (x2res.north);
    \draw[packarr] (x2.south) -- (f1.north);
    \draw[rejarr] (f1.west) -| (f1rej.north);
    \draw[packarr] (f1.south) -| (f1pack.north);
    \draw[rejarr] (y1.west) -| (y1rej.north);
    \draw[packarr] (y1.south) -- (f2.north);
    \draw[resarr] (y1.east) -| (z1.north);
    \draw[rejarr] (f2.west) -| (f2rej.north);
    \draw[packarr] (f2.south) -| (f2pack.north);
    \draw[rejarr] (z1.west) -| (z1rej.north);
    \draw[packarr] (z1.south) -- (f3.north);
    \draw[resarr] (z1.east) -| (z1res.north);
    \draw[rejarr] (f3.west) -| (f3rej.north);
    \draw[packarr] (f3.south) -| (f3pack.north);


    \node[numbersettings, slightlyaboveleft=x2] {\reservationCaseXii};
    \node[numbersettings, slightlyaboveleft=f1] {\reservationCasePackXii};
    \node[numbersettings, slightlyaboveleft=x2rej] {\reservationCaseRejectXii};
    \node[numbersettings, slightlyaboveleft=f1rej] {\reservationCaseRejectFinali};
    \node[numbersettings, slightlyaboveright=f1pack] {\reservationCasePackFinali};
    \node[numbersettings, slightlyaboveright=x2res] {\reservationCaseReserveXii};
    \node[numbersettings, slightlyaboveleft=y1rej] {\reservationCaseRejectYi};
    \node[numbersettings, slightlyaboveleft=f2rej] {\reservationCaseRejectFinalii};
    \node[numbersettings, slightlyaboveright=f2pack] {\reservationCasePackFinalii};
    \node[numbersettings, slightlyaboveleft=z1] {\reservationCaseZi}; 
    \node[numbersettings, slightlyaboveleft=z1rej] {\reservationCaseRejectZi};
    \node[numbersettings, slightlyaboveright=z1res] {\reservationCaseReserveZi};
    \node[numbersettings, slightlyaboveleft=f3rej] {\reservationCaseRejectFinaliii};
    \node[numbersettings, slightlyaboveright=f3pack] {\reservationCasePackFinaliii};

\end{tikzpicture}
    \caption{Structure of the lower bound analyzed in \cref{thm:lowerbound_reservation_small}.  Downward arrows (green) represent the choice to pack an item, while leftward arrows (red, dashed) and rightward arrows (blue, dotted), respectively, represent rejecting or reserving an item.}
    \label{fig:lbreslow}
\end{figure}

Consider any deterministic algorithm \alglower and let $\eps>0$ be sufficiently small. \alglower first receives an item $x_1\coloneq \valname_1+\eps$.
Note that, since $0<\alpha<1$, this value is well-defined and $0<x_1<1$.
If \alglower rejects $x_1$, the instance stops and \alglower's competitive ratio is unbounded.
If \alglower packs $x_1$, it receives an item $x_2=\lowerbound\cdot \valname_1$.
In \casename~\reservationCaseXii, we will check that this is indeed a valid item size. 
In \casename~\reservationCasePackXii, we consider the case where $x_2$ is packed. We check that $x_1$ and $x_2$ cannot be packed together, so $x_1$ must be removed in this case. \alglower then receives a final item of size $1$. Since this is the last item, reserving it is clearly suboptimal. We therefore only consider the cases where it is rejected (\casename~\reservationCaseRejectFinali) or packed (\casename~\reservationCasePackFinali).
If $x_2$ is rejected, the instance stops (\casename~\reservationCaseRejectXii), and the same holds if it is reserved (\casename~\reservationCaseReserveXii). 

If $x_1$ is reserved instead, \alglower receives an item $y_1\coloneq \valname_2$. If it rejects $y_1$, the instance stops (\casename~\reservationCaseRejectYi).
If it is packed, \alglower receives a final item of size $1$. Once again, since this is the last item, we only consider the case where it is rejected (\casename~\reservationCaseRejectFinalii) or packed (\casename~\reservationCasePackFinalii).

If $y_1$ is also reserved, \alglower receives an item $z_1\coloneq \valname_3$. We will check in (\casename~\reservationCaseZi) that this is a valid item. If this item is rejected, the instance stops (\casename~\reservationCaseRejectZi), and the same holds if it is reserved (\casename~\reservationCaseReserveZi). 
If it is packed, the instance receives a final item of size $1$. As before, we only consider the cases where this item is rejected (\casename~\reservationCaseRejectFinaliii) or packed (\casename~\reservationCasePackFinaliii).


We now prove that, in each of these cases, the necessary conditions hold and that, however the instance ends, \alglower's competitive ratio is at least $\lowerbound$ (in some cases, this holds only in the limit $\eps\to 0$). Recall that whenever \alglower's gain is nonpositive, its competitive ratio is unbounded; we may therefore assume throughout the case analysis that \alglower's gain is positive, so that every upper bound on the gain below is positive as well. First, we show for technical reasons that $\tilde{\beta}(\alpha)$ is increasing in $\alpha\leq \sqrt{2}-1$.
\begin{claim}\label{claim:betaboundincreasing}
The function 
\[\tilde{\beta}(\alpha)=\frac{\alpha\cdot (3+\sqrt{5-4\alpha}-2\alpha)}{2(1-\alpha)}\]
is increasing for $0<\alpha<\sqrt{2}-1$.
\end{claim}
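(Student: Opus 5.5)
The plan is to show that $\tilde\beta'(\alpha)>0$ on $(0,\sqrt2-1)$. We write $\tilde\beta(\alpha)=g(\alpha)\cdot h(\alpha)$, where
\[
g(\alpha)\da\frac{\alpha}{2(1-\alpha)}\quad\text{and}\quad h(\alpha)\da 3-2\alpha+\sqrt{5-4\alpha}\;.
\]
Then $g$ is positive and increasing on $(0,1)$, since $\alpha\mapsto\alpha$ increases and $1-\alpha$ decreases. Its derivative is $g'(\alpha)=\frac{1}{2(1-\alpha)^2}$.

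The factor $h$ is positive but decreasing, with
\[
h'(\alpha)=-2-\frac{2}{\sqrt{5-4\alpha}}\;.
\]
So the product rule does not give the sign directly, and we compare the two terms. We have
\[
\tilde\beta'(\alpha)=\frac{h(\alpha)}{2(1-\alpha)^2}+\frac{\alpha}{2(1-\alpha)}\,h'(\alpha)\;.
\]
Multiplying by $2(1-\alpha)^2>0$, it suffices to show
\[
h(\alpha)+\alpha(1-\alpha)h'(\alpha)>0\;.
\]
We substitute $s\da\sqrt{5-4\alpha}$, so $\alpha=(5-s^2)/4$ and $s\in(\sqrt{5-4(\sqrt2-1)},\sqrt5)$, which is roughly $(1.95,2.24)$. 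This turns the inequality into
\[
3-2\alpha+s-2\alpha(1-\alpha)\Bigl(1+\frac1s\Bigr)>0\;.
\]

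The main step is a crude bound. On the range $\alpha<\sqrt2-1<1/2$ we have $\alpha(1-\alpha)\le 1/4$ and $1/s<1/\sqrt{4}=1/2$, since $s>\sqrt{5-4\cdot 0.415}>1.8$. Hence the subtracted term is at most $2\cdot\frac14\cdot\frac32=\frac34$. Meanwhile $3-2\alpha+s>3-1+1.8>3$, so the left-hand side exceeds $3-\frac34>0$.

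I expect no real obstacle, because the margin is large. The only care needed is to justify the bounds $s>1$ and $\alpha(1-\alpha)\le\frac14$ cleanly. If one prefers to avoid the substitution, one can bound $h'(\alpha)\ge -2-\frac{2}{\sqrt{5-4\alpha}}\ge -4$, since $5-4\alpha\ge1$. Combined with $h(\alpha)\ge 3-2\alpha+1\ge 3$ this gives
\[
h+\alpha(1-\alpha)h'\ge 3-4\cdot\tfrac14=2>0\;,
\]
which in fact proves monotonicity on all of $(0,1)$.
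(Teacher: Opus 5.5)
Your argument is correct, but it uses a different decomposition from the paper. The paper splits $\tilde\beta$ into the numerator $N(\alpha)=\alpha(3+\sqrt{5-4\alpha}-2\alpha)$ and the denominator $2(1-\alpha)$. It then only verifies $N'(\alpha)=3-4\alpha+\sqrt{5-4\alpha}\,\bigl(1-\frac{2\alpha}{5-4\alpha}\bigr)>0$, using $\alpha<3/4$ and $\alpha<5/6$. A positive increasing numerator over a positive decreasing denominator is increasing, so the paper never has to balance two terms against each other.

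You instead pair the increasing factor $\alpha/(2(1-\alpha))$ with the decreasing factor $h$, which forces a quantitative comparison of the two product-rule terms. In exchange, your condition $h+\alpha(1-\alpha)h'>0$ is exactly the sign of $\tilde\beta'$. It is also weaker than the paper's condition $N'=h+\alpha h'>0$, because $h'<0$ and $1-\alpha<1$. This is why your bound extends to all of $(0,1)$. The paper's route cannot do this, since $N'(1)=-2<0$ means the numerator is not monotone near $\alpha=1$.

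Your first version contains two numerical slips. Neither affects the conclusion, but both should be corrected.
\begin{itemize}
\item The lower end of the range of $s$ is $\sqrt{9-4\sqrt2}=2\sqrt2-1\approx1.83$, not $1.95$.
\item $s>1.8$ does not give $1/s<1/2$. In fact $s<2$ whenever $\alpha>1/4$, which is exactly the range where the paper applies the claim. From $s>1.8$ you only get $1/s<5/9$, so the subtracted term is below $7/9$ rather than $3/4$. That is still far less than $3-2\alpha+s>3.8$.
\end{itemize}

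In your substitution-free version, the bound $h\ge 4-2\alpha\ge 3$ requires $\alpha\le 1/2$. On $(0,\sqrt2-1)$ this holds and gives the clean bound $h+\alpha(1-\alpha)h'\ge 3-1=2$. For the extension to $(0,1)$, use $h\ge 4-2\alpha>2$ instead, which gives $h+\alpha(1-\alpha)h'>1$. The substitution-free version is the one worth writing up.
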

\begin{claimproof}
The denominator of $\tilde{\beta}(\alpha)$ is clearly decreasing in $\alpha$. As for the numerator, we consider its derivative:
\begin{align*}
    \diff{\alpha}\left(\alpha\cdot (3+\sqrt{5-4\alpha}-2\alpha)\right)&=3+\sqrt{5-4\alpha}-2\alpha+\alpha\cdot \left(\frac{-4}{2\sqrt{5-4\alpha}}-2\right)\\
    &=3+\sqrt{5-4\alpha}-4\alpha-\frac{2\alpha\cdot \sqrt{5-4\alpha}}{5-4\alpha}\\
    &=3-4\alpha+\sqrt{5-4\alpha}\cdot \left(1-\frac{2\alpha}{5-4\alpha}\right).
\end{align*}
Now since $\alpha\leq \sqrt{2}-1<5/6$, the term $2\alpha/(5-4\alpha)$ is at most $1$, hence $\sqrt{5-4\alpha}\cdot (1-2\alpha/(5-4\alpha))$ is positive. Since $\alpha\leq \sqrt{2}-1<3/4$, also $3-4\alpha$ is positive. Thus the derivative is positive, which means the numerator of $\tilde{\beta}$ and thus $\tilde{\beta}$ itself are increasing with $\alpha$. 
\end{claimproof}
We now proceed with the proof itself.
\begin{description}
    \item[\casename~\reservationCaseXii.] We need to check that $x_2$ is indeed a valid item, \ie, that $0\leq x_2\leq 1$. Recall that $x_2=\lowerbound\cdot \valname_1$. Clearly, $x_2\geq 0$. Note that 
        \begin{align}
            \nonumber
            x_2&=\frac{1+\sqrt{5-4\alpha}}{2(1-\alpha)}\cdot \frac{2}{3+\sqrt{5-4\alpha}}\\ 
            \nonumber
            &=\frac{1+\sqrt{5-4\alpha}}{(1-\alpha)\cdot (3+\sqrt{5-4\alpha})}\\
            \nonumber
            &=\frac{1+\sqrt{5-4\alpha}}{(1-\alpha)\cdot (3+\sqrt{5-4\alpha})}\cdot \frac{1-\sqrt{5-4\alpha}}{1-\sqrt{5-4\alpha}}\\
            \nonumber
            &=\frac{4\alpha-4}{(1-\alpha)\cdot (3-2\sqrt{5-4\alpha}-(5-4\alpha))}\\
            \nonumber
            &=\frac{4}{2\sqrt{5-4\alpha}+2-4\alpha}\\
            &=\frac{2}{1-2\alpha+\sqrt{5-4\alpha}}\;.\label{eq:valuey1}
            \end{align}
            The expression $1-2\alpha+\sqrt{5-4\alpha}$ is decreasing in $\alpha$ for $0<\alpha<1$.
            Since $\alpha\leq \sqrt{2}-1$, it is therefore at least
            \begin{align*}
            1-2\cdot (\sqrt{2}-1)+\sqrt{5-4\cdot (\sqrt{2}-1)}&=3-2\sqrt{2}+\sqrt{9-4\sqrt{2}}\\
            &=3-2\sqrt{2}+\sqrt{(2\sqrt{2}-1)^2}\\
            &=2\;,
        \end{align*}
        so $x_2\leq 1$ is indeed a valid item size.
    \item[\casename~\reservationCasePackXii.] For this case, we need to prove that $x_2$ cannot be packed without removing $x_1$, \ie, that $x_1+x_2>1$. For this, note that
        \begin{align*}
            \valname_1+x_2&=(1+\lowerbound)\cdot\frac{2}{3+\sqrt{5-4\alpha}}\\
            &=\left(1+\frac{1+\sqrt{5-4\alpha}}{2(1-\alpha)}\right)\cdot \frac{2}{3+\sqrt{5-4\alpha}}\\
            &=\frac{3-2\alpha+\sqrt{5-4\alpha}}{2(1-\alpha)}\cdot\frac{2}{3+\sqrt{5-4\alpha}}\\
            &=\frac{3-2\alpha+\sqrt{5-4\alpha}}{2(1-\alpha)}\cdot\frac{2}{3+\sqrt{5-4\alpha}}\cdot \frac{3-\sqrt{5-4\alpha}}{3-\sqrt{5-4\alpha}}\\
            &=\frac{3-2\alpha+\sqrt{5-4\alpha}}{2(1-\alpha)}\cdot\frac{3-\sqrt{5-4\alpha}}{2+2\alpha}\\
            &=\frac{4+2\alpha\sqrt{5-4\alpha}-2\alpha}{4-4\alpha^2}\\
            &=\frac{2+\alpha\sqrt{5-4\alpha}-\alpha}{2-2\alpha^2}\\
            &\geq 1\;,
        \end{align*}
because $\sqrt{5-4\alpha}\geq 1$ and $2-2\alpha^2\leq 2$, for $0<\alpha\leq 1$.
Therefore, $x_1+x_2=\valname_1+x_2+\eps>1$.
    \item[\casename~\reservationCaseRejectXii.] In this situation, \alglower has a gain of $x_1$, while the optimal solution is at least $x_2$. \alglower's competitive ratio is therefore at least
    \[\frac{x_2}{x_1}=\frac{\lowerbound\cdot\valname_1}{\valname_1+\eps}\xrightarrow{\eps\to0} \lowerbound\;.\]
    \item[\casename~\reservationCaseReserveXii.] In this situation, \alglower has packed $x_1$ and reserved $x_2$ when the instance ends. The optimal solution has value $x_2$. If \alglower keeps $x_1$ and does not pack $x_2$, its gain is $x_1-\alpha\cdot x_2\leq x_1$, so we can see from \casename~\reservationCaseRejectXii that \alglower's competitive ratio is at least $\lowerbound$.
        If \alglower removes $x_1$ and packs $x_2$, its gain is  $x_2-\alpha\cdot x_2-\beta\cdot x_1\leq x_2-\alpha\cdot x_2-\beta\cdot \valname_1$. We will show that this, too, is at most $x_1$, allowing us to use the same comparison as before. First, we can see that
        \[
            x_2-\alpha\cdot x_2-\beta\cdot \valname_1=(1-\alpha)\cdot x_2-\beta\cdot \valname_1=((1-\alpha)\cdot \lowerbound-\beta)\cdot \valname_1\;.
        \]
        We now consider the expression
        \[
            (1-\alpha)\cdot \lowerbound-\beta\leq \frac{1+\sqrt{5-4\alpha}}{2}-\tilde{\beta}(\alpha)\;.
        \]
        The first term, $(1+\sqrt{5-4\alpha})/2$, is clearly decreasing in $0<\alpha<1$. Since $\tilde{\beta}(\alpha)$ is increasing in $\alpha\leq \sqrt{2}-1$ by \cref{claim:betaboundincreasing}, the entire expression is decreasing in $\alpha$. Since $\alpha\geq 1/4$, we thus know that
        \begin{align*}
            (1-\alpha)\cdot \lowerbound-\beta \leq 
            \frac{1+\sqrt{5-4/4}}{2}-\frac{1/4\cdot (3+\sqrt{5-4/4}-2/4)}{2(1-1/4)}=\frac{3}{4}\;.
        \end{align*}
        Hence, $((1-\alpha)\cdot \lowerbound-\beta)\cdot \valname_1\leq \valname_1\leq x_1$ and, using the bound from \casename~\reservationCaseRejectXii, \alglower has a competitive ratio of at least $\lowerbound$.
    \item[\casename~\reservationCaseRejectFinali.] In this situation, \alglower has removed $x_1$, packed $x_2$, and rejected the final item of size $1$. Its gain is therefore
        $x_2-\beta\cdot x_1\leq x_2-\beta\cdot \valname_1$.
        We therefore need to show that 
        \[\frac{1}{x_2-\beta\cdot \valname_1}\geq \lowerbound\;,\]
        which, since $\beta\geq \tilde{\beta}(\alpha)$, we will do by showing that
        \[\frac{1}{x_2-\tilde{\beta}(\alpha)\cdot \valname_1}\geq \lowerbound\;.\]
        This is the case because
        \allowdisplaybreaks\begin{align*}
            x_2-\tilde{\beta}(\alpha)\cdot \valname_1&=\lowerbound\cdot \valname_1-\tilde{\beta}(\alpha)\cdot \valname_1\\
            &= \lowerbound\cdot \frac{2}{3+\sqrt{5-4\alpha}}-\tilde{\beta}(\alpha)\cdot \frac{2}{3+\sqrt{5-4\alpha}}\\
            &=\frac{1+\sqrt{5-4\alpha}}{2(1-\alpha)}\cdot \frac{2}{3+\sqrt{5-4\alpha}}- \frac{\alpha\cdot (3+\sqrt{5-4\alpha}-2\alpha)}{2(1-\alpha)}\cdot\frac{2}{3+\sqrt{5-4\alpha}}\\
            &=\frac{1+\sqrt{5-4\alpha}}{1-\alpha}\cdot \frac{3-\sqrt{5-4\alpha}}{4+4\alpha}-\frac{\alpha\cdot (3+\sqrt{5-4\alpha}-2\alpha)}{1-\alpha}\cdot \frac{3-\sqrt{5-4\alpha}}{4+4\alpha}\\
            &=\frac{(3-\sqrt{5-4\alpha})\cdot \left(1+\sqrt{5-4\alpha}-\alpha\cdot (3+\sqrt{5-4\alpha}-2\alpha)\right)}{4\cdot (1-\alpha^2)}\\
            &=\frac{(3-\sqrt{5-4\alpha})\cdot \left(1 + \sqrt{5 - 4\alpha}\cdot (1 - \alpha) - 3 \alpha + 2 \alpha^2\right)}{4\cdot (1-\alpha^2)}\\
            &=\frac{\sqrt{5-4\alpha}\cdot (2-2\alpha^2)+3-(5-4\alpha)\cdot (1-\alpha)-9\alpha+6\alpha^2}{4\cdot (1-\alpha^2)}\\
            &=\frac{\sqrt{5-4\alpha}\cdot (2-2\alpha^2)+3-(5-4\alpha-5\alpha+4\alpha^2)-9\alpha+6\alpha^2}{4\cdot (1-\alpha^2)}\\
            &=\frac{\sqrt{5-4\alpha}\cdot (2-2\alpha^2)-2+2\alpha^2}{4\cdot (1-\alpha^2)}\\
            &=\frac{2\cdot(\sqrt{5 - 4 \alpha}-1)\cdot (1 - \alpha^2)}{4\cdot (1-\alpha^2)}\\
            &=\frac{\sqrt{5-4\alpha}-1}{2}\;,
        \end{align*}
        which means that
        \begin{align*}
            \frac{1}{x_2-\tilde{\beta}(\alpha)\cdot \valname_1}&= \frac{2}{\sqrt{5-4\alpha}-1}\\
            &=\frac{2}{\sqrt{5-4\alpha}-1}\cdot \frac{\sqrt{5-4\alpha}+1}{\sqrt{5-4\alpha}+1}\\
            &=\frac{2(\sqrt{5-4\alpha}+1)}{4-4\alpha}\\
            &=\frac{\sqrt{5-4\alpha}+1}{2-2\alpha}\\
            &=\lowerbound\;.
        \end{align*}
    \item[\casename~\reservationCasePackFinali.] In this situation, \alglower has removed $x_1$, packed $x_2$, removed $x_2$ and packed the final item of size $1$. Its gain is therefore $1-\beta\cdot (x_1+x_2)$. We now show that this gain is no larger than \alglower's gain in \casename~\reservationCaseRejectFinali, where it rejected the final item. For this, we need to show that 
        \[1-\beta\cdot (x_1+x_2)\leq x_2-\beta\cdot x_1\;,\]
        which is equivalent to
        \[x_2\geq \frac{1}{1+\beta}\;.\]
        Since $\beta\geq \tilde{\beta}(\alpha)$, it suffices to show that $x_2\geq 1/(1+\tilde\beta(\alpha))$. Since $\tilde{\beta}(\alpha)$ is increasing in $\alpha$, it is even sufficient to show that $x_2\geq 1/(1+\tilde\beta(1/4))$.
        It is clear from the expression in \eqref{eq:valuey1} that $x_2=x_2(\alpha)$ is also increasing in $\alpha$. Evaluating
        \[x_2(1/4)=\frac{2}{1-2\cdot 1/4+\sqrt{5-4\cdot1/4}}=\frac{4}{5}\]
        and
        \begin{equation}\label{eq:betatildequarter}
            \tilde{\beta}(1/4)=\frac{1/4\cdot (3+\sqrt{5-4\cdot 1/4}-2\cdot 1/4)}{2(1-1/4)}=\frac{3}{4}
        \end{equation}
        thus leads to
        \[
            x_2\geq x_2(1/4)=\frac{4}{5}> \frac{4}{7}=\frac{1}{1+3/4}=\frac{1}{1+\tilde{\beta}(1/4)}\geq\frac{1}{1+\beta}
        \;.\]
    \item[\casename~\reservationCaseRejectYi.] In this situation, \alglower has reserved $x_1$ and rejected the item $y_1=\valname_2=1-\valname_1=1-x_1+\eps$. By \cref{eq:system1}, its competitive ratio is therefore at least
        \[
            \frac{\valname_2}{(1-\alpha)\cdot (\valname_1+\eps)}\xrightarrow{\eps\to 0}\frac{\valname_2}{(1-\alpha)\cdot \valname_1}=\lowerbound\;.
        \]
    \item[\casename~\reservationCaseRejectFinalii.] In this situation, \alglower has reserved $x_1$, packed $y_1$ and rejected the final item of size $1$. Note that, since $y_1=\valname_2=1-\valname_1=1-x_1+\eps$, we know that $x_1$ and $y_1$ cannot both be packed. Additionally, since
        \[\valname_1=\frac{2}{3+\sqrt{5-4\alpha}}< \frac{2}{3+\sqrt{5-4\cdot 1}}=1/2\;,\]
        we know that
        \begin{equation}\label{eq:reservationYigreaterxi}
            \valname_2>\valname_1\;.
        \end{equation}
        \alglower's gain is therefore at most $y_1-\alpha\cdot x_1\leq \valname_2-\alpha\cdot \valname_1$ and, by \cref{eq:system1}, its competitive ratio is at least
        \[\frac{1}{\valname_2-\alpha\cdot \valname_1}=\lowerbound\;.\]
    \item[\casename~\reservationCasePackFinalii.] In this situation, \alglower has reserved $x_1$, packed $y_1$, removed $y_1$, and packed the final item of size $1$. Its gain is
        $1-\alpha\cdot x_1-\beta\cdot y_1$. We claim that this is no larger than its gain in \casename~\reservationCaseRejectFinalii, where it rejected the final item. We therefore need to show that
        \[1-\alpha\cdot x_1-\beta\cdot y_1\leq  y_1-\alpha\cdot x_1\;,\]
        which is equivalent to 
        \[y_1\geq \frac{1}{1+\beta}\;.\]
        This we do similarly to the analysis in \casename~\reservationCasePackFinali for $x_2$. Since $\beta\geq \tilde{\beta}(\alpha)$ and $\tilde{\beta}(\alpha)$ is increasing in $\alpha$, it suffices to show that $y_1\geq 1/(1+\tilde{\beta}(1/4))$. Since, additionally, $\valname_1=\valname_1(\alpha)=2/(3+\sqrt{5-4\alpha})$ is clearly increasing in $\alpha$, we know that $\valname_2(\alpha)=1-\valname_1(\alpha)$ is decreasing, so it even suffices to show that 
        $\valname_2(\sqrt{2}-1)\geq 1/(1+\tilde{\beta}(1/4))$. This is the case because
        \begin{align*}
            \valname_2(\sqrt{2}-1)&=1-\frac{2}{3+\sqrt{5-4(\sqrt{2}-1)}}
                =\frac{1+\sqrt{9-4\sqrt{2}}}{3+\sqrt{9-4\sqrt{2}}}\\
            &=\frac{1+\sqrt{(2\sqrt{2}-1)^2}}{3+\sqrt{(2\sqrt{2}-1)^2}}
                =\frac{2\sqrt{2}}{2+2\sqrt{2}}=\frac{\sqrt{2}}{1+\sqrt{2}}\\
            &=\frac{\sqrt{2}}{1+\sqrt{2}}\cdot \frac{\sqrt{2}-1}{\sqrt{2}-1}
                =2-\sqrt{2}\;,
        \end{align*}
        while $\tilde{\beta}(1/4)=3/4$ by \cref{eq:betatildequarter}, so
        \[
            y_1\geq 2-\sqrt{2}>\frac{4}{7}=\frac{1}{1+3/4}\geq \frac{1}{1+\beta}
        \;.\]
    \item[\casename~\reservationCaseZi.] In this case, we check that $z_1=\valname_3$ is a valid item size. Recall that
        \[
            \valname_3=\frac{\alpha+\sqrt{4\cdot (\valname_2-\alpha)+\alpha^2}}{2}
        \;.\]
        Since we saw in \cref{eq:reservationYigreaterxi} that $1-\valname_2=\valname_1\leq \valname_2$, we know that $\valname_2\geq 1/2\geq \alpha$, so  $z_1$ is well-defined and non-negative.  
        Additionally, since $\valname_2\leq 1$, 
        \[
            \valname_3\leq \frac{\alpha+\sqrt{4\cdot (1-\alpha)+\alpha^2}}{2}=\frac{\alpha+\sqrt{(2-\alpha)^2}}{2}=1
        \;,\]
        and, for the same reason,
        \begin{equation}\label{eq:reservationZigreaterYi}
            \valname_3\geq \frac{\alpha+\sqrt{4\valname_2\cdot(\valname_2-\alpha)+\alpha^2}}{2}=\frac{\alpha+\sqrt{(2\valname_2-\alpha)^2}}{2}=\valname_2
        \;.\end{equation}

    \item[\casename~\reservationCaseRejectFinaliii.] In this situation, \alglower has reserved $x_1$, $y_1$, packed $z_1$, and rejected the final item of size $1$. 
        Since $z_1\geq y_1\geq x_1$ by \cref{eq:reservationYigreaterxi,eq:reservationZigreaterYi} and since $y_1+x_1>1$, none of these three items can be packed together. \alglower's gain is therefore at most $z_1-\alpha\cdot (x_1+y_1)\leq \valname_3-\alpha\cdot (\valname_1+\valname_2)$ and its competitive ratio is at least
        \begin{equation}\label{eq:reject1ratio}
            \frac{1}{\valname_3-\alpha\cdot (\valname_1+\valname_2)}\;.
        \end{equation}
        Our goal is to show that this is at least $\lowerbound$. Since we already know from \eqref{eq:system1} that $1/(\valname_2-\alpha\cdot \valname_1)=\lowerbound$, it suffices to show that
        \[
            \valname_3-\alpha\cdot (\valname_1+\valname_2)\leq \valname_2-\alpha\cdot \valname_1\;,
        \]
        which is equivalent to
        \begin{equation}\label{eq:reject1condition}
            \valname_3\leq (1+\alpha)\cdot \valname_2\;.
        \end{equation}
        This we will now show. If we consider 
        \[
            \valname_3-(1+\alpha)\cdot \valname_2=\frac{\alpha+\sqrt{4(\valname_2-\alpha)+\alpha^2}}{2}-(1+\alpha)\cdot \valname_2
        \;,\]
        we can define the function
        \[
            \xthreefunc{\alpha}(x)=\frac{\alpha+\sqrt{4(x-\alpha)+\alpha^2}}{2}-(1+\alpha)\cdot x
        \;.\]
        We can then consider the derivative of $\xthreefunc{\alpha}$
        \[
            \xthreefunc{\alpha}'(x)=\frac{1}{\sqrt{4(x-\alpha)+\alpha^2}}-\alpha-1
        \;.\]
        This is clearly decreasing in $x$ as long as the numerator is positive. Since $x\geq 2-\sqrt{2}>1/2>\alpha$, this is the case. Since $x\geq 2-\sqrt{2}$ and $\alpha\geq 1/4$, we therefore know that
        \begin{equation}\label{eq:xthreefuncderiv}
            \frac{1}{\sqrt{8-4\sqrt{2}-4\alpha+\alpha^2}}-\alpha-1\leq\frac{1}{\sqrt{8-4\sqrt{2}-4\alpha+\alpha^2}}-\frac{5}{4}\;.
        \end{equation}
        To show that this is negative, consider the square of its first term,
        \begin{equation}\label{eq:xthreefuncderivpart}
            \frac{1}{8-4\sqrt{2}-4\alpha+\alpha^2}
        \;.\end{equation}
        This term is increasing in $\alpha<1/2$, because $\alpha^2-4\alpha$ is decreasing (its derivative $2\alpha-4$ is negative). Since $\alpha\leq \sqrt{2}-1$, the expression in \cref{eq:xthreefuncderivpart} is at most
        \begin{align*}
            \frac{1}{8-4\sqrt{2}-4(\sqrt{2}-1)+(\sqrt{2}-1)^2}&=\frac{1}{15-10\sqrt{2}}<\frac{25}{16}\;,
        \end{align*}
        which can be checked numerically. Therefore, 
        \[
            \frac{1}{\sqrt{8-4\sqrt{2}-4\alpha+\alpha^2}}<\frac{5}{4}\;,
        \]
        and thus by \cref{eq:xthreefuncderiv}, $\xthreefunc{\alpha}'(x)$ is negative for $x\geq 2-\sqrt{2}$, which means that $\xthreefunc{\alpha}(x)$ is decreasing. Once again, since $\valname_1=\valname_1(\alpha)$ is increasing in $\alpha$, we know that $\valname_2(\alpha)$ is decreasing, so $\valname_2\geq \valname_2(\sqrt{2}-1)=2-\sqrt{2}$. Therefore,
        \begin{align*}
            \xthreefunc{\alpha}(\valname_2)&\leq \xthreefunc{\alpha}(2-\sqrt{2})\\
            &=\frac{\alpha+\sqrt{4(2-\sqrt{2}-\alpha)+\alpha^2}}{2}-(1+\alpha)\cdot (2-\sqrt{2})\\
            &=\frac{\alpha+\sqrt{8-4\sqrt{2}-4\alpha+\alpha^2}}{2}-(1+\alpha)\cdot (2-\sqrt{2})\;.
        \end{align*}
        Now, note that 
        \[
            8-4\sqrt{2}-4\alpha<\frac{2916}{1369}-\frac{108\alpha}{37}
        \;,\] because 
        \begin{align*}
            \left(4-\frac{108}{37}\right)\cdot\alpha=\frac{40\alpha}{37}\geq \alpha\geq \frac{1}{4}>8-\frac{2916}{1369}-4\sqrt{2}=\frac{8036}{1369}-4\sqrt{2}\;,
        \end{align*}
        where the last inequality can be checked numerically.
        Therefore,
        \begin{align*}
            \xthreefunc{\alpha}(\valname_2)&\leq\frac{\alpha+\sqrt{8-4\sqrt{2}-4\alpha+\alpha^2}}{2}-(1+\alpha)\cdot (2-\sqrt{2})\\
            &< \frac{\alpha+\sqrt{\frac{2916}{1369}-\frac{108\alpha}{37}+\alpha^2}}{2}-(1+\alpha)\cdot (2-\sqrt{2})\\
            &=\frac{\alpha+\sqrt{(54/37-\alpha)^2}}{2}-(1+\alpha)\cdot (2-\sqrt{2})\\
            &=\frac{27}{37}-(1+\alpha)\cdot (2-\sqrt{2})\\
            &\leq \frac{27}{37}-\left(1+\frac{1}{4}\right)\cdot (2-\sqrt{2})\\
            &=\frac{5\sqrt{2}}{4}-\frac{131}{74}\\
            &<0\;,
        \end{align*}
        which once again can be checked numerically.
        This means that $\xthreefunc{\alpha}(\valname_2)<0$, for all $\alpha$, so $\valname_3<(1+\alpha)\cdot \valname_2$ and \cref{eq:reject1condition} holds. Therefore,
        \begin{equation}\label{eq:reject1ratioholds}
            \frac{1}{\valname_3-\alpha\cdot (\valname_1+\valname_2)}\geq \lowerbound
        \end{equation}
        as desired.
    \item[\casename~\reservationCaseRejectZi.] In this situation, \alglower has reserved $x_1$ and $y_1$, and then rejected $z_1$. 
        Since $x_1+y_1>1$ and $y_1\geq x_1$ by \cref{eq:reservationYigreaterxi}, its gain is at most $y_1-\alpha\cdot (x_1+y_1)\leq \valname_2-\alpha\cdot (\valname_1+\valname_2)$ and its competitive ratio is at least
        \[
            \frac{\valname_3}{\valname_2-\alpha\cdot (\valname_1+\valname_2)}=\frac{1}{\valname_3-\alpha\cdot (\valname_1+\valname_2)}
        \]
        by \cref{eq:system2}, which, as we showed in \eqref{eq:reject1ratioholds}, is at least $\lowerbound$. 
    \item[\casename~\reservationCaseReserveZi.] In this situation, \alglower has reserved $x_1$, $y_1$, and $z_1$. 
        As in \casename~\reservationCaseRejectZi, none of these three items can be packed together and $z_1$ is the largest of the three. \alglower's gain is at most $z_1-\alpha\cdot (x_1+y_1+z_1)\leq \valname_3-\alpha\cdot (\valname_1+\valname_2+\valname_3)$, so its competitive ratio is at least 
        \[
            \frac{\valname_3}{\valname_3-\alpha\cdot (\valname_1+\valname_2+\valname_3)}
        \;.\]
        Since we know from \cref{eq:reject1ratioholds} in \casename~\reservationCaseRejectFinaliii, together with \cref{eq:system2}, that 
        \[
            \frac{\valname_3}{\valname_2-\alpha\cdot (\valname_1+\valname_2)}\geq \lowerbound
        \;,\]
        it suffices to show that
        \[
            \valname_3-\alpha\cdot (\valname_1+\valname_2+\valname_3)\leq \valname_2-\alpha\cdot (\valname_1+\valname_2)
        \;,\]
        which is equivalent to 
        \[
            (1-\alpha)\cdot \valname_3\leq \valname_2\;.
        \]
        Now recall that we just showed in \casename~\reservationCaseRejectFinaliii that \cref{eq:reject1condition} holds, \ie, 
        that $\valname_3\leq (1+\alpha)\cdot \valname_2$ or $\valname_3/(1+\alpha)\leq \valname_2$. All we need to verify is therefore that 
        \[
            1-\alpha\leq \frac{1}{1+\alpha}
        \;,\]
        which is clearly the case since $(1-\alpha)\cdot (1+\alpha)=1-\alpha^2<1$.
    \item[\casename~\reservationCasePackFinaliii.] In the final case we need to consider, \alglower has reserved $x_1$, $y_1$, packed $z_1$, removed $z_1$, and packed the final item of size $1$. 
        Its gain is then $1-\alpha\cdot (x_1+y_1)-\beta\cdot z_1\leq 1-\alpha\cdot (\valname_1+\valname_2)-\beta\cdot \valname_3$ and its competitive ratio is at least
        \[
            \frac{1}{1-\alpha\cdot (\valname_1+\valname_2)-\beta\cdot \valname_3}
        \;.\]
        Since we know from \eqref{eq:reject1ratioholds} in \casename~\reservationCaseRejectFinaliii that 
        \[
            \frac{1}{\valname_3-\alpha\cdot (\valname_1+\valname_2)}\geq \lowerbound
        \;,\]
        it suffices to show that
        \[
            1-\alpha\cdot (\valname_1+\valname_2)-\beta\cdot \valname_3\leq \valname_3-\alpha\cdot (\valname_1+\valname_2)
        \;,\]
        which is equivalent to 
        \[(1+\beta)\cdot \valname_3\geq 1
        \;.\]
        This however is simple to check: We know that $\beta\geq \tilde{\beta}(\alpha)\geq \tilde{\beta}(1/4)=3/4$, and that $\valname_3\geq \valname_2$. We also know that $\valname_2=\valname_2(\alpha)$ is decreasing in $\alpha$, so $\valname_2\geq\valname_2(\sqrt{2}-1)=2-\sqrt{2}>4/7$, and thus
        \[(1+\beta)\cdot \valname_3>(1+3/4)\cdot 4/7=1\;.\]
\end{description}
\end{proof}

\theoremLowerReservationMedium*
\begin{proof}
The proof is based on the lower bound for the reservation-only case for $\sqrt{2}-1<\alpha\leq \phi-1$ by \citefull{BBFHLR22}. Its structure is shown in \cref{fig:lbresmid}.
\begin{figure}
    \centering
    \begin{tikzpicture}[
    font=\large,
    >=Latex
]
    \def\rejdist{2.5cm}
    \def\resdist{3cm}
    \node[itemsettings] (x1)
        {$x_1 \da \frac{1}{2 + \alpha}$};
    \node[itemsettings, 
          ifpack=x1] (x2)
        {$1$};
    \node[itemsettings,
          ifreservedepth2=x1] (y1)
        {$y_1 \da 1 - x_1 + \varepsilon$};
    \node[itemsettings,
          ifpack=y1] (y2)
        {$1$};
    
    \node[leafsettings,
          ifreject=x1] (x1rej)
        {$c = \infty$};
    \node[leafsettings,
          ifreject=x2] (x2rej)
        {$c \geq \frac{1}{x_1}$};
    \node[leafsettings,
          ifpack=x2] (x2pack)
        {$c \geq \frac{1}{1 - \beta x_1}$};
    \node[leafsettings,
          ifreject=y1] (y1rej)
        {$c = \frac{y_1}{x_1 - \alpha x_1}$};
    \node[leafsettings,
          ifreserve=y1] (y1res)
        {$c = \frac{y_1}{y_1 - \alpha\cdot (x_1 + y_1)}$};
    \node[leafsettings,
          ifpack=y2] (y2pack)
        {$c \geq \frac{1}{1 - \alpha x_1 - \beta y_1}$};
    \node[leafsettings,
          ifreject=y2] (y2rej)
        {$c \geq \frac{1}{y_1 - \alpha x_1}$};
    \draw[rejarr] (x1.west) -| (x1rej.north);
    \draw[packarr] (x1.south) -- (x2.north);
    \draw[resarr] (x1.east) -| (y1.north);
    \draw[rejarr] (x2.west) -| (x2rej.north);
    \draw[packarr] (x2.south) -- (x2pack.north);
    \draw[rejarr] (y1.west) -| (y1rej.north);
    \draw[packarr] (y1.south) -- (y2.north);
    \draw[resarr] (y1.east) -| (y1res.north);
    \draw[packarr] (y2.south) -- (y2pack.north);
    \draw[rejarr] (y2.west) -| (y2rej.north);
\end{tikzpicture}
    \caption{Structure of the lower bound analyzed in \cref{thm:lowerbound_reservation_medium}. Downward arrows (green) represent the choice to pack an item, while leftward arrows (red, dashed) and rightward arrows (blue, dotted), respectively, represent rejecting or reserving an item.}
    \label{fig:lbresmid}
\end{figure}
Consider any deterministic algorithm \alglower and let $\eps>0$ be sufficiently small. \alglower first receives an item $x_1\da 1/(2+\alpha)$. If it rejects this item, its competitive ratio is unbounded. 

If it packs $x_1$, it receives a final item of size $1$. Since there are no further items, reserving this item is clearly suboptimal. If \alglower rejects this item, its competitive ratio is $1/x_1=2+\alpha=\ell$. If it removes $x_1$ to pack it, its gain is
\begin{equation}\label{eq:lowerboundresmediumfinal}
    1-\beta\cdot x_1\leq 1- (1+\alpha)\cdot x_1=1-\frac{1+\alpha}{2+\alpha}=\frac{1}{2+\alpha}\;,
\end{equation}
so its competitive ratio is $2+\alpha=\ell$.

If it reserves $x_1$, it receives an item $y_1=1-x_1+\eps$. If it rejects this item, its competitive ratio is 
\begin{equation}\label{eq:lowerboundresmediumy}
    \frac{y_1}{x_1-\alpha\cdot x_1}\geq \frac{1-x_1}{(1-\alpha)\cdot x_1}=\frac{1+\alpha}{1-\alpha}\leq 2+\alpha
\;,\end{equation}
because of the equivalence
\[\frac{1+\alpha}{1-\alpha}\leq 2+\alpha\iff \alpha^2+2\alpha-1\leq 0\iff (\sqrt{2}-1-\alpha)\cdot (\sqrt{2}-1+\alpha)\leq 0\]
and the fact that $\alpha>\sqrt{2}-1$.

If \alglower reserves $y_1$, the instance stops. \alglower's gain is at most 
\[
    y_1-\alpha\cdot (x_1+y_1)\xrightarrow{\eps\to 0} (1-\alpha)\cdot (1-x_1)-\alpha\cdot x_1\;.
\]
Now we can check that
\[
        (1-\alpha)\cdot (1-x_1)-\alpha\cdot x_1=(1-\alpha)\cdot \frac{1+\alpha}{2+\alpha}-\alpha\cdot x_1=(1-\alpha^2)\cdot x_1-\alpha\cdot x_1< x_1-\alpha\cdot x_1\;.
\]
Therefore, by \cref{eq:lowerboundresmediumy}, its competitive ratio is at least $\ell$.

If \alglower packs $y_1$, it receives a final item of size $1$. Since there are no further items, reserving it is clearly suboptimal. If \alglower removes $y_1$ to pack this item, its gain is at most
\[1-\alpha\cdot x_1-\beta\cdot y_1\leq 1-\beta\cdot y_1\leq 1-\beta\cdot x_1\leq \frac{1}{2+\alpha}\]
by \cref{eq:lowerboundresmediumfinal}.
Finally, if \alglower rejects the final item, by \cref{eq:lowerboundresmediumy} its competitive ratio is at least
\[\frac{1}{y_1-\alpha\cdot x_1}\geq\frac{y_1}{y_1-\alpha\cdot x_1}\geq \ell\;.\]
\end{proof}
\section{Putting Things Together}\label{sec:assembly}

We are now ready to assemble all matching bounds into a proof of our main result, which we restate here.

\theoremMain*

\begin{proof}
    For the symbiosis region, we have proven the upper bound of $1/f(\alpha,\beta)$ in \cref{thm:alg_achieves_f}, using \algsymb. Each of the three branches of $f$ (see \cref{lem:f_piecewise}) is matched by its own lower bound: for $\beta \leq 1 - \alpha$, we have $f = \frac{1}{2}$, and \cref{thm:globaltwo} matches $\frac{1}{f} = 2$; for $\beta \leq \alpha + \frac{2\alpha - 1}{\alpha(1 - \alpha)}$, we have $f = 1 - \alpha$, and \cref{thm:lowerboundsymbiosistrivial} matches $1/f = 1/(1 - \alpha)$; and for $\beta > \max \left( 1 - \alpha, \alpha + \frac{2\alpha - 1}{\alpha(1 - \alpha)} \right)$, we have $f = \fo$, and \cref{thm:lowerboundf} matches $1/f = 1/\fo$.
        
    For the dominance regions, the case $\alpha = 0$ is trivial: if reservation does not induce any cost, an algorithm can reserve every item and afterwards optimally solve the resulting offline instance, \ie, $\cknapresrem = \cknapres = 1$. The case $\alpha \geq 1$ is again trivial: replacing every reservation by a rejection never decreases the gain since a reserved item $x$ contributes at most $x - \alpha x \leq 0$, even if it is packed later; hence $\cknapresrem = \cknaprem$. The upper bounds for the reservation-only case were proven by \citefull{BBFHLR22} and those for the removal-only case by \citefull{HKM2014}; see also \cref{obs:single_mechanism_upper}.  It remains to match $\min(\cknapres(\alpha), \cknaprem(\beta))$ from below for $0 < \alpha < 1$. Wherever $\cknapres(\alpha) = 2$ or $\cknaprem(\beta) = 2$ is the minimum, \cref{thm:globaltwo} matches it. Above $\Usymb$, \cref{thm:lowerbound_reservation_small} matches $\cknapres(\alpha)$ for $\frac{1}{4} < \alpha \leq \sqrt{2} - 1$, and \cref{thm:lowerbound_reservation_medium} matches it for $\sqrt{2} - 1 < \alpha \leq \varphi - 1$. Finally, the remaining regions are covered by \cref{thm:lower-equib}. Above the equilibrium border, where $\alpha > \varphi - 1$, its bound equals $\cknapres(\alpha) = 1/(1 - \alpha)$, because $(\alpha, \beta) \in \Ures$ gives $\cknaprem(\beta) \geq \cknapres(\alpha)$. Below the equilibrium border, where $\beta > 1/2$ and $\alpha < 1$, we have $\beta \leq \frac{3\alpha - \alpha^2 - 1}{1 - \alpha}$ --- for $\alpha \geq \varphi - 1$ this is the border formula \cref{eq:explicit_equilibrium_border} itself, and for $\alpha < \varphi - 1$ the region lies below $\Usymb$, whose lower boundary is $\max \left( \frac{1}{2}, \frac{3\alpha - \alpha^2 - 1}{1 - \alpha} \right)$ by \cref{def:Usymb-f} --- and this is equivalent to $\cknaprem(\beta) \leq 1/(1 - \alpha)$, as computed in the proof of \cref{lem:f_strictly_better}; hence the bound of \cref{thm:lower-equib} equals $\cknaprem(\beta)$.
\end{proof}

Note that the results from \cref{thm:results} are complemented by the results proven by \citefull{BGLMR23}, who analyzed the case of $\beta=0$. Together, this gives a complete picture of the competitive ratio for all possible combinations of cost parameters for reservation and removal.

\section{Conclusion}\label{sec:conclusion}

We have determined the exact competitive ratio of the online proportional knapsack problem with paid reservation and paid removal for every pair of cost parameters $(\alpha, \beta)$, with matching upper and lower bounds throughout the plane. The characterization separates the plane into three regimes: two in which a single mechanism is already optimal and the equilibrium border between them is sharp, and a symbiosis region in which reservation and removal together beat each of the single mechanisms.

This full characterization of the proportional knapsack problem, with respect to the power of paid reservation, paid removal, and any combination thereof, still leaves room for interesting generalizations. For example, considering the general online knapsack problem, in which items carry values independent of their sizes, the two mechanisms have so far only been studied in isolation: free removal alone does not make the general problem competitive, which led \citefull{IZ2010} to combine removal with resource augmentation. \citefull{BG2026} recently showed that reservation admits a tight competitive ratio of $2$ for every size-proportional cost factor, while under value-proportional costs no competitive algorithm exists once the cost factor exceeds $1/2$. To the best of our knowledge, the combination of the two mechanisms in the general setting is unexplored; in light of the symbiosis observed here, it is natural to ask whether paid removal can push the general problem below the ratio of $2$, or extend its competitive range under value-proportional reservation costs. A second direction is randomization. For the proportional knapsack without reservation or removal, a single random bit already reduces the competitive ratio from unbounded to $2$, and additional random bits do not help as shown by \citefull{BKKR2014}; whether randomization can beat the deterministic optimum established here for some, or even all, cost pairs $(\alpha, \beta)$ remains open.

\subsection*{Acknowledgments}

This work was partially supported by SNF Grant 10009262.

\subsection*{Declarations}

LLM support was used for proofreading the paper and polishing the presentation.  The authors assume responsibility for all content.

\small

\bibliographystyle{plainnat}
\bibliography{references}

\end{document}